\theoremstyle{plain}
\newtheorem{thm}{\protect\theoremname}
\theoremstyle{plain}
\newtheorem{prop}[thm]{Proposition}
\newtheorem{lem}[thm]{Lemma}
\newtheorem{defin}[thm]{\protect\definname}
\newtheorem{observ}[thm]{\protect\observname}
\newtheorem{corol}[thm]{\protect\corolname}
\newtheorem{lemma}[thm]{Lemma}
\theoremstyle{definition}
\theoremstyle{remark}
\newtheorem{remark}[thm]{Remark}
\newcommand{\CC}{\mathbb C}
\renewcommand{\phi}{\varphi}
\newcommand{\ket}[1]{|#1\rangle}
\newcommand\nix[1]{{}}
\providecommand{\theoremname}{Theorem}
\providecommand{\definname}{Definition}
\providecommand{\observname}{Observation}
\providecommand{\corolname}{Corollary}
\providecommand{\examplename}{Example}
\providecommand{\problemname}{Problem}
\begin{document}

\title{Factoring with Qutrits: Shor's Algorithm on Ternary and Metaplectic Quantum Architectures}

\author{Alex Bocharov$^*$}
\author{Martin Roetteler$^*$}
\author{Krysta M.~Svore$^*$}

\affiliation{
$^*$Quantum Architectures and Computation Group, Station Q, Microsoft Research, Redmond, WA (USA) \\}

\begin{abstract}

We determine the cost of performing Shor's algorithm for integer factorization on a ternary quantum computer, using two natural models of universal fault-tolerant computing:
 %on ternary quantum systems:
 (i) a model based on magic state distillation that assumes the availability of the ternary Clifford gates, projective measurements, classical control as its natural instrumentation set; (ii) a model based
 on a metaplectic topological quantum computer (MTQC). A natural choice to implement Shor's algorithm on a ternary quantum computer is to translate the entire arithmetic into a ternary form. However, it is
 also possible to emulate the standard binary version of the algorithm by encoding each qubit in a three-level system. We compare the two approaches and analyze the complexity of implementing Shor's period
 finding function in the two models.
We also highlight the fact that the cost of achieving universality through magic states in MTQC architecture is asymptotically lower than in generic ternary case.
\end{abstract}

\maketitle

\section{Introduction}

Shor's quantum algorithm for integer factorization \cite{Shor} is a striking case of superpolynomial speed-up promised by a quantum computer over the best-known classical algorithms.
%It had arguably the greatest influence on academic research and engineering efforts in the field of quantum computation and quantum information.
Since Shor's original paper, many explicit circuit constructions over qubits for performing the algorithm have been developed and analyzed.
This includes automated synthesis of the underlying quantum circuits for the binary case (see the following and references therein:
\cite{BeckmanEtAl,Beauregard,CleveWatrous,Markov2012,Markov2013,MeterItoh,TakaKuni,VedralEtAl,Zalka}).

It has been previously noted that arithmetic encoding systems beyond binary may yield more natural embeddings for some computations and potentially lead to more efficient solutions.
(A brief history note on this line of thought can be found in section 4.1 of \cite{Knuth}.)
Experimental implementation of computation with ternary logic, for example with Josephson junctions, dates back to 1989 \cite{morisue1989jctl,morisue1998memory}. More recently, multi-valued logic has been
proposed for linear ion traps \cite{Muthu2000mvl}, cold atoms \cite{smith2013cs}, and entangled photons \cite{malik2016multiPhoton}. %It remains to be seen, at what scale it would be possible to balance out
%quantum universality and fault-tolerance in these and other similar architectures.
In topological quantum computing it has been shown that metaplectic non-Abelian anyons \cite{CuiWang} naturally align with ternary, and not binary, logic.
These anyons offer a natively topologically protected universal set of quantum gates (see, for example, \cite{NayakFreedman}), in turn requiring little to no quantum error correction.

It is also interesting to note that qutrit-based computers are in certain sense space-optimal among all the qudit-based computers with varying local quantum dimension. Thus in
\cite{GreentreeEtAl} an argument is made that, as the dimension of the constituent qudits increases, the cost of maintaining a qudit in fully entangled state also increases and the optimum cost
per Hilbert dimension is attained at local dimension of $\lceil e \rceil = 3$.

Transferring the wealth of multi-qubit circuits to multi-qutrit framework is not straightforward. Some of the binary primitives, for example the binary Hadamard gate and the two-qubit $\mbox{CNOT}$ gate, do
not remain Clifford operations in the ternary case.  Therefore, they cannot be emulated by ternary Clifford circuits. We resolve this complication by developing efficient non-Clifford circuits for a
\emph{generic} ternary quantum computer first.  We then extend the solution to the Metaplectic Topological Quantum Computer (MTQC) platform \cite{CuiWang}, which further reduces the cost of implementation.

A generic ternary framework that supports the full ternary Clifford group, measurement, and classical control  \cite{CampbellEtAl}, also supports a distillation protocol that prepares magic states for the
$P_9$ gate:
\begin{equation} \label{eq:def:P:9}
P_9 = \omega_9^{-1}\,|0\rangle \langle 0|+|1\rangle \langle 1|+\omega_9\,|2\rangle \langle 2|, \, \omega_9 = e^{2 \pi \, i/9}.
\end{equation}
The Clifford+$P_9$ basis is universal for quantum computation and serves a similar functional role in ternary logic as the Clifford+$T$ basis in binary logic (see \cite{CampbellEtAl,howard2012qudit} for the
more general qudit context).

We show in more detail further that the primitive $R$ gate available in MTQC is more powerful in practice than the $P_9$ gate.

Arguably, a natural choice to implement Shor's algorithm on a ternary quantum computer is to translate the entire arithmetic into ternary form. We do so by using ternary arithmetic tools developed in
\cite{BCRS} (with some practical improvements). We also explore alternative approach: emulation of binary version of Shor's period finding algorithm on ternary processor. Emulation has notable practical advantages in some contexts. For example, as shown in section \ref{subsec:ripple:carry}, using a binary ripple-carry additive shift consumes fewer clean $P_9$
magic states than the corresponding  ternary ripple-carry additive shift (cf. table \ref{table:binary:vs:ternary}).

We also show that on a metaplectic ternary computer the magic state coprocessor is asymptotically smaller than a magic state distillation coprocessor, such as the one developed in
\cite{CampbellEtAl} for the generic ternary quantum computer. Another benefit of the MTQC is the ability to approximate desired non-Clifford reflections directly to the required fidelity, thus eliminating the need for magic states. The tradeoff is an increase in the depth of the emulated Shor's period finding circuit by a logarithmic factor, which is tolerable for the majority of instances.

The cost benefits of using exotic non-Abelian anyons for integer factorization has been previously noted, for example in \cite{BarabanEtAl}, where hypothetical
Fibonacci anyons were used. It is worthwhile noting that neither binary nor ternary logic is native to Fibonacci anyons, so the $\mbox{NOT}$, $\mbox{CNOT}$ or Toffoli gates are
much harder to emulate there than on a hypothetical metaplectic anyon computer.

The paper is organized as follows.
In Section \ref{sec:backgroud} we state the definitions and equations pertaining the two ternary architectures used, and gaive a quick overview of the Shor's period finding function.
In Section \ref{sec:Emulation} we perform a detailed analysis of reversible classical circuits for modular exponentiation. We compare two designs of the modular
exponentiation arithmetic. One is emulation of binary encoding of integers combined with ternary arithmetic gates. The other uses ternary encoding of integers with ternary gates.

In Section \ref{sec:reflections} we develop circuits for the  key arithmetic gates based on designs from \cite{BCRS} with further optimizations.

In Section \ref{sec:specific:resource:counts} we compare the resource cost of performing modular exponentiation. An interesting feature of
ternary arithmetic circuits is the fact that the denser and more compact ternary encoding of integers does not necessarily lead to more resource-efficient period finding solutions compared to binary encoding.
The latter appears to be better suited in practice for low-width arithmetic circuit designs (hence, e.g., for smaller quantum computers).

We also compare the magic state preparation requirements. We highlight the huge advantage of the metaplectic topological computer. Magic state
preparation requires width that is linear in $\log(n)$ on an MTQC, whereas it requires width in $O(\log^3(n))$ on a generic ternary quantum computer. \footnote{It requires width in
%change MR
$O(\log^{\gamma}(n))$ in the binary Clifford+T architecture, where $\gamma$ can vary between $\log_2(3)$ and $\log_3(15)$ depending on practically applicable distillation protocol.}

All the circuit designs and resource counts are done under assumption of fully-connected multi-qutrit network. Factorization circuitry optimized for sparsely connected networks, such as nearest-neighbor for example, is undeniably interesting (cf. \cite{PhamSvore}) but we had to set this topic aside in the scope of this paper.

\section{Background and Notation} \label{sec:backgroud}

%``Dixitque Deus fiat reflexio.'' \cite{Synthesis}

A common assumption for a multi-qudit quantum computer architecture is the availability of quantum gates generating the full multi-qudit Clifford group (see \cite{howard2012qudit},\cite{CampbellEtAl}). In this
section we describe a \emph{generic} ternary computer, where the full ternary Clifford group is postulated; we also describe the more specific Metaplectic Topological Quantum Computer (MTQC) where the required
Clifford gates are explicitly implemented by braiding non-Abelian anyons \cite{CuiHongWang, CuiWang}. For purposes of this paper, each braid corresponds to a unitary operation on qutrits. Braids are considered relatively inexpensive and tolerant to local noise. Universal quantum computation on MTQC is achieved by adding a single-qutrit phase flip gate
($\mbox{Flip}$ in \cite{CuiWang}, $R_{|2\rangle}$ in \cite{BCKW} and our Subsection \ref{subsec:metaplectic:basis}). In contrast with the binary phase flip $Z$, which is a Pauli gate, the ternary phase flip is
not only non-Clifford, but it does not belong to \emph{any} level of Clifford hierarchy (see, for example, \cite{BCRS}). Intuitively one should expect this gate to be very powerful. Level $\mathcal{C}_3$ of the ternary Clifford hierarchy is emulated quite efficiently on MTQC architecture, while the converse is quite expensive: implementing phase flip in terms of $\mathcal{C}_3$ requires
several ancillas and a number of repeat-until-success circuits.

\subsection{Ternary Clifford group}
Let $\{|0\rangle,|1\rangle,|2\rangle\}$ be the standard computational basis for a qutrit.
Let $\omega_3 = e^{2 \pi \, i/3}$ be the third primitive root of unity.
The ternary \emph{Pauli} group is generated by the \emph{increment} gate
\begin{equation} \label{eq:INC}
\mbox{INC}= |1\rangle \langle 0|+|2\rangle \langle 1|+|0\rangle \langle 2|,
\end{equation}
\noindent and the ternary $Z$ gate
\begin{equation} \label{eq:Z}
Z= |0\rangle \langle 0|+\omega_3 |1\rangle \langle 1|+\omega_3^2 |2\rangle \langle 2|.
\end{equation}

The ternary \emph{Clifford} group stabilizes the Pauli group is obtained by adding the ternary Hadamard gate $H$,
\begin{equation} \label{eq:H}
H = \frac{1}{\sqrt{3}} \sum \omega_3^{j\, k} |j\rangle \langle k|,
\end{equation}
\noindent the $Q$ gate
\begin{equation} \label{eq:Q}
Q= |0\rangle \langle 0|+|1\rangle \langle 1|+\omega_3 |2\rangle \langle 2|,
\end{equation}
\noindent and the two-qutrit $\mbox{SUM}$ gate,
\begin{equation} \label{eq:SUM}
\mbox{SUM} |j,k\rangle = |j,j+k \; {\rm mod} \; 3\rangle, j,k \in \{0,1,2\}
\end{equation}
\noindent to the set of generators of the Pauli group.

Compared to the binary Clifford group, $H$ is the ternary counterpart of the binary Hadamard gate, $Q$ is the counterpart of the phase gate $S$, and $\mbox{SUM}$ is an analog of the $\mbox{CNOT}$ (although,
intuitively it is a ``weaker'' entangler than $\mbox{CNOT}$, as described below).

For any $n$, ternary Clifford gates and their various tensor products generate a finite subgroup of $U(3^n)$; therefore they are not sufficient for universal quantum computation.
We consider and compare two methods of building up quantum universality: by implementing the $P_9$ gate as per Eq. (\ref{eq:def:P:9}) and by expanding into the metaplectic basis (Subsection
\ref{subsec:metaplectic:basis}).
Given enough ancillae, these two bases are effectively and efficiently equivalent in principle (see Appendix \ref{app:sec:R:2}), and the costs in ancillae create practical tradeoffs depending on the given
application.

\subsection{Binary and ternary control}

Given an $n$-qutrit unitary operator $U$ there are different ways of expanding it into an $(n+1)$-qutrit unitary using the additional qutrit as ``control''.
Let $|c\rangle$ be a state of the control qutrit and $|t\rangle$ be a state of the $n$-qutrit register. We define
\[
C_{\ell}(U)|c\rangle |t\rangle = |c\rangle \otimes (U^{\delta_{c,\ell}}) |t\rangle,\, \ell \in \{0,1,2\},
\]
\noindent wherein $\delta$ denotes the Kronecker delta symbol.
We refer to this operator as a \emph{binary-controlled} unitary $U$ and denote it in circuit diagrams as

\begin{figure}[h!]
\centering
\begin{tikzpicture}[scale=1.000000,x=1pt,y=1pt]
\filldraw[color=white] (0.000000, -7.500000) rectangle (24.000000, 22.500000);
% Drawing wires
% Line 1: a W
\draw[color=black] (0.000000,15.000000) -- (24.000000,15.000000);
% Line 2: b W
\draw[color=black] (0.000000,0.000000) -- (24.000000,0.000000);
% Done with wires; drawing gates
% Line 3: b P U a
\draw (12.000000,15.000000) -- (12.000000,0.000000);
\begin{scope}
\draw[fill=white] (12.000000, 0.000000) circle(6.000000pt);
\clip (12.000000, 0.000000) circle(6.000000pt);
\draw (12.000000, 0.000000) node {U};
\end{scope}
\filldraw (12.000000, 15.000000) circle(1.500000pt);
\draw (16.000000, 19.000000) node {$\ell$};
% Done with gates; drawing ending labels
% Done with ending labels; drawing cut lines and comments
% Done with comments
\end{tikzpicture}
\;\raisebox{2.3mm}{.}
\end{figure}

We omit the label $\ell$ when $\ell=1$. We also define the \emph{ternary-controlled} extension of $U$ by
\[
\Lambda(U)|c\rangle |t\rangle = |c\rangle \otimes (U^c\, |t\rangle)
\]
\noindent and denote it in circuit diagrams as

\begin{figure}[h!]
\centering
\begin{tikzpicture}[scale=1.000000,x=1pt,y=1pt]
\filldraw[color=white] (0.000000, -7.500000) rectangle (24.000000, 22.500000);
% Drawing wires
% Line 1: a W
\draw[color=black] (0.000000,15.000000) -- (24.000000,15.000000);
% Line 2: b W
\draw[color=black] (0.000000,0.000000) -- (24.000000,0.000000);
% Done with wires; drawing gates
% Line 3: b P U a
\draw (12.000000,15.000000) -- (12.000000,0.000000);
\begin{scope}
\draw[fill=white] (12.000000, 0.000000) circle(6.000000pt);
\clip (12.000000, 0.000000) circle(6.000000pt);
\draw (12.000000, 0.000000) node {U};
\end{scope}
\draw[fill=white] (12.000000, 15.000000) circle(1.500000pt);
%\draw (16.000000, 19.000000) node {$\ell$};
% Done with gates; drawing ending labels
% Done with ending labels; drawing cut lines and comments
% Done with comments
\end{tikzpicture}
\;\raisebox{2.3mm}{.}
\end{figure}

It is paramount to keep in mind that
\[
\mbox{SUM} = \Lambda(\mbox{INC})
\]
\noindent (see equations (\ref{eq:INC}) and (\ref{eq:SUM})).
Another useful observation is that for any unitary $U$ we have that $\Lambda(U) = C_1(U) \, (C_2(U))^2$.

More detail can be found in \cite{BCRS}.

\subsection{The $P_9$ gate and its corresponding magic state} \label{subsec:P9:gate}

It is easy to see that the $P_9$ gate in Eq. (\ref{eq:def:P:9}) is not a Clifford gate, e.g., it does not stabilize the ternary Pauli group. However, it can be realized by a certain deterministic measurement-assisted
circuit given a copy of the \emph{magic} state
\begin{equation} \label{eq:def:mu}
\mu = \omega_9^{-1} |0\rangle + |1\rangle+\omega_9 |2\rangle, \, \omega_9 = e^{2 \pi \, i/9}.
\end{equation}

An appropriate deterministic magic state injection circuit, as proposed in Ref.~\cite{CampbellEtAl}, is shown in Figure \ref{fig:mu:state:injection}.
\begin{figure}[h!]
\begin{tikzpicture}[scale=2.000000,x=1pt,y=1pt]
\filldraw[color=white] (0.000000, -7.500000) rectangle (72.000000, 22.500000);
% Drawing wires
% Line 2: a W |\psi\rangle
\draw[color=black] (0.000000,15.000000) -- (36.000000,15.000000);
\draw[color=black] (36.000000,14.500000) -- (60.000000,14.500000);
\draw[color=black] (36.000000,15.500000) -- (60.000000,15.500000);
\draw[color=black] (0.000000,15.000000) node[left] {$|\mu\rangle$};
% Line 3: b W |\mbox{input}\rangle
\draw[color=black] (0.000000,0.000000) -- (72.000000,0.000000);
\draw[color=black] (0.000000,0.000000) node[left] {$|\mbox{input}\rangle$};
% Done with wires; drawing gates
% Line 5: b P $\mbox{INC}$ a
\draw (12.000000,15.000000) -- (12.000000,0.000000);
\begin{scope}
%\draw[fill=white] (12.000000, 0.000000) circle(6.000000pt);
%\clip (12.000000, 0.000000) circle(6.000000pt);
%\draw (12.000000, 0.000000) node {$\mbox{INC}^{\dagger}$};
%\draw[fill=white] (12.000000, 15.000000) circle(1.500000pt);
\draw[fill=white] (12.000000, 15.000000) circle(6.000000pt);
\clip (12.000000, 15.000000) circle(6.000000pt);
\draw (12.000000, 15.000000) node {$\mbox{INC}^{\dagger}$};
\end{scope}
\draw[fill=white] (12.000000, 0.000000) circle(1.500000pt);
% Line 6: a M
\draw[fill=white] (30.000000, 9.000000) rectangle (42.000000, 21.000000);
\draw[very thin] (36.000000, 15.600000) arc (90:150:6.000000pt);
\draw[very thin] (36.000000, 15.600000) arc (90:30:6.000000pt);
\draw[->,>=stealth] (36.000000, 9.600000) -- +(80:10.392305pt);
% Line 7: b Z a:owire
\draw (59.500000,15.000000) -- (59.500000,0.000000);
\draw (60.500000,15.000000) -- (60.500000,0.000000);
\begin{scope}
\draw[fill=white] (60.000000, -0.000000) +(-45.000000:8.485281pt and 8.485281pt) -- +(45.000000:8.485281pt and 8.485281pt) -- +(135.000000:8.485281pt and 8.485281pt) -- +(225.000000:8.485281pt and 8.485281pt)
-- cycle;
\clip (60.000000, -0.000000) +(-45.000000:8.485281pt and 8.485281pt) -- +(45.000000:8.485281pt and 8.485281pt) -- +(135.000000:8.485281pt and 8.485281pt) -- +(225.000000:8.485281pt and 8.485281pt) -- cycle;
\draw (60.000000, -0.000000) node {$C_{\mu,m}$};
\end{scope}
\filldraw (60.000000, 15.000000) circle(1.500000pt);
% Done with gates; drawing ending labels
\draw[color=black] (72.000000,0.000000) node[right] {$P_9 |\mbox{input}\rangle$};
% Done with ending labels; drawing cut lines and comments
% Done with comments
\end{tikzpicture}
\caption{\label{fig:mu:state:injection} Exact representation of the $P_9$ gate by state injection. $C_{\mu,m}$ stands for a certain precompiled ternary Clifford gate, classically predicated by the measurement
result $m$. }
\end{figure}
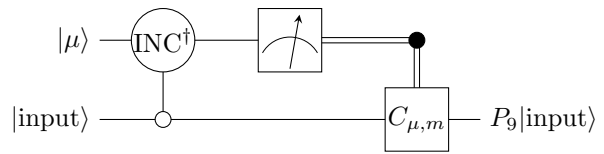
For completeness,  $C_{\mu,m}=(P_9 \, \mbox{INC} \, P_9^{\dagger})^{-m}\, \mbox{INC}^m$. Note that $P_9 \, \mbox{INC} \, P_9^{\dagger}$ is a Clifford gate, since $P_9$ is at level 3 of the ternary Clifford
hierarchy (cf.~\cite{BCRS}).

Such magic state naturally exists in any multi-qudit framework with qudits of prime dimension \cite{CampbellEtAl}. When the framework supports the full multi-qudit Clifford group, projective measurements and
classical control, then it also supports stabilizer protocols for magic state distillation based on generalized Reed-Muller codes. In particular, a  multi-qutrit framework supports a distillation protocol that
requires
$O(\log^3(1/\delta))$ raw magic states of low fixed fidelity in order to distill a copy of the magic state $\mu$ at fidelity $1-\delta$.
The distillation protocol is iterative and converges to that fidelity in $O(\log(\log(1/\delta)))$ iterations.
The protocol performance is analogous to the magic state distillation protocol for the $T$ gate in the Clifford+T framework \cite{BravyiKitaev}.

 One architectural design is to split the actual computation into ``online'' and ``offline'' components where the main part of quantum processor runs the target quantum circuit whereas the (potentially rather
 large) ``offline'' coprocessor distills copies of a magic state that are subsequently injected into the main circuit by a deterministic widget of constant depth.
Discussing the details of the distillation protocol for the magic state $\mu$ is beyond the scope of this paper and we refer the reader to Ref. \cite{CampbellEtAl}.

\subsection{Metaplectic quantum basis} \label{subsec:metaplectic:basis}

The ternary \emph{metaplectic} quantum basis is obtained by adding the \emph{single qutrit axial reflection} gate

\begin{equation} \label{eq:def:R:2}
R_{|2\rangle} = |0\rangle \langle 0| + |1\rangle \langle 1| - |2\rangle \langle 2|
\end{equation}

\noindent to the ternary Clifford group.
It is easy to see that $R_{|2\rangle}$ is a non-Clifford gate and that Clifford+$R_{|2\rangle}$ framework is universal for quantum computation.

In Ref.~\cite{CuiWang} this framework has been realized with certain weakly integral non-Abelian anyons called \emph{metaplectic anyons} which explains our use of the ``metaplectic'' epithet in the name of
this universal basis.
In Ref.~\cite{CuiWang}, $R_{|2\rangle}$ is produced by injection of the magic state
\begin{equation} \label{eq:def:psi}
|\psi\rangle = |0\rangle - |1\rangle +|2\rangle.
\end{equation}
The injection circuit is coherent probabilistic, succeeds in three iterations on average and consumes three copies of the magic state $|\psi\rangle$ on average.

For completeness we present the logic of the injection circuit on Figure \ref{fig:R2:RUS}. Each directed arrow in the circuit is labeled with the result of standard measurement of the first qutrit in the state
$\mbox{SUM}_{2,1} \, (|\psi\rangle \otimes |input\rangle)$. On $m=0$ the sign of the third component of the input is flipped; on $m=1,2$ the sign of the first or second component respectively is flipped.

%%%%%%%%%%%%%%%%%%%%%%%%%%%%%%%%%%%%%%%%%%%%%%%%%%%%%%%%%%%%%%%%%
%
%  Figure: Markov chain corresponding to R2 gate
%
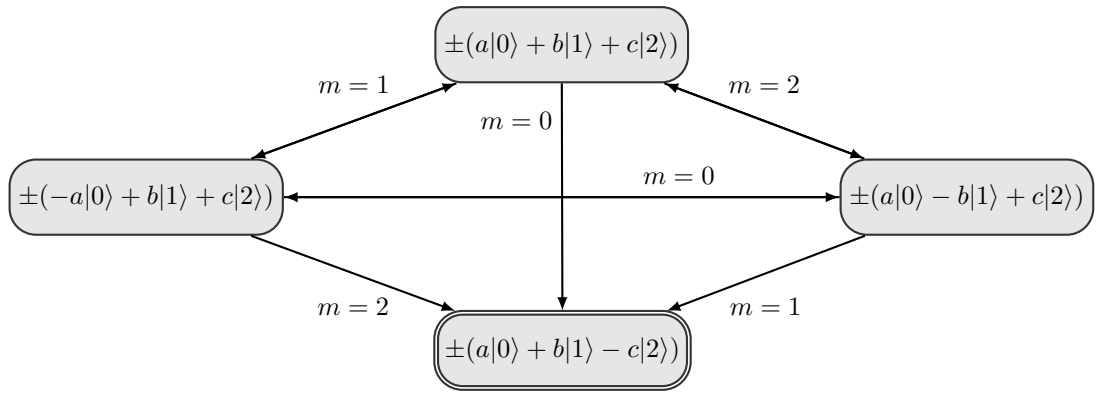
\begin{figure*}

\tikzstyle{state}=[rectangle,thick,rounded corners=10,minimum size=1cm,draw=black!80,fill=black!10]

\begin{tikzpicture}[>=latex,text height=1.5ex,text depth=0.25ex]
   % Nodes of the Markov chain
   \node (i_1) [state] {$\pm (a \ket{0} + b \ket{1} + c \ket{2})$};
   \node (i_2) [state,below left=10mm and 20mm of i_1] {$\pm (-a \ket{0} + b \ket{1} + c \ket{2})$};
     \node (i_3) [state,below right=10mm and 20mm of i_1] {$\pm (a \ket{0} - b \ket{1} + c \ket{2})$};
     \node (i_4) [state,accepting,below right=10mm and 20mm of i_2] {$\pm (a \ket{0} + b \ket{1} - c \ket{2})$};

   % Connecting the nodes via edges
  \path[->]
	(i_1) edge[thick] node [above=2mm] {$m=1$} (i_2)
	(i_2) edge[thick] node [above] {} (i_1)
	(i_1) edge[thick] node [above=2mm] {$m=2$} (i_3)
	(i_3) edge[thick] node [above] {} (i_1)
	(i_1) edge[thick] node [left]  {\raisebox{1cm}{$m=0$}} (i_4)
	(i_2) edge[thick] node [below=2mm] {$m=2$} (i_4)
	(i_3) edge[thick] node [below=2mm] {$m=1$} (i_4)
	(i_2) edge[thick] node [below] {} (i_3)
	(i_3) edge[thick] node [above] {\hspace*{3cm} $m=0$} (i_2);
\end{tikzpicture}
\caption{\label{fig:R2:RUS} Markov chain for repeat-until-success implementation of the injection of the $R_{|2\rangle}$ gate \cite{CuiWang}. Starting point is a general input $a\ket{0}+b\ket{1}+c\ket{2}$,
where $a,b,c \in \CC$. Arrows indicate transitions between single-qutrit states. Each arrow represent a single trial including measurement and consumption of the resource state $\ket{\psi}$, where each of the
transitions is labeled with the measurement result. The absorbing state corresponds to successful implementation of the $R_{|2\rangle}$ gate and is denoted by double borders.}
\end{figure*}

In the original anyonic framework the $|\psi\rangle$ state is produced by a relatively inexpensive protocol that uses topological measurement and consequent intra-qutrit projection (see \cite{CuiWang}, Lemma
5). This protocol requires only three qutrits and produces an exact copy of $|\psi\rangle$ in $9/4$ trials on average. This is much better than any state distillation method, especially because it produces a
copy of $|\psi\rangle$ with fidelity $1$.

In \cite{BCKW} we have developed effective compilation methods to compile efficient circuits in the metaplectic basis Clifford+$R_{|2\rangle}$.
In particular, given an arbitrary two-level Householder reflection $r$ and a desired target precision $\varepsilon$, then $r$ is effectively approximated by a metaplectic circuit of $R$-count at most
$8\,\log_3(1/\varepsilon)+O(\log(\log(1/\varepsilon)))$, where $R$-count is the number of occurrences of non-Clifford axial reflections in the circuit.
This allows us to approximate the $\mbox{CNOT}$ and Toffoli gates very tightly and at low cost over the metaplectic basis (see Section \ref{subsec:reflection:metaplectic}).
Moreover if we wanted constant-depth high-fidelity widgets for $\mbox{CNOT}$ and Toffoli we can do so by emulating, rather than distilling the magic state $|\mu\rangle$ of (\ref{eq:def:mu}) by a metaplectic
circuit and thus obtain a high fidelity emulation of the $P_9$ gate at constant online depth (see Section \ref{subsec:classical:with:P9}).

As we show in Appendix \ref{app:sec:R:2}, the converse also works. With available ancillas and enough reversible classical gates we can prepare the requisite magic state $|\psi\rangle$ exactly
 %the $R_{|2\rangle}$ gate
on a generic ternary computer. The particular method in the appendix is probabilistic circuit for the magic state $|\psi\rangle$ of (\ref{eq:def:psi}) using the classical non-Clifford gate
$C_2(\mbox{INC})$. Our current method for the latter gate is to implement it as a ancilla-free circuit with three $P_9$ gates.

\subsection{Top-level view of Shor's integer factorization algorithm} \label{subsec:shor:top:level}

The polynomial-time algorithm for integer factorization originally developed in Ref. \cite{Shor} is a hybrid algorithm that combines a quantum circuit with classical preprocessing and post-processing.
In general, the task of factoring an integer can be efficiently reduced classically to a set of hard cases. A hard case of the factorization problem comprises factoring a large integer $N$ that is odd, square-free
and composite.

 Let $a$ be a randomly picked integer that is relatively prime with $N$. By Euler's theorem, $a^{\phi(N)} = 1 \; {\rm mod} \; N$, where $\phi$ is the Euler's totient function,  and thus the modular
 exponentiation function $e_a: x \mapsto a^x \; {\rm mod} \; N$ is periodic with period $\phi(N) < N$.
Let now $0 < r < N$ be a period of the $e_a(x)$ function ($e_a(x+r)=e_a(x), \forall x$) and suppose, additionally that $r$ is even and $a^{r/2} \neq -1 \; {\rm mod} \; N$. Then the $\mbox{gcd}(a^{r/2}-1,N)$
must be a non-trivial divisor of $N$. The greatest common divisor is computed efficiently by classical means and it can be shown that the probability of satisfying the conditions $r = 0 \; {\rm mod} \; 2$ and
$a^{r/2} \neq -1 \; {\rm mod} \; N$ is rather high when $a$ is picked at random.
Therefore in Shor's algorithm a quantum circuit is only used for finding the small period $r$ of $e_a(x)$ once an appropriate $a$ has been randomly picked.

One quantum circuit to solve for $r$ consists of three stages:
\begin{enumerate}
\item Prepare quantum state proportional to the following superposition:
\begin{equation} \label{eq:Shor:superposition}
\sum_{k = 0}^{N^2} |k\rangle |a^k \; {\rm mod} \; N\rangle.
\end{equation}
\item Perform in-place quantum Fourier transform of the first register.
\item Measure the first register.
\end{enumerate}

The process is repeated until a classical integer state $j$ obtained as the result of measurement in step 3 enables recovery of a small period $r$ by efficient classical post-processing.
%We iterate steps 1) -- 3) until a useful $j$ is found and hence a value of $r$ is recovered.

Shor has shown  \cite{Shor} that the probability of successful recovery of $r$ in one of the iterations is in
$\Omega(1/\log(\log(N)))$. Therefore we will succeed ``almost certainly'' in finding a desired small period $r$ in $O(\log(\log(N)))$ trials.

Given the known efficiency of the quantum Fourier transform, most of the quantum complexity of this solution falls in 1, where the state (\ref{eq:Shor:superposition}) is prepared. Specific quantum circuits for
preparing this superposition have been proposed (cf. \cite{BeckmanEtAl,Beauregard,CleveWatrous,Markov2012,Markov2013,MeterItoh,TakaKuni,VedralEtAl,Zalka,Zalka2006}).

In the context of this paper, distinguish between two types of period-finding circuits. One type, as in Ref.~\cite{Beauregard}, is width-optimizing and uses approximate
arithmetic. These circuits interleave multiple quantum Fourier transform and inverse Fourier transform blocks into modular arithmetic circuits, which in practice leads to significant
depth overhead.  We forego the analysis of circuits of this type for the lack of space leaving such analysis for future research.

The second type are framed as exact reversible arithmetic circuits. Their efficient ternary emulation amounts to efficient emulation of $\mbox{CNOT}$ and Toffoli gates, possibly
after some peephole optimization.
We discuss two typical circuits of this kind in detail in Section \ref{sec:Emulation}, and briefly touch upon a number of alternatives in Appendix \ref{app:sec:alternative:circuits}.

It is important to note that, with a couple of exceptions
%(TODO),
the multi-qubit designs for Shor state preparation assumed ideal $\mbox{CNOT}$ and Toffoli gates.
However, in Clifford+T framework, for example, the Toffoli gate is often only as ideal as the $T$ gate.
The question of the required fidelity of $\mbox{CNOT}$ and Toffoli gates for the quantum period finding loop to work is an important one.

If the superposition (\ref{eq:Shor:superposition}) is prepared imperfectly, with fidelity $1-\varepsilon$ for some $\varepsilon$ in $o(1/\sqrt{\log(\log(N)})$,
then the probability of obtaining one of the ``useful'' measurements will be asymptotically the same as with the ideal superposition, i.e., in $\Omega(1/\log(\log(N)))$.  (For completeness, we spell out
the argument in Appendix \ref{app:sec:fidelity}.) Therefore, if $d$ is the depth of the corresponding quantum circuit preparing the state, then the bound on the required precision of the individual gates in
the circuit may be in
$o(1/(d\,\sqrt{\log(\log(N)}))$ in the context of Shor's algorithm.
%, which is a pretty sparing bound as we will see.

%TODO:MUST THREE-WAY COMPARISON NOW
In the rest of the paper we explore ternary emulations of binary period-finding circuits and compare them to truly ternary period-finding circuits with ternary encoding of integers. We demonstrate
that the fidelity and non-Clifford cost of such ternary circuits are reduced to those of the $C(\mbox{INC})$ gates. We also demonstrate that efficient emulation of binary period finding requires mostly binary
Toffoli gates with some use of $C(\mbox{INC})$.

\section{Multi-Qutrit and Multi-Qubit Arithmetic on Generic Ternary Quantum Computer} \label{sec:Emulation}

We explore two options for cost-efficient integer arithmetic over the ternary Clifford+$P_9$ basis: (a) by emulating arithmetic on binary-encoded data; and, (b) by performing arithmetic on
ternary-encoded data, based on tools developed in \cite{BCRS}.

Circuits for reversible ternary adders have been explored earlier. (See, for example, \cite{MillerEtAl,SatohEtAl,KhanPerkowski}). Since this field has been
in early stages so far, there is a lot of divergence in terminology: however in \cite{MillerEtAl,SatohEtAl,KhanPerkowski} the key non-Clifford tool
for the circuitry is an equivalent of the $C_f(\mbox{INC})$ gate, in our notation. As pointed out in \cite{BCRS}, our use of this tool is more efficient,
 mainly due to
the design of ``generalized'' carry gates and other reflection-based operations.

Our ternary circuits for emulated binary encoding of integers are new, as far as we know.

The emulated binary and genuine ternary versions of integer arithmetic have different practical bottlenecks, although they are asymptotically equivalent in terms of cost. With the ripple-carry
adders,  the emulated binary encoding wins, in practice, in both width and depth over the ternary encoding, whereas with carry-lookahead adders the ternary encoding achieves smaller width but
yields no notable non-Clifford depth advantage in the context of modular exponentiation.

To give the study a mathematical form, let us agree to take into account only non-Clifford gates used with either encoding and let us agree
to count a stack of non-Clifford gates performed in parallel in one time slice as a single \emph{unit of non-Clifford depth}.
We call the number of units of non-Clifford depth in a circuit the \emph{non-Clifford depth} of the circuit.

Throughout the rest of the paper we use the following

\begin{defin}
For integer $n>0$ let $|j\rangle$, $|k\rangle$ be two different standard basis vectors in the $n$-qudit Hilbert space. We call the classical gate
\begin{equation} \label{eq:def:tau}
\tau_{|j\rangle,|k\rangle} = I^{\otimes n} -  |j\rangle \langle j| -  |k\rangle \langle k| +  |j\rangle \langle k| + |k\rangle \langle j|
\end{equation}
a \emph{two-level axial reflection} in $n$ qudits.
\end{defin}

As a motivation for this term, note that $\tau_{|j\rangle,|k\rangle}$ can be rewritten as the two-level Householder reflection
\[
 I^{\otimes n} -  2\, |u\rangle \langle u|,  |u\rangle = (|j\rangle - |k\rangle)/\sqrt{2}.
\]
Clearly, in binary encoding, the $\mbox{CNOT}$, the Toffoli and any variably controlled Toffoli gate is a two-level axial reflection in the corresponding number of dimensions.

\subsection{Ternary circuit for binary ripple-carry additive shift} \label{subsec:ripple:carry}

We discuss emulating an additive shift circuit improving on a quantum ripple-carry adder from \cite{Cuccaro}. %The low-level resource counts that are used in subsequent sections
are cast in {\bf bold} font below.

\begin{comment}
That adder has simple structure and requires only two ancillas. In the context of Shor's algorithm it would be very practical to use the adder when factorizing medium-size numbers. However, since for $n$-bit
numbers the depth of the circuit is $O(n)$ , it is asymptotically inferior to lookahead additive shift described in the next subsection and may become unusable in practice when factoring very large numbers.
Either way, the simpler ripple carry adder gives an excellent opportunity to introduce the emulation concepts.
\end{comment}

Let $a$  be a classically known $n$-bit integer and $b$ be a quantumly-stored $n$-qubit basis state. We are looking for a quantum implementation of the function $|b\rangle \mapsto |a+b\rangle$.
More specifically, we are looking for a pre-compiled quantum circuit $C_a$ parameterized by $a$ which is known at compilation time.
Consider the well-known quantum ripple-carry adder from \cite{Cuccaro} (in particular, the circuit illustrated on Figure 4 for $n=6$ there that is copied, for completeness into our Fig.
\ref{fig:original:cuccaro}).

\begin{figure}[ht]
\input{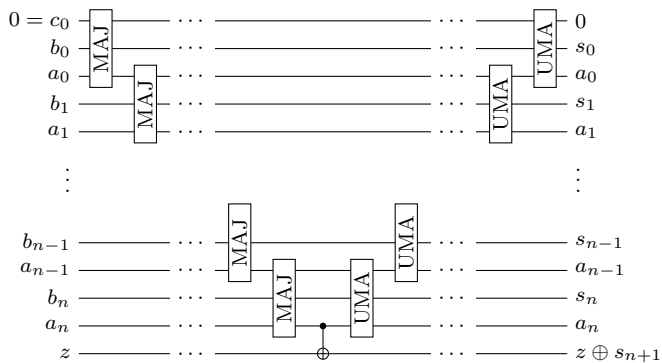}
\caption{\label{fig:original:cuccaro} Ripple-carry adder from \cite{Cuccaro}.}
\end{figure}

The adder uses $2n+2$ qubits. It performs a ladder of $n$ $\mbox{MAJ}$ gates to compute all the carry bits, including the top one. The top carry bit is copied unto the last qubit and
the ladder of $n$ $\mbox{UMA}$ gates is performed. Each $\mbox{UMA}$ gate uncomputes the corresponding  $\mbox{MAJ}$ function and performs the three-way $\mathbb{Z}_2$ addition
$a_i \oplus b_i \oplus c_i$.

It is somewhat hard to fold in the classically known $a$ in the multi-qubit framework using this design. Note however, that a solution along these lines is offered in \cite{TakaKuni}.
However it is easy to fold in $a$ in ternary emulation using the third basis state of the qutrit. We show that it takes exactly $n+2$ qutrits to emulate the binary shift
$|b\rangle \mapsto |a+b\rangle$.
 %and that by adding one ancillary qutrit to the circuit to the total of $n+3$ qutrits we can do the emulation very efficiently.

\begin{comment}
It is helpful for our purposes that the $\; {\rm mod} \; 2$ additive shifts $b_i \oplus$ and $a_i \oplus$ can be taken out of the $\mbox{UMA}$ blocks, pushed into the end of the circuit and all the additive
shifts can be done in parallel in depth at most two there.
\end{comment}

Consider $n+2$ qutrits where the top and the bottom ones are prepared in $|0\rangle$ state and the remaining $n$ encode the {\emph binary bits} of the  $|b\rangle$.

We will be looking for reversible two-qutrit gates $Y_0, Y_1$ such that
\begin{equation} \label{eq:for:Y:gate}
Y_{a_j} \, |c_j,b_j\rangle = |c'_j,c_{j+1}\rangle
\end{equation}
\noindent where $c_{j+1}$ is the correct carry bit for $c_j+a_j+b_j$ and $c'_j$ is an appropriate trit.

\begin{comment}
Consider the two two-qutrit gates parameterized by $\mathbb{Z}_2$ : $Y_0 = \tau_{|01\rangle,|20\rangle}$  and $Y_1 = \tau_{|10\rangle,|21\rangle}$.
Assuming $a$, $b$ and all the carries are encoded as binary numbers, it is easily verified by direct computation that
$Y_{a_j} \, |c_j,b_j\rangle = |c'_j,c_{j+1}\rangle$, where $c_{j+1}$ is the correct carry bit for the $a_j+b_j+c_j$ addition and $c'_j$ is a trit that can assume the value of $2$ in two out of the eight
cases.
\end{comment}

Since all the bits of $a$ are known we can precompile a ladder of $Y$ gates that correctly computes the top carry bit $c_n$ and puts the modified carry trit $c'_j$ on each $b_j$ wire. Having copied $c_n$ onto
the last qutrit, we sequentially undo the $Y$ gates in lockstep with computing partial $\mathbb{Z}_2$-sums $b_j\oplus c_j$ on all the $b_j$ wires using gates of $\mbox{CNOT}$ type.

We note that $Y_0,Y_1$ are ternary gates, used however in a narrow context of a truth table with just four columns. One would intuitively expect that their restriction to the context can be emulated
at a relatively small expense.

Indeed:

\begin{prop} \label{prop:Ys:improved}
Label the $c_i$ wire by $0$ and $b_i$ wire by $1$
In the context of binary data the gates
\[
\begin{split}
Y_0 = C_2(\mbox{INC})_{0,1}^{\dagger} \mbox{SUM}_{1,0} (\tau_{|0\rangle,|1\rangle} \otimes I)
\end{split}
\]
\noindent and
\[
\begin{split}
Y_1  = C_2(\mbox{INC})_{0,1} (I \otimes \tau_{|0\rangle,|1\rangle} ) \mbox{SUM}_{1,0} (I \otimes \tau_{|0\rangle,|1\rangle} )
\end{split}
\]
\noindent satisfy the condition (\ref{eq:for:Y:gate}).

Here the $C_2(\mbox{INC})$ is the binary-controlled increment
$C_2(\mbox{INC}): |j,k\rangle \mapsto |j,k+\delta_{j,2}\rangle$

\end{prop}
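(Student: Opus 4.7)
The plan is to verify the two identities by direct case analysis on the four binary inputs $(c_j, b_j) \in \{0,1\}^2$, since this is all the ``binary context'' means. For each case I need to confirm that the second output qutrit carries the correct classical majority value $c_{j+1} = \mathrm{MAJ}(c_j, b_j, a_j)$, namely $c_{j+1} = c_j \wedge b_j$ when $a_j = 0$ and $c_{j+1} = c_j \vee b_j$ when $a_j = 1$, while the first qutrit is allowed to land on any value in $\{0,1,2\}$ (the ``modified carry trit'' $c'_j$ of the statement).

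Before running the cases I would fix the conventions used in the formula: $\tau_{|0\rangle,|1\rangle}$ acts as the transposition $0 \leftrightarrow 1$ on a single qutrit; $\mathrm{SUM}_{1,0}$ sends $|j,k\rangle$ to $|j+k \bmod 3, k\rangle$ (control on wire $1$, target on wire $0$, per Eq.~(\ref{eq:SUM})); and $C_2(\mathrm{INC})_{0,1}$ applies $\mathrm{INC}$ on wire $1$ exactly when wire $0$ is in state $|2\rangle$, with $C_2(\mathrm{INC})^\dagger$ applying $\mathrm{INC}^{-1}$ under the same control. The composition is read right-to-left, as usual.

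For $Y_0$ the plan is to apply the three factors in sequence. After $(\tau_{|0\rangle,|1\rangle} \otimes I)$ the pair $(c_j, b_j)$ becomes $(1-c_j, b_j)$; after $\mathrm{SUM}_{1,0}$ wire $0$ holds $(1-c_j + b_j) \bmod 3$, which equals $2$ precisely when $c_j = b_j = 1$ and equals $0$ or $1$ otherwise; the final $C_2(\mathrm{INC})^\dagger_{0,1}$ therefore fires exactly in that one case, decrementing $b_j = 1$ on wire $1$ to $0$. Thus wire $1$ ends at $0$ in every case except $c_j = b_j = 1$, matching the $a_j = 0$ carry as required. For $Y_1$ I would do the analogous bookkeeping: the outer $(I \otimes \tau_{|0\rangle,|1\rangle})$'s toggle $b_j$ before and after $\mathrm{SUM}_{1,0}$, so wire $0$ after the middle three gates holds $(c_j + (1-b_j)) \bmod 3$, reaching $|2\rangle$ exactly when $c_j = 1$ and $b_j = 0$; the final $C_2(\mathrm{INC})_{0,1}$ then increments wire $1$ from $0$ to $1$ in that one case. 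Combining with the preserved or toggled $b_j$ one checks that wire $1$ ends at $1$ in all three cases $\{(0,1),(1,0),(1,1)\}$ and at $0$ only for $(0,0)$, which is the $a_j = 1$ carry.

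There is no genuine obstacle here; the only thing that can go wrong is a convention slip (direction of the SUM, which wire is the control of $C_2(\mathrm{INC})$, or the order of composition). To keep the write-up crisp I would present the verification as two $4$-row truth tables, one for each of $Y_0$ and $Y_1$, with intermediate columns showing the state after each factor; the correctness of the carry bit on wire $1$ is then immediate by inspection, and the values of $c'_j$ on wire $0$ can simply be read off (they happen to include $|2\rangle$ in one row per table, consistent with ``an appropriate trit'').
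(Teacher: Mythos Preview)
Your approach is exactly the paper's: the proof there is literally ``by direct computation,'' and your proposed truth-table verification is the right way to carry that out. One small slip to fix in your $Y_0$ sketch: $(1-c_j+b_j)\bmod 3 = 2$ occurs at $(c_j,b_j)=(0,1)$, not $(1,1)$; the $C_2(\mbox{INC})^\dagger$ then decrements $b_j=1\to 0$ in that case and leaves wire~$1$ untouched in the other three, yielding wire-$1$ values $0,0,0,1$ (your stated conclusion is correct, but the intermediate identification of the firing case and the deduction from it are off---the actual tables you propose will catch this).
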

\begin{proof}
By direct computation. Note, that we do not care, what either of these two circuits does outside of the binary data subspace as long as the action is reversible.
\end{proof}

The $C_2(\mbox{INC})$ gate is also denoted as $C_2(X)$ in Ref. \cite{BCRS}, where its cost and utility is discussed in detail (see also further discussion in section \ref{sec:reflections}). The non-Clifford cost of
either $Y_j$ gate is equal to the non-Clifford cost of $C_2(\mbox{INC})$ which is known to be $3$ $P_9$ gates. Allowing one ancillary qutrit, the $C_2(\mbox{INC})$ is represented by a circuit of $P_9$-depth of
$1$ and $P_9$-width of $3$.

Besides the generalized carry computation, the additive shift circuit also needs to perform the bitwise $\; \mod 2\;$ addition by emulated gates of $\mbox{CNOT}$ type.

Recall that $\mbox{CNOT}$  gate cannot be exactly represented by a ternary Clifford circuit (cf. \cite{BCRS}, Appendix A). As shown further in Proposition \ref{prop:CNOT:ancilla-free}, the non-Clifford cost of
ternary-emulated $\mbox{CNOT}$ on binary data only is an equivalent of two $C_2(\mbox{INC})$.
% gates or $6$ $P_9$ gates.
Thus the additive shift takes roughly ${12\, n}$ $P_9$ gates to complete (not counting the Clifford scaffolding). With one ancilla this can be done at $P_9$-depth of $4\, n$ and $P_9$-width of $3$.

However, Shor's period funding functions relies on controlled and doubly-controlled versions of the additive shift. It suffices to control only the bitwise addition gates. Thus adding one level of control
produces $n$ additional Toffoli gates and adding the second level of control turns these gates into controlled Toffolis.
This is the bottleneck of the emulated solution: as per Corollaries \ref{corol:Toffoli:the:12} and \ref{corol:ctrl:Toffoli:18} in section below, an emulated Toffoli takes $12$ $P_9$ and the
binary-controlled Toffoli takes $18$ $P_9$ gates respectively.

Thus overall the controlled shift takes ${18 \, n}$ $P_9$  and the doubly-controlled shift takes ${24 \, n}$ $P_9$ gates. Allowing, again, an ancillary qutrit the $P_9$-depths of the corresponding
circuits can be made $6\,n$ and $8\,n$ respectively.

For what it is worth, the $P_9$-counts in this solution are similar (and in fact marginally lower) than the $T$-counts required for running the original binary adder  \cite{Cuccaro} on the more common binary
Clifford+T platform. Indeed each of the $\mbox{MAJ}$ and $\mbox{UMA}$ gates shown on figure \ref{fig:original:cuccaro} is Clifford-equivalent to a Toffoli gate that takes $7$ $T$ gates to implement. Adding one
level of control to the adder increases the non-Clifford complexity by an additional $n$ Toffoli gates to the total $T$-count of $21\,n$. Adding the second level of control, conservatively, brings in $2\,n$
additional modified Toffoli gates to yield the total $T$-count of $29\,n$.

We also note that the width of the ternary emulation circuit is equal to $n+2$ qutrits, whereas the original purely binary design appears to require $2\, n + 2$ qubits.

The construction of Corollaries \ref{corol:Toffoli:the:12} and \ref{corol:ctrl:Toffoli:18} requires 1 and 2 ancillas respectively. These ancillae can be shared along the depth of the circuit inflating the overall
width by 2 qutrits.

\subsection{Ternary circuit for ternary ripple-carry additive shift} \label{subsec:ternary:ternary:ternary}
Consider ripple-carry implementation of the quantum function $|b\rangle \mapsto |a+b\rangle$, where $|b\rangle$ is quantumly encoded integer and $a$ is an integer that is classically known.
Suppose $a$ and $b$ are encoded as either bit strings with at most $n$ bits or as trit strings with at most $m=\lceil \log_3(2) n \rceil$ trits (with $\log_3(2) \approx 0.63093$).
Since $a$ is classically known, we strive to improve on the ternary ripple-carry adder of \cite{BCRS} by folding in the trits of $a$. However, we are no longer able to encode all of the quantum information for
$b$ and the carry on the same qutrit. The additive shift thus requires  roughly $2\,m-w_1(a)$  qutrits to run (where $w_1(a)$ is the number of trits equal to $1$ in the ternary expansion of $a$).

The ternary additive shift in this design has somewhat higher non-Clifford time cost compared to the emulated binary shift of section \ref{subsec:ripple:carry}.

%The first observation follows from an analysis of the structure of generalized carry gates.
For the classical additive shift we do not physically encode the trits of $a$ and instead pre-compile different
generalized carry circuits for different values of these trits. Tables \ref{table:ai:1} and \ref{table:ai:0} show the truth tables for the consecutive carry $c_{i+1}$ given, respectively, $a_i = 1$ and $a_i=0$
(the case of $a_i=2$ is symmetric to the case $a_0$ and yields the came conclusions).

\begin{table*}
    \begin{tabular}{l@{\qquad\qquad}c@{\qquad\qquad}c@{\qquad\qquad}c@{\qquad\qquad}c@{\qquad\qquad}c@{\qquad\qquad}c} \hline\hline\\[-1.5ex]
    $c_i$                                  & $0$                         & $0$ & $0$                         & $1$ & $1$                         & $1$\\[1ex]
    $b_i$        & $0$                         & $1$ & $2$                         & $0$ & $1$                         & $2$  \\
    $c_{i+1}$        & $0$                         & $0$ & $1$                         & $0$ & $1$                         & $1$\\[1ex]
\hline\hline
    \end{tabular}
    \caption{Truth table for $c_{i+1}$ given $a_i = 1$} \label{table:ai:1}
  \end{table*}

\begin{table*}
    \begin{tabular}{l@{\qquad\qquad}c@{\qquad\qquad}c@{\qquad\qquad}c@{\qquad\qquad}c@{\qquad\qquad}c@{\qquad\qquad}c} \hline\hline\\[-1.5ex]
    $c_i$                                  & $0$                         & $0$ & $0$                         & $1$ & $1$                         & $1$\\[1ex]
    $b_i$        & $0$                         & $1$ & $2$                         & $0$ & $1$                         & $2$  \\
    $c_{i+1}$        & $0$                         & $0$ & $0$                         & $0$ & $0$                         & $1$\\[1ex]
\hline\hline
    \end{tabular}
    \caption{Truth table for $c_{i+1}$ given $a_i = 0$} \label{table:ai:0}
  \end{table*}

The case of $a_i = 1$ does not require any ancillary qutrits since the $c_{i+1}$ is a balanced binary function that can be produced reversibly on the pair of qutrits encoding $c_i$ and $b_i$ by ternary
$\mbox{SWAP}$ gate followed by $|01\rangle \leftrightarrow |20\rangle$.

However, in the case of $a_i = 0$ the $c_{i+1}=0$ in five cases (respectively, for $a_i = 2$ the $c_{i+1}=1$ in five cases) and such five basis vectors cannot be represented in two-qutrit state space. These cases
thus require an ancillary qutrit to encode $c_{i+1}$.

In the case of $a_i = 0$ we simply take the ancilla in the $|0\rangle$ state and apply doubly-controlled $\mbox{INC}$ gate with the ternary control on $c_i$ and binary control on $b_i$. In the case of $a_i = 2$ it
suffices to additionally use the Clifford $\tau_{|0\rangle,|1\rangle}$ gate on the $c_{i+1}$.

Assuming $a$ is generic with $w_1(a) \approx m/3$ we get an average width of the additive shift circuit of roughly $5/3 \, m$ which eliminates the space savings afforded by denser ternary
encoding ($5/3 \log_3(2) \approx 1.05$).

Let us now make case for the second observation.
%While the width counting above is almost independent on the specific gate syntheses, the second observation relies on specific results of \cite{BCRS} (that are the best known to us, but haven't been yet
proven practically optimal).
We start by assessing the clean magic state counts for simple uncontrolled additive shift. We note that for any classical value of the $a_i$ trit the non-Clifford cost of the carry gate is the same and equals
$15$ clean magic states. Indeed, depending on $a_i$ and in terminology of \cite{BCRS} we either need one gate of $S_{01,10}$ type or one gate of $C_0(\mbox{SUM})$ type. In subsection 5.1 of the \cite{BCRS}
both types are reduced to $5$ binary-controlled increments and consequently to $15$ $P_9$ gates. The concluding trit-wise addition is done by Clifford $\mbox{SUM}$s at negligible cost. Thus the overall cost of
the circuit is roughly  ${30\,m \approx 19 \, n}$ $P_9$ gates. Allowing an ancillary qutrit, the $P_9$-depth of the circuit can be made equal to ${10\,m > 6 \, n}$.

Adding one ternary control to the circuit turns all the finalizing $\mbox{SUM}$s into``Horner'' gates $\Lambda(\mbox{SUM})$ that overall takes $4\, m$ additional $P_9$ gates to the total non-Clifford cost of
${34\,m > 21 \, n}$ $P_9$ gates.

A subtle point discussed in section \ref{subsec:mod:exponetiation} below is that the second control that is routinely added to the additive shift gate $S_a$ is in fact strict control that turns it into a
$C_f(S_a)$ gate $f\in \{1,2\}$ where $S_f$ is activated only by the control basis state $|f\rangle$. This turns each of the the $m$ ``Horner'' gates into a four-qutrit  $C_f(\Lambda(\mbox{SUM}))$ gate. We do
not have an available ancilla-free design for a synthesis of this gate.

Our best design described in Proposition \ref{prop:C:Lambda:SUM} sets the non-Clifford cost at $23$ $P_9$ gates given one clean ancilla.

Thus adding the required second (strict) control inflates the overall cost of the ternary circuit to ${53 \, m > 33 \,n}$ $P_9$ gates.

Again, with available ancilla the circuit can be restacked to $P_9$-width of $3$ reducing the $P_9$-depth by the factor of $3$ (to roughly ${19 \, m}$ in case of doubly-controlled additive shift).
%Adding another ternary control turns each "Horner" gate into $\Lambda \Lambda(\mbox{SUM})$ costing $12$ $P_9$ gates and the total non-Clifford cost rises to  ${42\,m \approx 26.5 \, n}$ $P_9$ gates. This
is less than the non-Clifford cost of the emulated binary $n$-bits doubly-controlled additive shift.

The comparative cost of the binary and ternary options is summarized in table \ref{table:binary:vs:ternary}.

\begin{table*}
    \begin{tabular}{l@{\qquad\qquad}l@{\qquad\qquad}l} \hline\hline\\[-1.5ex]
\multicolumn{1}{c}{\hspace*{-2cm} Circuits}
           &  \multicolumn{1}{l}{$\# P_9$: emulated binary} &  \multicolumn{1}{l}{$\# P_9$: ternary}\\[0.5ex]
    \hline\\[-1.5ex]
   Simple additive shift                                 & $12\,n$                       & $19 \,n$ \\[1ex]
    Controlled additive shift       & $18\,n$    & $> 21\,n $  \\[1ex]
    Doubly-controlled additive shifts        & $24\,n$   & $> 33 \, n$ \\[1ex]
\hline\hline
    \end{tabular}
    \caption{Cost of ripple-carry additive shift: ternary vs. emulated binary. $n$ is the bit size of the arguments.} \label{table:binary:vs:ternary}
  \end{table*}

We demonstrate in the Section \ref{subsec:modular:shifts} that the best-known ternary-controlled modular shift circuit requires $4$ instead of $3$ additive shift blocks on roughly half of the
modular addition cases, so, in the context of the required modular addition, the emulated binary encoding appears to be a practical win-win when a low width ripple-carry adder is used.

\subsection{Circuits for carry lookahead additive shift} \label{subsec:carry:lookahead}
The resource layout is different for known carry lookahead solutions. For the sake of space we forego detailed analysis and only sketch the big picture.

We assume, that the integers $a$ and $b$ are encoded as either bit strings with at most $n$ bits or as trit strings with at most $m=\lceil \log_3(2) n \rceil$ trits.
We use carry lookahead additive shifts based on the in-place multi-qubit carry lookahead adder \cite{DraperSvore} and the in-place multi-qutrit carry lookahead adder \cite{BCRS}.

The non-Clifford depths of the corresponding circuits are $4\, \log_2(n)$ and $4\, \log_2(m)$  respectively up to small additive constants.

Because $\log_2(m) = \log_2(n) + \log_2(\log_3(2))$, there is no substantial difference in non-Clifford depths. The non-Clifford layers of the binary adder are populated with Toffoli gates and for the ternary
adder they are populated with carry status merge/unmerge widgets (the $\mathcal{M}$ and $\mathcal{M}^{\dagger}$ widgets of \cite{BCRS}). The cost of ancilla-free emulation of the former or,
respectively, execution of the latter is identical with $15$ $P_9$ gates.

When levels of control are added to the shift circuit, putting ternary control on ternary widgets is more expensive than building multi-controlled Toffoli gates, as discussion in Section
\ref{subsec:ripple:carry} implies. But in the context of carry lookahead circuits the multi-controlled gates are located in just two layers out of $O(\log(n))$ thus the impact of this cost distinction is both
asymptotically and practically negligible.

Note that the widths of the binary and ternary circuits are roughly proportional to $n$ and $m=\lceil \log_3(2) n \rceil$, respectively. This means that the purely ternary solution has roughly $m/n \approx \log_3(2)$ smaller
width.

%To summarize, in the context of the required modular addition, the choice between emulated binary encoding and purely ternary encoding of the carry lookahead addition is a width to depth tradeoff. Since the
depth overhead percentage is moderate, we should prefer purely ternary encoding when implementing Shor's period finding on small quantum computer.

\subsection{Circuits for modular additive shifts} \label{subsec:modular:shifts}

We review layout for modular additive shift and controlled modular additive shift in both emulated binary and genuine ternary setups.

Let $N >> 0$ and $a < N$ be classically known integers.
The commonly used scheme to compute the quantum modular additive shift $|b\rangle \mapsto |(a+b) \; {\rm mod} \; N\rangle$ is to compute $|a+b\rangle$, figure out whether $a+b <N$ and, if not, then subtract
$N$.
In order to do it coherently without measurement we need to
\begin{enumerate}
\item Speculatively compute the $|(a-N)+b\rangle$ shift; structure it so that the top carry bit $c_{n+1}$ is $1$ iff $(a-N)+b <0$.
\item Copy $c_{n+1}$ to a clean ancilla $x$.
\item Apply the shift by $+N$ controlled by the ancilla $x$.
\item Clean up the ancilla $x$.
\end{enumerate}

Surprisingly, the last step is less than trivial.
We need to compare the encoded integer $|y\rangle$ after step 3) to $a$. Then $y \geq a$ if and only if $c_{n+1}=1$. Therefore we must flip the ancilla if and only if $y \geq a$. We do this by taking a circuit
for comparison to classical threshold and wiring the $\mbox{NOT}\, x$ into it in place of the top carry qubit. It is easy to see that performing the comparison circuit has the exactly the desired effect on the
ancilla $x$.
A top level layout of the modular additive shift is shown in Figure \ref{fig:modular:shift}. We note that the three-stage layout shown in the Figure is not entirely new. It is very similar to designs proposed
in \cite{MeterItoh} and \cite{TakaKuni}.
Clearly the non-Clifford depth of this scheme is roughly triple the non-Clifford depth of the additive shift circuit in either binary or ternary framework.

\begin{figure*}%[ht]
%\includegraphics[width=6in]{NewModularShift.pdf}
%AB:PLEASE DON'T TOUCH THE SCALING FACTOR
\begin{tikzpicture}[scale=3.50000,x=1pt,y=1pt]
\filldraw[color=white] (0.000000, -7.500000) rectangle (90.000000, 37.500000);
% Drawing wires
% Line 2: a W |b\rangle
\draw[color=black] (0.000000,30.000000) -- (90.000000,30.000000);
\draw[color=black] (0.000000,30.000000) node[left] {$|b\rangle$};
% Line 3: b W |0\rangle
%\draw[color=black] (0.000000,15.000000) -- (90.000000,15.000000);
\draw[color=black] (0.000000,15.000000) -- (71.000000,15.000000);
\draw[color=black] (85.000000,15.000000) -- (90.000000,15.000000);
\draw[color=black] (0.000000,15.000000) node[left] {$|0\rangle$};
% Line 4: c W |0\rangle
\draw[color=black] (0.000000,0.000000) -- (90.000000,0.000000);
\draw[color=black] (0.000000,0.000000) node[left] {$|0\rangle$};
% Done with wires; drawing gates
% Line 5: a b G $+(a-N)$
\draw (12.000000,30.000000) -- (12.000000,15.000000);
\begin{scope}
\draw[fill=white] (12.000000, 22.500000) +(-45.000000:8.485281pt and 19.091883pt) -- +(45.000000:8.485281pt and 19.091883pt) -- +(135.000000:8.485281pt and 19.091883pt) -- +(225.000000:8.485281pt and
19.091883pt) -- cycle;
\clip (12.000000, 22.500000) +(-45.000000:8.485281pt and 19.091883pt) -- +(45.000000:8.485281pt and 19.091883pt) -- +(135.000000:8.485281pt and 19.091883pt) -- +(225.000000:8.485281pt and 19.091883pt) --
cycle;
\draw (12.000000, 22.500000) node {$+(a-N)$};
\end{scope}
% Line 6: b +c
\draw (33.000000,15.000000) -- (33.000000,0.000000);
\filldraw (33.000000, 15.000000) circle(1.000000pt);
\begin{scope}
\draw[fill=white] (33.000000, 0.000000) circle(3.000000pt);
\clip (33.000000, 0.000000) circle(3.000000pt);
\draw (30.000000, 0.000000) -- (36.000000, 0.000000);
\draw (33.000000, -3.000000) -- (33.000000, 3.000000);
\end{scope}
% Line 7: a b G $+N$ c
\draw (54.000000,30.000000) -- (54.000000,0.000000);
\begin{scope}
\draw[fill=white] (54.000000, 22.500000) +(-45.000000:8.485281pt and 19.091883pt) -- +(45.000000:8.485281pt and 19.091883pt) -- +(135.000000:8.485281pt and 19.091883pt) -- +(225.000000:8.485281pt and
19.091883pt) -- cycle;
\clip (54.000000, 22.500000) +(-45.000000:8.485281pt and 19.091883pt) -- +(45.000000:8.485281pt and 19.091883pt) -- +(135.000000:8.485281pt and 19.091883pt) -- +(225.000000:8.485281pt and 19.091883pt) --
cycle;
\draw (54.000000, 22.500000) node {$+N$};
\end{scope}
\filldraw (54.000000, 0.000000) circle(1.000000pt);
% Line 8: a b c G $\geq a ?$
\draw (78.000000,30.000000) -- (78.000000,0.000000);
\begin{scope}
\draw[fill=white] (78.000000, 15.000000) +(-45.000000:8.485281pt and 29.698485pt) -- +(45.000000:8.485281pt and 29.698485pt) -- +(135.000000:8.485281pt and 29.698485pt) -- +(225.000000:8.485281pt and
29.698485pt) -- cycle;
\clip (78.000000, 15.000000) +(-45.000000:8.485281pt and 29.698485pt) -- +(45.000000:8.485281pt and 29.698485pt) -- +(135.000000:8.485281pt and 29.698485pt) -- +(225.000000:8.485281pt and 29.698485pt) --
cycle;
\draw (78.000000, 15.000000) node {$\geq a ?$};
\end{scope}
% Done with gates; drawing ending labels
% Done with ending labels; drawing cut lines and comments
% Done with comments
\end{tikzpicture}
\caption{\label{fig:modular:shift} Top-level layout of modular additive shift for binary encoding. }
\end{figure*}
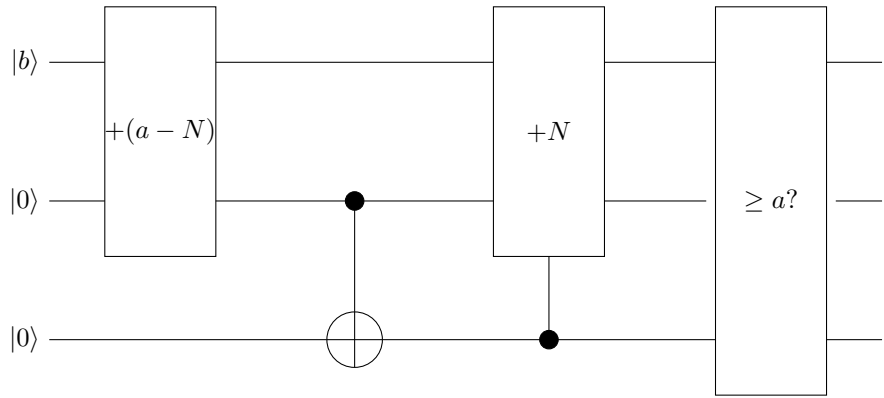

In the context of ternary encoding of integers and allowing for ternary control the logic turns out to be more  involved.
Depending on whether $2 \, a < N$ or not, which is known at compilation time, we need to compile two different circuits.
When $2 \, a < N$ we need to speculatively precompute $b+ c\, a -N$ where $c$ is the quantum value of the control trit. This is different from adding ternary control to the additive shift $+(a-N)$. A
straightforward way to do this is by taking the controlled shift $+c\, (a-N)$ followed by strictly controlled shift $C_2(+N)$. Aside from this additional shift box,  the circuit in Figure
\ref{fig:modular:shift} still works as intended, which is easy to establish: the speculative $b+ c\, a -N$ is corrected back to $b+ c\, a$ if and only if the eventual result is $\geq c\,a$ which is the
condition for the ancilla cleanup.

When $ 2\, a > N$ we can precompile ternary control on the entire $+(a-N)$ box, which then precomputes the
$y=b+ c (a -N)$ for us. However, here we still get some overhead compared to the binary encoding context. Indeed, we need to correct the speculative state $y$ to $y = b + c (a -N) +N$ when $y<0$ and it is
easily seen that the result is $\geq c (a -N) +N$ if and only if $y$ was negative and the correction happened. Thus the ancilla cleanup threshold is $t = c (a -N) +N$ on this branch. Since $c$ is the quantum
control trit, the comparison to $t$ is somewhat more expensive to engineer than comparison to $c\,a$.

To summarize, a purely ternary modular shift circuit allowing for ternary control would be similar to one shown in Figure \ref{fig:modular:shift:ternary}, where the extra dashed $C_2(+N)$ box is inserted at
compilation time when $2 \, a < N$. The latter case constitutes the critical path where we have to use an equivalent of $4$ additive shifts instead of $3$.

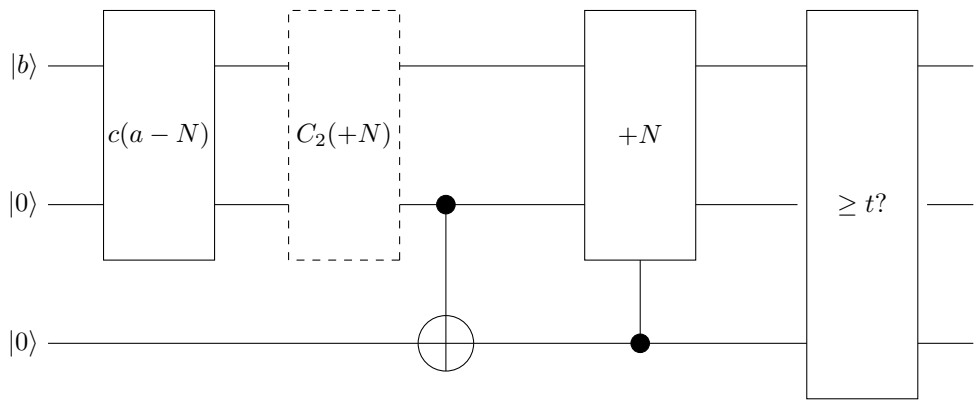
\begin{figure*}%[ht]
%\includegraphics[width=6in]{NewModularShift.pdf}
%AB:PLEASE DON'T TOUCH THE SCALING FACTOR
\begin{tikzpicture}[scale=3.50000,x=1pt,y=1pt]
\filldraw[color=white] (0.000000, -7.500000) rectangle (90.000000, 37.500000);
% Drawing wires
% Line 2: a W |b\rangle
\draw[color=black] (0.000000,30.000000) -- (100.000000,30.000000);
\draw[color=black] (0.000000,30.000000) node[left] {$|b\rangle$};
% Line 3: b W |0\rangle
%\draw[color=black] (0.000000,15.000000) -- (90.000000,15.000000);
\draw[color=black] (0.000000,15.000000) -- (81.000000,15.000000);
\draw[color=black] (95.000000,15.000000) -- (100.000000,15.000000);
\draw[color=black] (0.000000,15.000000) node[left] {$|0\rangle$};
% Line 4: c W |0\rangle
\draw[color=black] (0.000000,0.000000) -- (100.000000,0.000000);
\draw[color=black] (0.000000,0.000000) node[left] {$|0\rangle$};
% Done with wires; drawing gates
% Line 5: a b G $+(a-N)$
\draw (12.000000,30.000000) -- (12.000000,15.000000);
\begin{scope}
\draw[fill=white] (12.000000, 22.500000) +(-45.000000:8.485281pt and 19.091883pt) -- +(45.000000:8.485281pt and 19.091883pt) -- +(135.000000:8.485281pt and 19.091883pt) -- +(225.000000:8.485281pt and
19.091883pt) -- cycle;
\clip (12.000000, 22.500000) +(-45.000000:8.485281pt and 19.091883pt) -- +(45.000000:8.485281pt and 19.091883pt) -- +(135.000000:8.485281pt and 19.091883pt) -- +(225.000000:8.485281pt and 19.091883pt) --
cycle;
\draw (12.000000, 22.500000) node {$c(a-N)$};
\end{scope}
% Manual
\draw[dashed] (32.000000,30.000000) -- (32.000000,15.000000);
\begin{scope}
\draw[dashed][fill=white] (32.000000, 22.500000) +(-45.000000:8.485281pt and 19.091883pt) -- +(45.000000:8.485281pt and 19.091883pt) -- +(135.000000:8.485281pt and 19.091883pt) -- +(225.000000:8.485281pt and
19.091883pt) -- cycle;
\clip (32.000000, 22.500000) +(-45.000000:8.485281pt and 19.091883pt) -- +(45.000000:8.485281pt and 19.091883pt) -- +(135.000000:8.485281pt and 19.091883pt) -- +(225.000000:8.485281pt and 19.091883pt) --
cycle;
\draw (32.000000, 22.500000) node {$C_2(+N)$};
\end{scope}

% Line 6: b +c
\draw (43.000000,15.000000) -- (43.000000,0.000000);
\filldraw (43.000000, 15.000000) circle(1.000000pt);
\begin{scope}
\draw[fill=white] (43.000000, 0.000000) circle(3.000000pt);
\clip (43.000000, 0.000000) circle(3.000000pt);
\draw (40.000000, 0.000000) -- (46.000000, 0.000000);
\draw (43.000000, -3.000000) -- (43.000000, 3.000000);
\end{scope}
% Line 7: a b G $+N$ c
\draw (64.000000,30.000000) -- (64.000000,0.000000);
\begin{scope}
\draw[fill=white] (64.000000, 22.500000) +(-45.000000:8.485281pt and 19.091883pt) -- +(45.000000:8.485281pt and 19.091883pt) -- +(135.000000:8.485281pt and 19.091883pt) -- +(225.000000:8.485281pt and
19.091883pt) -- cycle;
\clip (64.000000, 22.500000) +(-45.000000:8.485281pt and 19.091883pt) -- +(45.000000:8.485281pt and 19.091883pt) -- +(135.000000:8.485281pt and 19.091883pt) -- +(225.000000:8.485281pt and 19.091883pt) --
cycle;
\draw (64.000000, 22.500000) node {$+N$};
\end{scope}
\filldraw (64.000000, 0.000000) circle(1.000000pt);
% Line 8: a b c G $\geq a ?$
\draw (88.000000,30.000000) -- (88.000000,0.000000);
\begin{scope}
\draw[fill=white] (88.000000, 15.000000) +(-45.000000:8.485281pt and 29.698485pt) -- +(45.000000:8.485281pt and 29.698485pt) -- +(135.000000:8.485281pt and 29.698485pt) -- +(225.000000:8.485281pt and
29.698485pt) -- cycle;
\clip (88.000000, 15.000000) +(-45.000000:8.485281pt and 29.698485pt) -- +(45.000000:8.485281pt and 29.698485pt) -- +(135.000000:8.485281pt and 29.698485pt) -- +(225.000000:8.485281pt and 29.698485pt) --
cycle;
\draw (88.000000, 15.000000) node {$\geq t?$};
\end{scope}
% Done with gates; drawing ending labels
% Done with ending labels; drawing cut lines and comments
% Done with comments
\end{tikzpicture}
\caption{\label{fig:modular:shift:ternary} Top-level layout of ternary modular additive shift. In case $2\,a < N$ the circuit is compiled with the additional $C_2(+N)$ shift controlled on $c=2$ and using the
threshold $t=c\, a$. In case $2\, a > N$ the additional shift is not needed, but the threshold $t=c(a-N) + N$.}
\end{figure*}

\begin{comment}
In ternary framework we propose that each of the three components of the modular shift circuit described above emulates the corresponding binary component as described in the preceding sections. Given this
design, the non-Clifford depth overhead factor due to ternary emulation (compared to binary solution) is asymptotically insignificant when $n \rightarrow \infty$ and in practice does not exceed
$(1+3/(4\,\log_2(n)))$.
The same applies to the controlled versions of the modular shift circuit.
\end{comment}

\subsection{Circuits for modular exponentiation} \label{subsec:mod:exponetiation}

For modular exponentiation $|k\rangle |1\rangle \mapsto |k\rangle |a^k \; {\rm mod} \; N\rangle$ we follow the known implementation proposed in the first half of Ref. \cite{Zalka}. Our designs are also
motivated in part by Ref. \cite{CleveWatrous}.

We denote by $d$ the dimension of the single qudit. $d$ is assumed to be either $2$ or $3$ where it matters.
Suppose that $a,N$ are classically known integers $a < N$, and $n$ is an integer approximately equal to $\log_d(N)$.

Suppose $|k\rangle$ is quantumly encoded, $k=\sum_{j=0}^{2 \, n-1} k_j \, d^j$ is base-$d$ expansion of $k$, where $k_j$ are the corresponding qudit states.
First, we observe that
\begin{equation} \label{eq:mod:exp:product}
a^k \; {\rm mod} \; N = \prod_{j=0}^{2\,n-1} (a^{d^j} \; {\rm mod} \; N)^{k_j}  \; {\rm mod} \; N.
\end{equation}
Note that $(a^{d^j} \; {\rm mod} \; N)$ are $2\,n$ classical values that are known and easily pre-computable at compilation time.
Thus $|a^k \; {\rm mod} \; N \rangle$ is computed as a sequence of modular multiplicative shifts, each quantumly controlled by the $|k_j\rangle$.

Suppose we have computed the partial product
\[
p_{k,m} = \prod_{j=0}^{m} (a^{d^j} \; {\rm mod} \; N)^{k_j}  \; {\rm mod} \; N,
\]
and let
\[
p_{k,m} = \sum_{\ell=0}^{n-1} p_{k,m,\ell} d^{\ell}
\]
be the base-$d$ expansion of $p_{k,m}$.
Then
\[
p_{k,m+1} = \sum_{\ell=0}^{n-1} p_{k,m,\ell} (d^{\ell} a^{d^{m+1}} \; {\rm mod} \; N)^{k_{m+1}} \; {\rm mod} \; N.
\]

Observe, again, that

\begin{equation} \label{eq:d:ary:shifts}
\{(d^{\ell} a^{d^{m+1}} \; {\rm mod} \; N)^f \; {\rm mod} \; N | f \in [1..d-1]\}
\end{equation}

\noindent is the set of fewer than $d$ pre-computable classical values known a priori. Therefore, promoting $p_{k,m}$ to $p_{k,m+1}$ is performed as a sequence of modular additive shifts, controlled by
$p_{k,m,\ell}$ and $k_{m+1}$.

Herein lies a subtle difference between the case of $d=2$ and the case of $d>2$ (e.g. $d=3$).
In the case of $d=2$ we do the modular shift by $2^{\ell} a^{2^{m+1}} \; {\rm mod} \; N$ if and only if $p_{k,m,\ell} = k_{m+1} = 1$. Thus the corresponding gate is simply the doubly-controlled modular
additive shift.

In case of $d>2$ the $d-1$ basis values of $k_{m+1}$ lead to modular additive shift by one of the $d-1$ potentially different values listed in the equation (\ref{eq:d:ary:shifts}). Thus we need a $(d-1)$-way
quantum switch capable of selection between the listed values. Let $S_f,f \in [1..d-1]$ be the modular additive shift by the $f$-th value in (\ref{eq:d:ary:shifts}). Then the desired switch can be realized
coherently as the product $C_1(S_1) \cdots C_{d-1}(S_{d-1})$ where $C_f(S_f)$ is the $S_f$ activated only by the basis state $k_{m+1} = |f\rangle$.

This implies the following difference in the circuit makeup between the case of $d=2$ and the case of $d=3$.

For $d=2$ modular exponentiation takes roughly $2\, n^2$ doubly-controlled modular additive shifts; for $d=3$ it takes roughly $4\, m^2$ doubly-controlled modular additive shifts (where $m$ is the trit size of
the arguments), each with one ternary and one strict control on one of the two ternary values.

When comparing the option of performing the circuit in emulated binary encoding against the option of running it in true ternary encoding we find a practical dead heat between the two options in terms of
circuit depth. Indeed in counting the number of doubly-controlled additive shift boxes we find that $4\, m^2 = 2\,(\log_3(2))^2 \, (2\,n^2) \approx 0.796 \times (2\,n^2)$. But we should be aware of possible
factor $4/3$ overhead in the number of additive shifts per a ternary modular shift as suggested, for example, by Figure \ref{fig:modular:shift:ternary}. (Of course $4/3 \times 0.796 \approx 1.06$.)

To summarize, solutions based on emulation of binary ripple-carry adders are still win-win over the comparable true ternary ripple-carry designs in the context of the modular exponentiation; when carry
lookahead adders are used, the two options have nearly identical non-Clifford depth numbers, but there is notable width reduction advantage (factor of $\log_3(2)$) of using true ternary solution over the
emulated binary one.

\subsection{Circuits for quantum Fourier transform} \label{subsec:Fourier:transform}

In the solutions for period finding discussed so far, the quantum cost is dominated by the cost of modular exponentiation represented by an appropriate reversible classical circuit. In this context
just a fraction of the cost falls onto the quantum Fourier transform.
Nevertheless, for the sake of completeness we discuss some designs for emulating binary quantum Fourier transform on ternary computers and implementing ternary Fourier transform directly in ternary logic.

Odd radix Fourier transforms appeared in earlier quantum algorithm literature. In particular \cite{Zalka2006} outlines the benefits of ``trinary'' (ternary) Fourier for low-width Shor
factorization circuits and also briefly sketches how ternary Fourier transform can be emulated in multi-qubit framework. On a more general level, Ref. \cite{HallgrenHales} describes quantum Fourier transform over
$\mathbb{Z}_p$. In Subsection \ref{subsubsec:true:ternary:Fourier} we develop specific circuitry for a version of such a transform over $\mathbb{Z}_p$ where $p$ is some integer power of $3$.

\subsubsection{The case of emulated binary}
A familiar binary circuit for approximate Fourier transform in dimension $2^n$ with precision $\delta$ consists of roughly $\Theta(n\,\log(n/\delta))$ controlled phases and $n$ binary Hadamard gates (see
\cite{IkeAndMike2000}, Section 5).
In known fault-tolerant binary frameworks, the phases  $e^{\pi \, i/2^k}, k \in \mathbb{Z}$ occurring in the Fourier transform have to be treated just like generic phases.
Of all the possible ways to emulate a controlled phase gate we will focus on just one with minimal parametric cost.
This is the one with one clean ancilla, two Toffoli gates and one uncontrolled phase gate. (It is not clear when exactly this design has been invented, but c.f. \cite{Diagonal}, Section 2 for a more recent
discussion.)

Given the control qubit $|c\rangle$ and target qubit $|t\rangle$ the controlled phase gate $C(P(\phi)), |\phi|=1$ is emulated by applying $\mbox{Toffoli} (I\otimes I\otimes P(\phi)) \, \mbox{Toffoli}$ to the
state $|c\rangle |t\rangle |0\rangle$.
Ternary emulation of Toffoli gate is discussed in detail in Section \ref{sec:reflections}. Somewhat surprisingly, ternary emulation of uncontrolled phase gates in practice incurs larger overhead than emulation
of classical gates.
Also the binary Hadamard gate is a Clifford gate in the binary framework, but cannot be emulated by a ternary Clifford circuit. This introduces additional overhead factor of $(1+\Theta(1/\log(1/\delta)))$.
%We discuss this in more detail in subsections that follow.

\subsubsection{The case of true ternary} \label{subsubsec:true:ternary:Fourier}

We develop our own circuitry for QFT over $\mathbb{Z}_{3^n}$ based on the textbook Cooley Tukey procedure.

Quantum Fourier transform in the $n$ qutrit state space is given by the unitary matrix

\begin{equation}
\mbox{QFT}_{3^n} = [\zeta_{3^n}^{j\,k}]
\end{equation}
\noindent where $\zeta_{3^n}^{j\,k}$ is the $3^n$-th root of unity.
In particular the $\mbox{QFT}_{3}$ coincides with the ternary (Clifford) Hadamard gate.

The following recursion for $n>1$ is verified by straightforward direct computation:

\begin{equation}
\mbox{QFT}_{3^n} = \Pi_n \mbox{QFT}_{3^{n-1}} (\Lambda(D_n)) \mbox{QFT}_{3}
\end{equation}
\noindent where $\Pi_n$ is a certain $n$-qutrit permutation,

\begin{equation}
D_n = \mbox{diag}(1, \zeta_{3^n}, \ldots, \zeta_{3^n}^{3^{n-1} - 1})
\end{equation}
\noindent and where $\Lambda$ is the ternary control.

By further direct computation we observe that
\begin{equation}
D_n = \prod_{k=0}^{n-2} \mbox{diag}(1,\zeta_{3^n}^{3^k},\zeta_{3^n}^{2 \times 3^k}).
\end{equation}

The permutation gate $\Pi_n$ is not computationally important, since it amounts to $O(n)$ qutrit swaps which are all ternary Clifford.

Aside of this tweak we have decomposed $\mbox{QFT}_{3^n}$ recursively into $\Theta(n^2)$ gates of the form $\Lambda(\mbox{diag}(1,\zeta_{3^m}^{3^k},\zeta_{3^m}^{2 \times 3^k}))$ which are ternary analogs of
familiar controlled phase gates.

Similar to the binary case, it is known in general (cf. \cite{HallgrenHales}) that once we are allowed to approximate the QFT to some fidelity $1-\delta$, we can compute the approximate QFT with $\Theta(n \,
\log(n/\delta)+\log(1/\delta)^2)$ gates. This is because controlled phase gates with phases in some $O(\delta/n)$ can be dropped from the circuit without compromising the fidelity.

\subsubsection{Implementation of binary and ternary controlled phase gates in the Clifford+$R_{|2\rangle}$ basis}
In ternary framework a $P(\phi)=|0\rangle \langle 0| + \phi \, |1\rangle \langle 1|, |\phi|=1$ can be emulated exactly by the balanced two-level gate
$P'(\phi)=|0\rangle \langle 0| + \phi \, |1\rangle \langle 1| + \phi^{-1} \, |2\rangle \langle 2|$ which is a composition of the Clifford reflection $H^2$ and the non-Clifford reflection
$P''(\phi)=|0\rangle \langle 0| + \phi \, |1\rangle \langle 2| + \phi^{-1} \, |2\rangle \langle 1|$.
Also, the binary Hadamard gate $h=(|0\rangle \langle 0| + |0\rangle \langle 1|+|1\rangle \langle 0|-|1\rangle \langle 1|)/\sqrt{2}$ is a two-level Householder reflection.
As per \cite{5isNew8},\cite{BCKW}, both $P''(\varphi)$ and $h$ can be effectively approximated to precision $\delta$ by Clifford+$R_{|2\rangle}$ circuits with $R$-counts $\leq C \, \log_3(1/\delta) +
O(\log(\log(1/\delta)))$ and the constant $C$ in between $5$ and $8$.
For reference, in the Clifford+T framework the $T$-count of $\delta$-approximation of a generic phase gate is in
$3 \, \log_2(1/\delta) + O(\log(\log(1/\delta)))$.

\begin{comment}
An easy variation of Lemma \ref{lem:dual:binary:control} below shows that a controlled phase gate can be emulated at the cost of two $C_2(\mbox{INC})$ gates (a total of two $P_9$ gates) and one clean ancilla.
For reference, in Clifford+T framework ancilla-assisted emulation of controlled phase gate requires two additional modified Toffoli gates (a total of $8$ $T$ gates).
\end{comment}

Thus, emulation of the binary circuit for a binary Fourier transform incurs no surprising costs.

In pure ternary encoding we need to implement the ternary analog of controlled phase gate: gates of the form $\Lambda(\mbox{diag}(1,\phi,\phi^2)), \, |\phi|=1$. This is not difficult after some
algebraic manipulation:

\begin{prop}
Given a phase factor $\phi,|\phi|=1$  and an arbitrarily small $\delta > 0$ the gate $\Lambda(\mbox{diag}(1,\phi,\phi^2))$ can be effectively approximated to precision $\delta$ by a metaplectic circuit with at
most $40\, (\log_3(1/\delta) + O(\log(\log(1/\delta))))$ $R_{|2\rangle}$ gates.

Alternatively such a $\delta$-approximation can be effectively achieved by a metaplectic circuit with at most $24\, (\log_3(1/\delta) + O(\log(\log(1/\delta))))$ $R_{|2\rangle}$ gates and a fixed-cost widget
with at most $30$ $P_9$ gates.
\end{prop}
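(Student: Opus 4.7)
The plan is to reduce the entangling diagonal gate $\Lambda(D)$ with $D = \mbox{diag}(1,\phi,\phi^2)$ to a small, explicit list of two-level axial reflections in $U(9)$, and then to invoke the single-reflection synthesis bound recalled in Section~\ref{subsec:metaplectic:basis}: any two-level Householder reflection is $\delta$-approximable by a Clifford+$R_{|2\rangle}$ circuit of $R$-count at most $8\log_3(1/\delta)+O(\log\log(1/\delta))$.

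\emph{Step 1 (reduction to binary-controlled pieces).} First I would use the identity $\Lambda(U)=C_1(U)\,(C_2(U))^2=C_1(U)\cdot C_2(U^2)$ recorded in the Background section to rewrite
\[
\Lambda(D) \;=\; C_1(D)\cdot C_2(D^2), \qquad D^2=\mbox{diag}(1,\phi^2,\phi^4).
\]
Each factor acts non-trivially only on a $3$-dimensional coordinate subspace of the two-qutrit Hilbert space, so the problem reduces to synthesising two binary-controlled single-qutrit diagonals of the form $C_\ell(\mbox{diag}(1,\psi,\psi^2))$.

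\emph{Step 2 (reduction to two-level axial reflections).} Each single-qutrit diagonal $\mbox{diag}(1,\psi,\psi^2)$, lifted to the two-qutrit space through the binary control, further splits (up to an irrelevant global phase) into two two-level phase gates, each supported on a single computational basis vector. Using the standard fact that every $2$-dimensional unitary is the product of at most two reflections times a global phase, I would realise each such two-level phase gate as a product of two two-level Householder reflections sharing a common $2$-dimensional axis plane; applying Clifford conjugations brings each of these reflections into an axial orientation compatible with the metaplectic synthesis.

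\emph{Step 3 ($40\log_3(1/\delta)$ bound).} A careful accounting is then needed to reduce the naive count $2\cdot 2\cdot 2=8$ of independent two-level axial reflections. Exploiting that the four non-trivial phases $\phi,\phi^{2},\phi^{2},\phi^{4}$ involve only three distinct angles, and that certain axes can be shared between $C_1(D)$ and $C_2(D^2)$ after Clifford normalisation, collapses the effective count to $5$ continuously parameterised axial reflections. Allocating precision $\delta/5$ to each and applying the cited single-reflection bound gives the total $R$-count $5\cdot 8\log_3(1/\delta)+O(\log\log(1/\delta))=40\log_3(1/\delta)+O(\log\log(1/\delta))$.

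\emph{Step 4 (alternative $24\log_3(1/\delta)$ bound with a $P_9$ widget).} For the second bound I would split each continuous phase $\phi^{j}$ as $\omega_9^{k_j}\cdot \rho_j$, where $k_j$ is an integer chosen to minimise $|\arg \rho_j|$ and $\rho_j$ is the residual. The $\omega_9^{k_j}$ factors are realised \emph{exactly} by a fixed metaplectic widget assembled from a constant number of $C_2(\mbox{INC})$ gates at the known cost of $3$ $P_9$ gates each, totalling at most $30$ $P_9$ gates; this absorbs a constant fraction of the reflections from Step~3. Only three of the five effective axial reflections then carry continuously approximated angles, yielding the sharper $3\cdot 8\log_3(1/\delta)+O(\log\log(1/\delta))=24\log_3(1/\delta)+O(\log\log(1/\delta))$ $R$-count in addition to the fixed $P_9$ widget.

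\emph{Main obstacle.} The asymptotic form of both bounds is essentially immediate from the single-reflection result. The genuine difficulty is in pinning down the exact constants $40$, $24$, $30$: one must (i) justify that after Clifford normalisation exactly five (resp.\ three) independent axial reflections remain, rather than the naive eight, and (ii) verify that the finite $\omega_9$ corrections needed in Step~4 can always be assembled from at most the stated budget of $C_2(\mbox{INC})$ gates, independently of $\phi$ and $\delta$. The rest of the argument is bookkeeping along the decomposition above.
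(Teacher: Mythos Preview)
Your high-level strategy---decompose $\Lambda(D)$ into a bounded number of two-level reflections and invoke the $8\log_3(1/\delta)$ single-reflection bound---is exactly the paper's. But the specific decomposition you choose, and your Step~4 mechanism, both diverge from the paper in ways that leave real gaps.

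\textbf{On the count of five.} The paper does not start from $C_1(D)\cdot C_2(D^2)$. It writes, up to global phase,
\[
\Lambda(D)=\mbox{diag}(1,1,1,\phi^*,1,\phi,1,1,1)\cdot\mbox{diag}(1,1,1,1,1,1,(\phi^*)^2,1,\phi^2)\cdot\bigl(\mbox{diag}(\phi^*,1,\phi)\otimes I\bigr),
\]
three balanced two-level diagonals, each equal to a classical swap $\tau$ times one parametric reflection. That is six reflections, but the third factor is a \emph{single-qutrit} gate tensored with $I$, so its classical piece is the one-qutrit $\tau_{|0\rangle,|2\rangle}$, which is Clifford. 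That is how five non-Clifford reflections fall out with no further merging. Your route via $C_1(D)\cdot C_2(D^2)$ produces reflections supported on genuinely two-qutrit axes (e.g.\ $\tau_{|11\rangle,|12\rangle}$), none of which is Clifford, so the naive count is eight and your ``shared axes after Clifford normalisation'' claim would need an explicit Clifford identity to justify any collapse. Having three distinct phase \emph{values} does not by itself merge reflections living on disjoint $2$-planes.

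\textbf{On the $24$/$30$ alternative.} Your $\omega_9$-splitting idea is the wrong mechanism. Shaving an $\omega_9^{k}$ factor off a phase $\phi^j$ changes the angle of the corresponding parametric reflection but does not make it exact or parameter-free, so it cannot drop the count of approximated reflections from five to three. What the paper actually exploits is that its decomposition already contains exactly two \emph{classical} (i.e.\ $\phi$-independent) non-Clifford reflections, namely $\tau_{|10\rangle,|12\rangle}$ and $\tau_{|20\rangle,|22\rangle}$. Each of these is implemented exactly by five $C_2(\mbox{INC})$ gates ($=15$ $P_9$) via the construction in \cite{BCRS}, giving the fixed $30$-$P_9$ widget; the remaining three reflections are the parametric ones, yielding $3\times 8\log_3(1/\delta)$. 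Your decomposition does not separate out such parameter-free pieces, so the $24$/$30$ split does not emerge from it.
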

\begin{proof}
We note that

\begin{equation}
\begin{split}
\Lambda(\mbox{diag}(1,\phi,\phi^2)) = \phi\, \mbox{diag}(1,1,1,\phi^*,1,\phi,1,1,1)\\
 \mbox{diag}(1,1,1,1,1,1,(\phi^*)^2,1,\phi^2) (\mbox{diag}(\phi^*,1,\phi) \otimes I).
\end{split}
\end{equation}

Each of the three factors in this decomposition is a product of two two-level reflections. It is also notable that one particular reflection, the $\tau_{|0\rangle,|2\rangle}$ coming from
$\mbox{diag}(\phi^*,1,\phi) = \tau_{|0\rangle,|2\rangle} (\phi\, |0\rangle \langle 2| + |1\rangle \langle 1| + \phi^*\, |2\rangle \langle 0|)$ is in fact ternary Clifford. Therefore we are having a total of
five non-Clifford reflections in this decomposition, two of which are non-parametric classical reflections.

As per \cite{BCKW} any two-level reflection can be effectively $(\delta/5)$-approximated by  metaplectic circuit with at most $8\, (\log_3(1/\delta) + O(\log(\log(1/\delta))))$ $R_{|2\rangle}$ gates, and this
can be applied to all five non-Clifford reflections.
Alternatively, each of the two classical ones can be represented exactly as per \cite{BCRS} using five $C_2(\mbox{INC})$ or, respectively, $15$ $P_9$ gates.
\end{proof}

Thus implementation of either version of QFT circuit is never a cost surprise in the metaplectic Clifford+$R_{|2\rangle}$ basis.

Although numerologically the $R$-depth of the required approximation circuits is a good factor higher than the $T$-depth of corresponding circuits required in the Clifford+T framework, we need to keep in mind
that the $R_{|2\rangle}$ is significantly easier to execute on a natively metaplectic computer since, unlike the $T$ gate it does not require magic state distillation.

\subsubsection{Implementation of binary and ternary controlled phase gates in the Clifford+$P_9$ basis}

%Unfortunately emulating QFT on a generic ternary computer currently appears more taxing.
At the time of this writing emulation of QFT circuits on a generic ternary computer is not entirely straightforward.

First of all, we currently do not know an efficient direct circuit synthesis method for Householder reflections in the Clifford+$P_9$ basis.
If follows from \cite{Bourgain} that any ternary unitary gate can be also approximated to precision $\delta$ by an ancilla-free Clifford+$P_9$ circuit of depth in $O(\log(1/\delta))$; but  we do not have a
good effective procedure for finding ancilla-free circuits of this sort, neither do we have a clear idea of the practical constant hidden in the $O(\log(1/\delta))$.

As a bridge solution, we show in Appendix \ref{app:sec:R:2} that the requisite magic state $|\psi\rangle$ (see eq. (\ref{eq:def:psi})) for the gate $R_{|2\rangle}$ can be emulated exactly and coherently
by a set of effective repeat-until-success circuits with four ancillary qutrits and expected average $P_9$-count of $27/4$.
Thus we can approximate a required uncontrolled phase gate with an efficient Clifford+$R_{|2\rangle}$ circuit and then transcribe the latter into a corresponding ancilla-assisted probabilistic circuit over
the Clifford+$P_9$ basis. In order to have a good synchronization with the Clifford+$R_{|2\rangle}$ circuit execution it would suffice to have the magic state preparation coprocessor of width somewhat greater than
$27$. Since the controlled phase gates and hence the approximating Clifford+$R_{|2\rangle}$ circuits are performed sequentially in the context of the QFT, this coprocessor is shared across the QFT circuit and
thus the width overhead is bound by a constant.

On the balance, we conclude that ternary execution of the QFT is likely to be more expensive in terms of required non-Clifford units, than, for example, comparable Clifford+T implementation. However the
non-Clifford depth overhead factor over Clifford+T is upper bounded by an $(\alpha+\Theta(1/\log(1/\delta)))$ where $\alpha$ is a small constant.
Such overhead becomes practically valid, however, when hosting period-finding solutions that make heavy use of Fourier transform, such as for example the Beauregard circuit \cite{Beauregard} (see Appendix
\ref{app:sec:alternative:circuits} for a further brief discussion).

\subsection{Comparative cost of ternary emulation vs. true ternary arithmetic}
With the current state of the art ternary arithmetic circuits, modular exponentiation (and hence Shor's period finding) is practically less expensive with emulated binary encoding in low width (e.g. small
quantum computer); however, when $O(m^2 \log(m))$ depth is desired, pure ternary arithmetic allows for width reduction by a factor of $\log_3(2)$ compared to emulated binary circuits, while requiring
essentially the same non-Clifford depth.

\section{Implementing Reflections on Generic Ternary and Metaplectic Topological Quantum Computers} \label{sec:reflections}

State of the art implementation of the three-qubit binary Toffoli gate assumes the availability of the Clifford+T basis \cite{IkeAndMike2000}. It has been known for quite some time cf. \cite{AmyEtAl} that a
Toffoli gate can be implemented ancilla-free using a network of $\mbox{CNOT}$s and $7$ $T^{\pm 1}$ gates. It has been shown in \cite{Tcount}  that this is the minimal $T$-count for ancilla-free implementation
of the Toffoli gate.

In Section \ref{subsec:classical:with:P9} we develop emulations of classical two-level reflections (which generalize Toffoli and Toffoli-like gates) on generic ternary computer endowed with the
Clifford+$P_9$ basis as described in Section \ref{subsec:P9:gate}. We also introduce purely ternary tools necessary for implementing controlled versions of key gates for ternary arithmetic proposed in \cite{BCRS}.
%In particular we point out that any $n$-qutrit classical two-level reflection with $n>1$ on binary data can be implemented by a network of ternary Clifford gates and $6$ $P_9$ gates using $n-2$ clean
ancillas. This implies of course an emulation of the three-qubit Toffoli gate with $6$ $P_9$ gates and one clean ancilla.
%We currently do not have a proof that these emulations are optimal given the number of ancillas they use.

In Section \ref{subsec:reflection:metaplectic}
 we reevaluate the emulation cost assuming a \emph{metaplectic topological quantum computer} (MTQC) with Clifford+$R_{|2\rangle}$ basis as described in Section \ref{subsec:metaplectic:basis}.
In that setup we get two different options both for implementing non-Clifford classical two-way transpositions (including the Toffoli gate) and for circuitizing key gate for proper ternary arithmetic.

One is direct approximation using Clifford+$R_{|2\rangle}$ circuits. The other is based on the $P_9$ gate but it uses \emph{magic state preparation} in the Clifford+$R_{|2\rangle}$ basis instead of magic state
distillation. This is explained in detail in Subsection \ref{subsec:reflection:metaplectic}.
The first option might be ideal for smaller quantum computers. It allows circuits of fixed widths but creates implementation circuits for Toffoli gates with the $R$-count of approximately
$8\,\log_3(1/\delta)$ when $1-\delta$ is the desired fidelity of the Toffoli gate.
The second option supports separation of the cost of the $P_9$ gate into the ``online'' and ``offline'' components (similar to the Clifford+T framework) with the ``online'' component depth in $O(1)$ and the
``offline'' cost offloaded to a state preparation part of the computer, which has the width of roughly $9 \, \log_3(1/\delta)$ qutrits but does not need to remain always coherent.

\subsection{Implementing classical reflections in the Clifford+$P_9$ basis} \label{subsec:classical:with:P9}

The synthesis described here is a generic ternary counterpart of the exact,  constant $T$-count representation of the three-qubit Toffoli gate in the Clifford+T framework.

One distinction of the ternary framework from the binary one is that not all two-qutrit classical gates are Clifford gates.
In particular the $\tau_{|10\rangle,|11\rangle}$ reflection which is a strict emulation of the binary $\mbox{CNOT}$ is not a Clifford gate; neither is the $\tau_{|10\rangle,|01\rangle}$ which which is a
strict emulation of the binary $\mbox{SWAP}$. However, while binary $\mbox{SWAP}$ can be emulated simply as a restriction of the (Clifford) ternary swap on binary subspace, the $\mbox{CNOT}$ cannot be so
emulated.

A particularly important two-qutrit building block is the following non-Clifford gate
\[
C_1(\mbox{INC}) |j\rangle |k\rangle = |j\rangle |(k + \delta_{j,1}) \; {\rm mod} \; 3\rangle.
\]

A peculiar phenomenon in multi-qudit computation (in dimension greater than two) is that a two-qudit classical non-Clifford gate (such as $C_1(\mbox{INC})$) along with the $\mbox{INC}$ gate is universal for
the ancilla-assisted reversible classical computation, cf. \cite{Brennen}, whereas a three-qubit gate, such as Toffoli is needed for the purpose in multi-qubit case.

The following is a slight variation of a circuit from \cite{BCRS}:
\begin{equation*}\nonumber \label{eq:five:C1INC}
\begin{split}
\tau_{|02\rangle, |2,0\rangle} =
\mbox{TSWAP}\, C_1(\mbox{INC})_{2,1} \,  C_1(\mbox{INC})_{1,2} \\  \,  C_1(\mbox{INC})_{2,1}
\,  C_1(\mbox{INC})_{1,2} \,  C_1(\mbox{INC})_{2,1},
\end{split}
\end{equation*}
where $\mbox{TSWAP}$ is the ternary (Clifford) swap gate. This suggests using $5$ copies of $C_1(\mbox{INC})$ gate for implementing a two-level two-qutrit reflection. However, this is
inefficient when we only need to process binary data.

\begin{prop} \label{prop:CNOT:ancilla-free}
The following classical circuit is an exact emulation of the binary $\mbox{CNOT}$ gate on the binary data:
\begin{equation} \label{eq:two:C1INC}
\begin{split}
 \mbox{SUM}_{2,1} (\tau_{|1\rangle,|2\rangle} \otimes \tau_{|1\rangle,|2\rangle})
 \,\mbox{TSWAP} \,   C_1(\mbox{INC})_{2,1} \\ \,  C_1(\mbox{INC}^{\dagger})_{1,2}
 \, (\tau_{|1\rangle,|2\rangle} \otimes \tau_{|1\rangle,|2\rangle}) \mbox{SUM}^{\dagger}_{2,1}
\end{split}
\end{equation}
\end{prop}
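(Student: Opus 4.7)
The plan is to verify the identity by direct computation on the four binary basis inputs $|j,k\rangle$ with $j,k\in\{0,1\}$, checking that each is sent to $|j, j\oplus k\rangle$, i.e., that the circuit restricted to the binary subspace reproduces binary $\mbox{CNOT}$ with qutrit $1$ as control and qutrit $2$ as target. First I would pin down the conventions: read the circuit right-to-left as operator composition, use $\mbox{SUM}_{c,t}\ket{j,k}=\ket{j,j+k \bmod 3}$ with the stated control/target indices, and recall that $C_1(\mbox{INC})$ fires only when the control trit equals exactly $1$ (not $0$ and not $2$). Exactly as in the proof of Proposition \ref{prop:Ys:improved}, no claim is needed about the action outside the binary data subspace beyond reversibility, which is automatic since every factor is unitary.

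The key structural observation I would exploit is that the circuit has the conjugation form $U\,M\,U^{\dagger}$, where
\[
U \;=\; \mbox{SUM}_{2,1}\,\bigl(\tau_{|1\rangle,|2\rangle}\otimes\tau_{|1\rangle,|2\rangle}\bigr),
\]
using that $\tau_{|1\rangle,|2\rangle}\otimes\tau_{|1\rangle,|2\rangle}$ is self-inverse, and
\[
M \;=\; \mbox{TSWAP}\;C_1(\mbox{INC})_{2,1}\;C_1(\mbox{INC}^{\dagger})_{1,2}.
\]
This reduces the verification to three small tasks: apply $U^{\dagger}$ to each of the four binary inputs (the resulting intermediate states will in general have a trit equal to $2$ and leave the binary subspace, which is the whole point); apply the non-Clifford core $M$; and apply $U$ to return to the binary subspace. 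The outer Clifford layers are engineered precisely to route the binary inputs into configurations in which $C_1(\mbox{INC}^{\dagger})_{1,2}$ and $C_1(\mbox{INC})_{2,1}$ activate on exactly those trajectories for which $j=1$, so that the target trit is flipped between $0$ and $1$ as required, while the trajectories with $j=0$ pass through $M$ trivially modulo TSWAP.

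The main obstacle is bookkeeping rather than insight: seven non-commuting two-qutrit gates and four input trajectories, each of which must be tracked through intermediate states that temporarily leave the four-dimensional binary subspace. I would organise the verification as a $4\times 8$ table whose rows are the binary inputs $|00\rangle,|01\rangle,|10\rangle,|11\rangle$ and whose columns are the two-qutrit basis state after each of the seven gate applications. Two sanity checks I would keep active along the way are: (i) after applying $U^{\dagger}$, the four distinct binary inputs must land on four distinct computational basis states (a consequence of unitarity, but a useful typo detector); and (ii) after the final $U$, every trajectory must return to the binary subspace, which is necessary for the circuit to emulate a binary gate at all. Once the table is complete, comparing the last column row-by-row with the $\mbox{CNOT}$ truth table closes the proof.
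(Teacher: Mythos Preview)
Your proposal is correct and is essentially the paper's own proof, which simply reads ``By direct computation.'' Your additional structural observation about the conjugation form $U\,M\,U^{\dagger}$ and the tabular organisation are a helpful elaboration, but the underlying method is identical.
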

\begin{proof}
By direct computation.
\end{proof}

The two non-Clifford gates in this circuit are the $C_1(\mbox{INC})$ and $C_1(\mbox{INC}^{\dagger})$. (To avoid confusion, note that the gate as per Eq. (\ref{eq:two:C1INC}) is no longer an axial reflection on ternary data.)

The $C_1(\mbox{INC})$ is Clifford-equivalent to the $C_1(Z) = diag(1,1,1,1,\omega_3,\omega_3^2,1,1,1)$ gate ($\omega_3 = e^{2 \pi\, i/3}$), and the latter gate is represented exactly by the network
shown in Figure \ref{fig:CZ:from:P9} (up to a couple of local $\tau_{|0\rangle |1\rangle}$ gates and a local $Q$ gate).

\begin{figure*}%[ht]
\begin{tikzpicture}[scale=2.000000,x=1pt,y=1pt]
\filldraw[color=white] (0.000000, -7.500000) rectangle (195.000000, 22.500000);
% Drawing wires
% Line 2: a W
\draw[color=black] (0.000000,15.000000) -- (195.000000,15.000000);
% Line 3: b W
\draw[color=black] (0.000000,0.000000) -- (195.000000,0.000000);
% Done with wires; drawing gates
% Line 6: b P Z a
\draw (12.000000,15.000000) -- (12.000000,0.000000);
\begin{scope}
\draw[fill=white] (12.000000, 0.000000) circle(6.000000pt);
\clip (12.000000, 0.000000) circle(6.000000pt);
\draw (12.000000, 0.000000) node {Z};
\end{scope}
\filldraw (12.000000, 15.000000) circle(1.500000pt);
% Line 7: =
\draw[fill=white,color=white] (30.000000, -6.000000) rectangle (45.000000, 21.000000);
\draw (37.500000, 7.500000) node {$\sim$};
% Line 8: b X
\begin{scope}
\draw[fill=white] (63.000000, -0.000000) +(-45.000000:8.485281pt and 8.485281pt) -- +(45.000000:8.485281pt and 8.485281pt) -- +(135.000000:8.485281pt and 8.485281pt) -- +(225.000000:8.485281pt and 8.485281pt)
-- cycle;
\clip (63.000000, -0.000000) +(-45.000000:8.485281pt and 8.485281pt) -- +(45.000000:8.485281pt and 8.485281pt) -- +(135.000000:8.485281pt and 8.485281pt) -- +(225.000000:8.485281pt and 8.485281pt) -- cycle;
\draw (63.000000, -0.000000) node {$P_9$};
\end{scope}
% Line 9: b P $\mbox{INC}$ a
\draw (87.000000,15.000000) -- (87.000000,0.000000);
\begin{scope}
\draw[fill=white] (87.000000, 0.000000) circle(6.000000pt);
\clip (87.000000, 0.000000) circle(6.000000pt);
\draw (87.000000, 0.000000) node {$\mbox{INC}$};
\end{scope}
\draw[fill=white] (87.000000, 15.000000) circle(1.500000pt);
% Line 10: b X
\begin{scope}
\draw[fill=white] (111.000000, -0.000000) +(-45.000000:8.485281pt and 8.485281pt) -- +(45.000000:8.485281pt and 8.485281pt) -- +(135.000000:8.485281pt and 8.485281pt) -- +(225.000000:8.485281pt and 8.485281pt)
-- cycle;
\clip (111.000000, -0.000000) +(-45.000000:8.485281pt and 8.485281pt) -- +(45.000000:8.485281pt and 8.485281pt) -- +(135.000000:8.485281pt and 8.485281pt) -- +(225.000000:8.485281pt and 8.485281pt) -- cycle;
\draw (111.000000, -0.000000) node {$P_9$};
\end{scope}
% Line 11: b P $\mbox{INC}$ a
\draw (135.000000,15.000000) -- (135.000000,0.000000);
\begin{scope}
\draw[fill=white] (135.000000, 0.000000) circle(6.000000pt);
\clip (135.000000, 0.000000) circle(6.000000pt);
\draw (135.000000, 0.000000) node {$\mbox{INC}$};
\end{scope}
\draw[fill=white] (135.000000, 15.000000) circle(1.500000pt);
% Line 12: b X
\begin{scope}
\draw[fill=white] (159.000000, -0.000000) +(-45.000000:8.485281pt and 8.485281pt) -- +(45.000000:8.485281pt and 8.485281pt) -- +(135.000000:8.485281pt and 8.485281pt) -- +(225.000000:8.485281pt and 8.485281pt)
-- cycle;
\clip (159.000000, -0.000000) +(-45.000000:8.485281pt and 8.485281pt) -- +(45.000000:8.485281pt and 8.485281pt) -- +(135.000000:8.485281pt and 8.485281pt) -- +(225.000000:8.485281pt and 8.485281pt) -- cycle;
\draw (159.000000, -0.000000) node {$P_9$};
\end{scope}
% Line 13: b P $\mbox{INC}$ a
\draw (183.000000,15.000000) -- (183.000000,0.000000);
\begin{scope}
\draw[fill=white] (183.000000, 0.000000) circle(6.000000pt);
\clip (183.000000, 0.000000) circle(6.000000pt);
\draw (183.000000, 0.000000) node {$\mbox{INC}$};
\end{scope}
\draw[fill=white] (183.000000, 15.000000) circle(1.500000pt);
% Done with gates; drawing ending labels
% Done with ending labels; drawing cut lines and comments
% Done with comments
\end{tikzpicture}
\caption{\label{fig:CZ:from:P9} Exact representation of $C_1(Z)$ in terms of $P_9$ gates. }
\end{figure*}
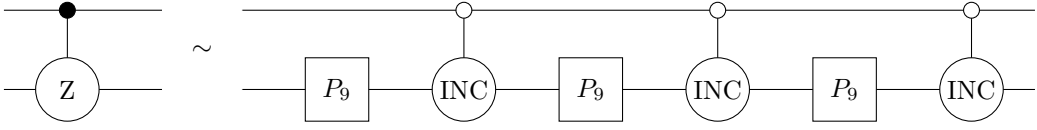

Plugging in corresponding representations of $C_1(\mbox{INC})$ and $C_1(\mbox{INC}^{\dagger})$ into the circuit (\ref{eq:two:C1INC}) we obtain an exact emulation of
$\mbox{CNOT}$ that uses $6$ instances of the $P_9^{\pm 1}$ gate.
\begin{remark} \label{remark:C1Z:depth:one}
By using an available clean ancilla, we can exactly represent the $C_1(Z)$ in $P_9$-depth one. The corresponding circuit is equivalent to one shown in Figure \ref{fig:C1Z:depth:one}. Thus the $\mbox{CNOT}$
gate can be emulated on binary data using a clean ancilla in $P_9$-depth two.
\end{remark}

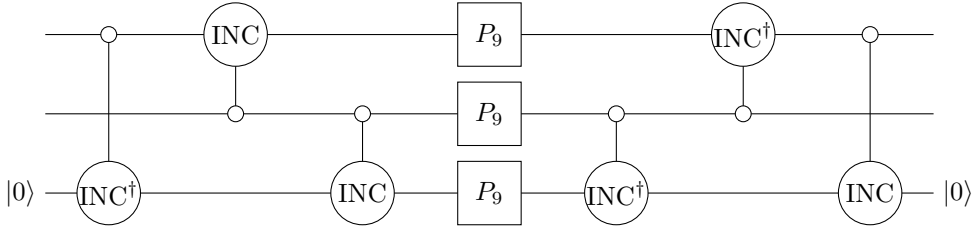
\begin{figure*}%[ht]
\begin{tikzpicture}[scale=2.000000,x=1pt,y=1pt]
\filldraw[color=white] (0.000000, -7.500000) rectangle (168.000000, 37.500000);
% Drawing wires
% Line 3: a W $$
\draw[color=black] (0.000000,30.000000) -- (168.000000,30.000000);
\draw[color=black] (0.000000,30.000000) node[left] {$$$$};
% Line 4: b W $$
\draw[color=black] (0.000000,15.000000) -- (168.000000,15.000000);
\draw[color=black] (0.000000,15.000000) node[left] {$$$$};
% Line 5: c W |0\rangle |0\rangle
\draw[color=black] (0.000000,0.000000) -- (168.000000,0.000000);
\draw[color=black] (0.000000,0.000000) node[left] {$|0\rangle$};
% Done with wires; drawing gates
% Line 7: c P $\mbox{INC}^{\dagger}$ a
\draw (12.000000,30.000000) -- (12.000000,0.000000);
\begin{scope}
\draw[fill=white] (12.000000, 0.000000) circle(6.000000pt);
\clip (12.000000, 0.000000) circle(6.000000pt);
\draw (12.000000, 0.000000) node {$\mbox{INC}^{\dagger}$};
\end{scope}
\draw[fill=white] (12.000000, 30.000000) circle(1.500000pt);
% Line 8: a P $\mbox{INC}$ b
\draw (36.000000,30.000000) -- (36.000000,15.000000);
\begin{scope}
\draw[fill=white] (36.000000, 30.000000) circle(6.000000pt);
\clip (36.000000, 30.000000) circle(6.000000pt);
\draw (36.000000, 30.000000) node {$\mbox{INC}$};
\end{scope}
\draw[fill=white] (36.000000, 15.000000) circle(1.500000pt);
% Line 9: c P $\mbox{INC}$ b
\draw (60.000000,15.000000) -- (60.000000,0.000000);
\begin{scope}
\draw[fill=white] (60.000000, 0.000000) circle(6.000000pt);
\clip (60.000000, 0.000000) circle(6.000000pt);
\draw (60.000000, 0.000000) node {$\mbox{INC}$};
\end{scope}
\draw[fill=white] (60.000000, 15.000000) circle(1.500000pt);
% Line 10: a X
\begin{scope}
\draw[fill=white] (84.000000, 30.000000) +(-45.000000:8.485281pt and 8.485281pt) -- +(45.000000:8.485281pt and 8.485281pt) -- +(135.000000:8.485281pt and 8.485281pt) -- +(225.000000:8.485281pt and 8.485281pt)
-- cycle;
\clip (84.000000, 30.000000) +(-45.000000:8.485281pt and 8.485281pt) -- +(45.000000:8.485281pt and 8.485281pt) -- +(135.000000:8.485281pt and 8.485281pt) -- +(225.000000:8.485281pt and 8.485281pt) -- cycle;
\draw (84.000000, 30.000000) node {$P_9$};
\end{scope}
% Line 11: b X
\begin{scope}
\draw[fill=white] (84.000000, 15.000000) +(-45.000000:8.485281pt and 8.485281pt) -- +(45.000000:8.485281pt and 8.485281pt) -- +(135.000000:8.485281pt and 8.485281pt) -- +(225.000000:8.485281pt and 8.485281pt)
-- cycle;
\clip (84.000000, 15.000000) +(-45.000000:8.485281pt and 8.485281pt) -- +(45.000000:8.485281pt and 8.485281pt) -- +(135.000000:8.485281pt and 8.485281pt) -- +(225.000000:8.485281pt and 8.485281pt) -- cycle;
\draw (84.000000, 15.000000) node {$P_9$};
\end{scope}
% Line 12: c X
\begin{scope}
\draw[fill=white] (84.000000, -0.000000) +(-45.000000:8.485281pt and 8.485281pt) -- +(45.000000:8.485281pt and 8.485281pt) -- +(135.000000:8.485281pt and 8.485281pt) -- +(225.000000:8.485281pt and 8.485281pt)
-- cycle;
\clip (84.000000, -0.000000) +(-45.000000:8.485281pt and 8.485281pt) -- +(45.000000:8.485281pt and 8.485281pt) -- +(135.000000:8.485281pt and 8.485281pt) -- +(225.000000:8.485281pt and 8.485281pt) -- cycle;
\draw (84.000000, -0.000000) node {$P_9$};
\end{scope}
% Line 13: c P $\mbox{INC}^{\dagger}$ b
\draw (108.000000,15.000000) -- (108.000000,0.000000);
\begin{scope}
\draw[fill=white] (108.000000, 0.000000) circle(6.000000pt);
\clip (108.000000, 0.000000) circle(6.000000pt);
\draw (108.000000, 0.000000) node {$\mbox{INC}^{\dagger}$};
\end{scope}
\draw[fill=white] (108.000000, 15.000000) circle(1.500000pt);
% Line 14: a P $\mbox{INC}^{\dagger}$ b
\draw (132.000000,30.000000) -- (132.000000,15.000000);
\begin{scope}
\draw[fill=white] (132.000000, 30.000000) circle(6.000000pt);
\clip (132.000000, 30.000000) circle(6.000000pt);
\draw (132.000000, 30.000000) node {$\mbox{INC}^{\dagger}$};
\end{scope}
\draw[fill=white] (132.000000, 15.000000) circle(1.500000pt);
% Line 15: c P $\mbox{INC}$ a
\draw (156.000000,30.000000) -- (156.000000,0.000000);
\begin{scope}
\draw[fill=white] (156.000000, 0.000000) circle(6.000000pt);
\clip (156.000000, 0.000000) circle(6.000000pt);
\draw (156.000000, 0.000000) node {$\mbox{INC}$};
\end{scope}
\draw[fill=white] (156.000000, 30.000000) circle(1.500000pt);
% Done with gates; drawing ending labels
\draw[color=black] (168.000000,0.000000) node[right] {$|0\rangle$};
% Done with ending labels; drawing cut lines and comments
% Done with comments
\end{tikzpicture}
\caption{\label{fig:C1Z:depth:one} Exact representation of $C_0(Z)$ in $P_9$-depth one. }
\end{figure*}

Thus when depth is the optimization goal, a clean ancilla can be traded for triple compression in non-Clifford depth of ternary emulation of the $\mbox{CNOT}$. (This rewrite is similar in nature to the one
employed in \cite{Jones} for the binary Margolus-Toffoli gate.)

\begin{prop} \label{prop:Toffoli:the:15}
A three-qubit Toffoli gate can be emulated, ancilla-free, by the following three-qutrit circuit:

\begin{equation} \label{eq:Toffoli:the:15}
(\mbox{SUM}^{\dagger} \otimes I) (I \otimes \tau_{|20\rangle,|21\rangle}) (\mbox{SUM} \otimes I)
\end{equation}
This circuit requires $15$ $P_9$ gates to implement.
\end{prop}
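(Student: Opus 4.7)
The plan has two independent components: verifying the stated circuit identity on the binary computational basis, and accounting for the non-Clifford gate count. For correctness, I would trace the three-qutrit circuit on an arbitrary computational basis element $|a,b,c\rangle$ with $a,b,c \in \{0,1\}$. The initial $\mbox{SUM} \otimes I$ acts on qutrits $1,2$ and sends $(a,b)$ to $(a, a+b \; {\rm mod} \; 3)$, so that the middle qutrit ends up in state $|2\rangle$ precisely when $a=b=1$ and in state $|0\rangle$ or $|1\rangle$ otherwise. The two-level reflection $\tau_{|20\rangle,|21\rangle}$ on qutrits $2,3$ then swaps $|0\rangle \leftrightarrow |1\rangle$ on qutrit $3$ exactly on the branch where qutrit $2$ is $|2\rangle$, i.e., when $a=b=1$, and acts as the identity on every other branch. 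Finally, $\mbox{SUM}^{\dagger} \otimes I$ returns qutrit $2$ to $|b\rangle$ and leaves qutrit $3$ untouched. The net map is $|a,b,c\rangle \mapsto |a, b, c \oplus a\,b\rangle$, which is exactly the three-qubit Toffoli.

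For the $P_9$ count, the outer $\mbox{SUM}^{\pm 1}$ factors are Clifford, so the entire non-Clifford budget is carried by the middle reflection. The gate $\tau_{|20\rangle,|21\rangle}$ is a two-qutrit two-level axial reflection between basis vectors involving the label $|2\rangle$, and the circuit template already displayed in the excerpt for $\tau_{|02\rangle,|20\rangle}$ (a ladder of five $C_1(\mbox{INC})$ gates interleaved with the ternary $\mbox{TSWAP}$) transports, by Clifford conjugation of the input/output labels, to an analogous ladder realizing $\tau_{|20\rangle,|21\rangle}$ with the same number of non-Clifford rungs. Each $C_1(\mbox{INC})$ is Clifford-equivalent to $C_1(Z)$, and Figure \ref{fig:CZ:from:P9} exhibits $C_1(Z)$ as an exact ancilla-free composition of three $P_9$ gates. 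Multiplying gives the advertised $5 \times 3 = 15$ $P_9$ count.

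The main obstacle is producing the analogue of the $\tau_{|02\rangle,|20\rangle}$ identity for $\tau_{|20\rangle,|21\rangle}$ without hidden non-Clifford overhead: one must exhibit a ternary Clifford $W$ with $W \, \tau_{|02\rangle,|20\rangle} \, W^{\dagger} = \tau_{|20\rangle,|21\rangle}$ (or equivalently rewrite the five-rung ladder with the appropriate permutation of basis labels) and verify that $W$ itself costs zero non-Clifford gates. A natural choice is to compose a ternary Clifford $\mbox{TSWAP}$ with an axial reflection of the form $\tau_{|0\rangle,|1\rangle}$ acting on the middle qutrit, both of which are clearly Clifford; the bookkeeping then reduces to confirming by direct computation that the conjugated product permutes the basis in the desired way. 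Once this is in hand, the rest of the proof is a routine verification.
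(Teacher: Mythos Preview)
Your correctness argument is essentially the paper's own: trace the binary basis through $\mbox{SUM}\otimes I$, observe the middle qutrit hits $|2\rangle$ exactly when both controls are $1$, apply the two-level reflection, uncompute. Your $P_9$-count argument is also the paper's (implicit) one: the outer $\mbox{SUM}^{\pm 1}$ are Clifford, and the middle two-qutrit axial reflection costs five $C_1(\mbox{INC})$-type rungs at three $P_9$ each, by the cited \cite{BCRS} template.

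One small caveat: the specific Clifford $W$ you propose, namely $\mbox{TSWAP}$ composed with $\tau_{|0\rangle,|1\rangle}$ on one factor, does \emph{not} send the pair $\{|02\rangle,|20\rangle\}$ to $\{|20\rangle,|21\rangle\}$ (you can check all four orderings fail). A conjugator that does work is $W = (I \otimes \mbox{INC})\,\mbox{SUM}_{2,1}$, which sends $|02\rangle \mapsto |20\rangle$ and $|20\rangle \mapsto |21\rangle$ and is manifestly Clifford. Since you already flagged this step as the one needing explicit verification, the fix is minor and the overall plan stands.
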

\begin{proof}
The purpose of the emulation is perform the $|110\rangle \leftrightarrow |111\rangle$ reflection in the binary data subspace.

Having applied the rightmost $\mbox{SUM} \otimes I$ we find that $(\mbox{SUM} \otimes I) |110\rangle = |120\rangle$, $(\mbox{SUM} \otimes I) |111\rangle = |121\rangle$ and we note that the latter two are the
only two transformed binary basis states to have the second trit equal to 2. Therefore the $I \otimes \tau_{|20\rangle,|21\rangle}$ operator affects only these two transformed states. By uncomputing the
$\mbox{SUM} \otimes I$ we conclude the emulation.
\end{proof}

Importantly and typically we can reduce the emulation cost by using a clean ancilla. To this end we first prove the following
\begin{lem} \label{lem:dual:binary:control}
Let $U$ be $n$-qubit unitary and let the binary-controlled $(n+1)$-qubit unitary $C(U)$ be emulated in the binary subspace of an $m$-qutrit register $m>n$. Then one level of binary control can be added to emulate
$C(C(U))$ in an $(m+2)$-qutrit register using $6$ additional $P_9$ gates; one of the new qutrits is a clean ancilla in state $|0\rangle$ and the other new qutrit emulates the binary control.

With one more ancilla the additional $P_9$ gates can be stacked to  $P_9$-depth $2$.
\end{lem}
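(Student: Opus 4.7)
The plan is to apply the standard compute--apply--uncompute pattern, customized to the ternary Clifford+$P_9$ setting. Let $c$ be the new binary control qutrit (initialized to $|c_1\rangle$), let $a$ be the clean ancilla qutrit (initialized to $|0\rangle$), and let $c_{\text{old}}$ denote the binary control qutrit used inside the pre-existing emulation of $C(U)$. I would first construct a subcircuit $V$ on $(c,c_{\text{old}},a)$ that performs the classical AND into the ancilla, i.e.\ sends $|c_1\rangle|c_2\rangle|0\rangle \mapsto |c_1\rangle|c_2\rangle|c_1 c_2\rangle$ on binary inputs, using only $3$ $P_9$ gates. This ``half-Toffoli'' is extracted from the $6$-$P_9$ one-ancilla emulation of the binary Toffoli noted just above Proposition~\ref{prop:Toffoli:the:15}, by splitting that construction into its symmetric compute and uncompute halves.

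With $V$ in hand, the full circuit is $V^{\dagger} \cdot (\text{SWAP}_{c_{\text{old}},a}) \cdot C(U) \cdot (\text{SWAP}_{c_{\text{old}},a}) \cdot V$, where the two ternary SWAPs are Clifford and merely relocate the freshly computed AND bit onto the wire that the pre-existing circuit for $C(U)$ reads as its control. The body of $C(U)$ is untouched, so no gates of the original emulation are modified. A direct case analysis on the four binary settings of $(c_1,c_2)\in\{0,1\}^2$ then verifies that on the binary subspace the net action is $|c_1\rangle|c_2\rangle|\text{data}\rangle \mapsto |c_1\rangle|c_2\rangle|U^{c_1 c_2}\,\text{data}\rangle$, which is precisely $C(C(U))$, and that the ancilla returns to $|0\rangle$. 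The only $P_9$ gates introduced beyond those of the given $C(U)$ circuit are the $3$ gates of $V$ and the $3$ gates of $V^{\dagger}$, giving the claimed total of $6$.

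For the depth statement, I would invoke Remark~\ref{remark:C1Z:depth:one}, which exhibits an exact $P_9$-depth-one representation of $C_1(Z)$ using one additional clean ancilla. Transcribing this rewrite into both $V$ and $V^{\dagger}$ collapses each to a single time-slice of parallel $P_9$'s, so the two compute/uncompute stages together contribute additional $P_9$-depth exactly $2$, as claimed. The principal obstacle in carrying out the plan will be pinning down the precise $3$-$P_9$ implementation of $V$ and checking that on every binary input $(c_1,c_2)$ it returns the two control qutrits to their original values and writes the correct AND bit into $a$ \emph{without residual phase}, so that the paired $V^{\dagger}$ really does close the compute--uncompute loop; once $V$ is nailed down, the remainder of the argument is routine circuit algebra together with a bookkeeping of non-Clifford gates.
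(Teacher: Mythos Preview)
Your compute--apply--uncompute strategy is exactly the paper's approach, and the $P_9$-count and depth bookkeeping come out the same way. However, your proposed source for the $3$-$P_9$ ``half-Toffoli'' $V$ is misattributed: there is no $6$-$P_9$ one-ancilla Toffoli stated prior to Proposition~\ref{prop:Toffoli:the:15} (the $6$-$P_9$ result there is for $\mbox{CNOT}$, and the $12$-$P_9$ one-ancilla Toffoli of Corollary~\ref{corol:Toffoli:the:12} is a \emph{consequence} of the very lemma you are proving, so it cannot be invoked). The paper instead builds $V$ directly and explicitly: apply the Clifford $\mbox{SUM}_{c_1,c_2}$ so that on binary inputs the second control becomes $|2\rangle$ exactly when $c_1=c_2=1$, then apply a single $C_2(\mbox{INC})_{c_2,a}$ (cost $3$ $P_9$) to write the AND into $a$; the uncompute is the inverse pair. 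This is the concrete circuit you flag as the ``principal obstacle,'' and it is simpler than splitting any Toffoli construction.

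Two further simplifications relative to your write-up: the paper does not require $V$ to restore both control qutrits (the new control $c_2$ is temporarily left in state $|c_1+c_2\rangle$, which is harmless since the $C(U)$ block does not touch it), and it does not introduce the SWAPs---it simply wires the ancilla $a$ into the control slot of the given $C(U)$ circuit. Your SWAP-based variant is correct but adds avoidable scaffolding. The depth argument via Remark~\ref{remark:C1Z:depth:one} is fine and matches the paper.
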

\begin{proof}
We prove the lemma by explicitly extending the emulation circuit.
Let $c_1$ be a label of the qutrit emulating the control wire of $C(U)$. Let $c_2$ be a label of the new qutrit to emulate the new control wire. Let $a$ be the label of the new clean ancilla.

Apply the sequence of gates $C_2(INC)_{c2,a} \mbox{SUM}_{c1,c2}$ (right to left) then use the ancilla $a$ as the control in the known emulation of $C(U)$, then unentangle:
$\mbox{SUM}_{c1,c2}^{\dagger} C_2(INC)_{c2,a}^{\dagger}$.

The circuit applies correct emulation to the binary subspace of the $(m+2)$-qutrit register.
The correctness is straightforward: within the binary subspace $\mbox{SUM}_{c1,c2}$ generates $|2\rangle$ on the $c_2$ wire. The $C_2(INC)_{c2,a}$ promotes the ancilla to $|1\rangle$ if and only if
$|c_1,c_2\rangle = |11\rangle$. Therefore $U$ is triggered only by the latter basis element, which is the definition of the dual binary control.

The cost estimate follows from the fact that $C_2(INC)_{c2,a}$ and its inverse take $3$ $P_9$ gates each.
\end{proof}

\begin{corol} \label{corol:Toffoli:the:12}
Three-qubit Toffoli gate can be emulated in four qutrits (allowing one clean ancilla) with $12$ $P_9$ gates at $P_9$-depth of $4$.
\end{corol}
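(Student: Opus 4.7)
The plan is to combine the ancilla-free $\mbox{CNOT}$ emulation of Proposition~\ref{prop:CNOT:ancilla-free} with the dual-control construction of Lemma~\ref{lem:dual:binary:control}. By Proposition~\ref{prop:CNOT:ancilla-free}, the CNOT emulation contains exactly two non-Clifford gates, $C_1(\mbox{INC})$ and $C_1(\mbox{INC})^{\dagger}$. Each of these is Clifford-equivalent to $C_1(Z)$ and therefore admits an exact ancilla-free representation by three $P_9$ gates via Figure~\ref{fig:CZ:from:P9}, so the CNOT emulation uses $6$ $P_9$ gates.

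Next I would invoke Lemma~\ref{lem:dual:binary:control} on this CNOT emulation. The lemma adds one level of binary control by sandwiching the inner emulation between $\mbox{SUM}_{c_1,c_2}$, $C_2(\mbox{INC})_{c_2,a}$ and their inverses, at the cost of one clean ancilla $a$ and two additional non-Clifford gates $C_2(\mbox{INC})^{\pm 1}$. Each $C_2(\mbox{INC})$ is in turn a $3$-$P_9$ gate, so the added cost is $6$ $P_9$ gates. The total is $6+6=12$ $P_9$ gates realized on the four qutrits $\{c_1,c_2,\text{target},a\}$, and by construction the combined circuit acts as the three-qubit Toffoli on the binary data subspace.

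For the depth claim, the non-Clifford content of the circuit consists of exactly four sequential blocks ($C_2(\mbox{INC})$, $C_1(\mbox{INC})$, $C_1(\mbox{INC})^{\dagger}$, $C_2(\mbox{INC})^{\dagger}$) separated by Clifford scaffolding ($\mbox{SUM}$s and local $\tau$ reflections). By Remark~\ref{remark:C1Z:depth:one}, each such $C_j(\mbox{INC})^{\pm 1}$ block can be compressed into $P_9$-depth one using a single workspace wire that is clean at the start of the block and returned to $\ket{0}$ at its end. Stacking the four blocks gives $P_9$-depth $4$.

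The delicate point I anticipate is the ancilla accounting: the statement allows only one clean ancilla beyond the three data wires, yet Remark~\ref{remark:C1Z:depth:one} calls for a transient workspace wire in addition to the gate's own qutrits. I would resolve this by showing that the lemma's ancilla $a$ is itself clean (in $\ket{0}$) before the first outer $C_2(\mbox{INC})$ block and after the last one, so it supplies the workspace for the inner $C_1(\mbox{INC})^{\pm 1}$ blocks (where $a$ acts as control and an idle data wire can serve as workspace), while during the outer blocks one of the $c_1, c_2$ wires plays the workspace role since it is not touched by the non-Clifford action. A careful case analysis of which wire is ``idle'' at each of the four blocks then certifies that the same four-qutrit register realizes all four depth-one substitutions simultaneously, yielding $12$ $P_9$ gates at $P_9$-depth $4$.
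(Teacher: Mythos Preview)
Your $P_9$-count and qutrit-count argument is correct and matches the paper's (one-sentence) proof exactly: write $\mbox{Toffoli} = CC(\mbox{NOT})$, cite Proposition~\ref{prop:CNOT:ancilla-free} for the ancilla-free $6$-$P_9$ CNOT emulation, and invoke Lemma~\ref{lem:dual:binary:control} for the extra $6$ $P_9$ gates and the single clean ancilla.

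For the $P_9$-depth claim you go further than the paper (which does not spell out the depth argument at all), correctly isolating the four sequential $C_j(\mbox{INC})^{\pm 1}$ blocks and appealing to Remark~\ref{remark:C1Z:depth:one} for each. The gap is in your ancilla reuse. The depth-one circuit of Figure~\ref{fig:C1Z:depth:one} genuinely requires its workspace wire to start in $|0\rangle$: the global phase acquired from the $P_9^{\otimes 3}$ layer depends on all three trit values, so a workspace carrying data would introduce a spurious data-dependent phase even though that wire is returned to its initial state by the surrounding $\mbox{SUM}$s. In the Lemma~\ref{lem:dual:binary:control} construction the ancilla $a$ is never idle---it is the target of both outer $C_2(\mbox{INC})_{c_2,a}$ blocks and the control of the inner CNOT on $\{a,t\}$---so it cannot serve as workspace for any of the four blocks. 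The wires that \emph{are} idle in each block ($c_1$, $t$ during the outer blocks; $c_1$, $c_2$ during the inner blocks) hold data and are not in $|0\rangle$. Your proposed ``case analysis'' would therefore have to establish something stronger than mere idleness: either a variant of the Remark~\ref{remark:C1Z:depth:one} circuit that tolerates a binary-valued (rather than zeroed) workspace, or a rearrangement of the Clifford scaffolding so that some wire is genuinely brought to $|0\rangle$ before each $P_9$ layer. As written the proposal does not close this, though in fairness the paper itself states the depth bound without justification.
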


Indeed $\mbox{Toffoli} = CC(\mbox{NOT})$ and $C(\mbox{NOT})$ takes $6$ $P_9$ gates with no ancillas to emulate as per Proposition \ref{prop:CNOT:ancilla-free}.

\begin{corol} \label{corol:ctrl:Toffoli:18}
Four-qubit binary-controlled Toffoli gate $CCC(\mbox{NOT})$ can be emulated

1) in six qutrits (allowing two clean ancillas) with $18$ $P_9$ gates at $P_9$-depth of $6$.

2) in five qutrits (allowing one clean ancilla) with $21$ $P_9$ gates.
\end{corol}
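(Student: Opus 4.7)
The plan is to obtain $CCC(\mbox{NOT})$ by taking an existing emulation of the three-qubit Toffoli gate $CC(\mbox{NOT})$ and grafting on a third level of binary control via Lemma \ref{lem:dual:binary:control}. Two different base Toffoli emulations are available from the preceding results, and the two parts of the corollary correspond to which one we pick: the ancilla-assisted twelve-$P_9$ emulation from Corollary \ref{corol:Toffoli:the:12} for part (1), and the ancilla-free fifteen-$P_9$ emulation from Proposition \ref{prop:Toffoli:the:15} for part (2). In both cases Lemma \ref{lem:dual:binary:control} adds exactly $6$ $P_9$ gates (which can be stacked to $P_9$-depth $2$ when a further ancilla is available) together with one new emulating control qutrit and at most one new clean ancilla.

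For part (1), I would start from the $12$ $P_9$-gate, $P_9$-depth $4$ emulation of $CC(\mbox{NOT})$ on four qutrits (three data plus one clean ancilla) given by Corollary \ref{corol:Toffoli:the:12}, and then invoke Lemma \ref{lem:dual:binary:control} with the outermost control wire promoted to emulated binary control. The lemma contributes $6$ additional $P_9$ gates; the second promised ancilla will be used as the ``depth-$2$ stacking'' ancilla in the lemma so that the added block has $P_9$-depth $2$. This gives a total of $12+6=18$ $P_9$ gates at combined $P_9$-depth $4+2=6$, using $4 + 1 + 1 = 6$ qutrits, of which two are clean ancillas, as claimed.

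For part (2), I would instead take the ancilla-free emulation of the Toffoli from Proposition \ref{prop:Toffoli:the:15} with its $15$ $P_9$ gates on three qutrits, and then apply Lemma \ref{lem:dual:binary:control} once, introducing the new control qutrit and a single clean ancilla at a cost of $6$ further $P_9$ gates. The total is $15+6=21$ $P_9$ gates on $3 + 1 + 1 = 5$ qutrits with just one clean ancilla, matching the statement.

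The only delicate point in carrying this out is checking the bookkeeping for Lemma \ref{lem:dual:binary:control}: one must verify that the clean ancilla it introduces really is returned to $|0\rangle$ after the $\mbox{SUM}_{c_1,c_2}^{\dagger} C_2(\mbox{INC})_{c_2,a}^{\dagger}$ cleanup, and that in part (1) the inner ancilla from Corollary \ref{corol:Toffoli:the:12} and the outer ancilla from Lemma \ref{lem:dual:binary:control} are genuinely independent so their clean states can coexist. Both are immediate from the constructions of the two ingredients and require no further computation, so the corollary follows.
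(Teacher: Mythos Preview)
Your proposal is correct and follows exactly the paper's approach: apply Lemma \ref{lem:dual:binary:control} to Corollary \ref{corol:Toffoli:the:12} for part (1) and to Proposition \ref{prop:Toffoli:the:15} for part (2), yielding $12+6=18$ and $15+6=21$ $P_9$ gates respectively on the stated number of qutrits. The only point to tighten is the depth bookkeeping in part (1): the ``one more ancilla'' the lemma needs for its depth-$2$ stacking is \emph{in addition to} the lemma's own ancilla $a$, so to stay at six qutrits you should note explicitly that Corollary \ref{corol:Toffoli:the:12}'s inner ancilla is clean during the outer $C_2(\mbox{INC})^{\pm 1}$ steps and can therefore be reused as that extra stacking ancilla.
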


\begin{proof}

For 1),  we emulate using Lemma \ref{lem:dual:binary:control} and Corollary \ref{corol:Toffoli:the:12}.

For 2), we emulate using Lemma \ref{lem:dual:binary:control} and Proposition \ref{prop:Toffoli:the:15}
\end{proof}

We will further use the three-qutrit ``Horner'' gate $\Lambda(\mbox{SUM})$:
\[
\Lambda(\mbox{SUM}) |i,j,k\rangle = |i,j,k+i\,j \; \mod 3 \;\rangle, \, i,j,k \in \{0,1,2\}
\]
\noindent as a tool for adding levels of control to emulated binary and true ternary gates.

Recall from \cite{BCRS}, Figure 18 and discussion, that the best-known non-Clifford cost of $\Lambda(\mbox{SUM})$ is ${4}$ $P_9$ gates at $P_9$-depth ${2}$.

We now proceed to implement the completely ternary four-qutrit gate $\Lambda \Lambda (\mbox{SUM})$ using the same constuction as above

\begin{prop} \label{prop:Lambda:Lambda:SUM}
Label primary qutrits with $1,2,3,4$ and label a clean ancillary qutrit in state $|0\rangle$ with $5$.
Then the following circuit implements the $\Lambda \Lambda (\mbox{SUM})$ gate on the primary qutrits:
\begin{equation} \label{eq:Lambda:Lambda:SUM}
\Lambda(\mbox{SUM})_{1,2,5}^{\dagger} \Lambda(\mbox{SUM})_{3,5,4} \Lambda(\mbox{SUM})_{1,2,5}
\end{equation}
This circuit requires $12$ $P_9$ gates to implement.
\end{prop}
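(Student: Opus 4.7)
The plan is to verify the identity by direct computation on computational basis states, tracking the state of the five qutrits through the three stages of the circuit, and then to read off the $P_9$-cost from the known cost of a single $\Lambda(\mbox{SUM})$.

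First I would note that $\Lambda\Lambda(\mbox{SUM})$ is defined on four qutrits by
\[
\Lambda\Lambda(\mbox{SUM})\,|c_1,c_2,j,k\rangle = |c_1,c_2,j,\,k + c_1 c_2 j \bmod 3\rangle,
\]
which follows from unrolling the definition $\Lambda(U)|c\rangle|t\rangle = |c\rangle\otimes U^c|t\rangle$ twice together with $\mbox{SUM}^c|j,k\rangle = |j,k+cj\bmod 3\rangle$. So it suffices to show that the circuit in Eq.~(\ref{eq:Lambda:Lambda:SUM}) implements exactly this mapping on qutrits $1,2,3,4$ while leaving qutrit $5$ in state $|0\rangle$.

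Next I would propagate a generic basis state $|i_1,i_2,i_3,i_4\rangle\otimes|0\rangle$ through the circuit right to left, reusing the single-step identity $\Lambda(\mbox{SUM})_{a,b,c}|x_a,x_b,x_c\rangle = |x_a,x_b,x_c + x_a x_b \bmod 3\rangle$. The rightmost $\Lambda(\mbox{SUM})_{1,2,5}$ loads $i_1 i_2 \bmod 3$ onto the ancilla, giving $|i_1,i_2,i_3,i_4,i_1 i_2\rangle$. The middle $\Lambda(\mbox{SUM})_{3,5,4}$ adds $i_3\cdot(i_1 i_2) = i_1 i_2 i_3 \bmod 3$ to qutrit $4$, producing $|i_1,i_2,i_3,\,i_4 + i_1 i_2 i_3,\,i_1 i_2\rangle$. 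The leftmost $\Lambda(\mbox{SUM})_{1,2,5}^{\dagger}$ subtracts $i_1 i_2$ from the ancilla, restoring it to $|0\rangle$ and leaving the primary register in the state $|i_1,i_2,i_3,\,i_4 + i_1 i_2 i_3\rangle$, which matches $\Lambda\Lambda(\mbox{SUM})$ as desired.

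For the cost, I would invoke the earlier remark (from \cite{BCRS}, Fig.~18) that $\Lambda(\mbox{SUM})$ admits an implementation with $4$ $P_9$ gates. The circuit of Eq.~(\ref{eq:Lambda:Lambda:SUM}) uses three copies of $\Lambda(\mbox{SUM})^{\pm1}$ (the dagger version has the same non-Clifford cost), for a total of $12$ $P_9$ gates.

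There is no real obstacle here: the result is essentially the standard ``compute, use, uncompute'' pattern, with the observation that multiplication modulo $3$ is associative so that loading $i_1 i_2$ into the ancilla and then multiplying by $i_3$ via a second $\Lambda(\mbox{SUM})$ yields the triple product $i_1 i_2 i_3$. The only point that deserves care is the need for the ancilla to start in $|0\rangle$; if it were in an arbitrary basis state $a$, the middle step would add $i_3(a + i_1 i_2)$ to qutrit $4$ and the net action would no longer factor as $\Lambda\Lambda(\mbox{SUM})$. This assumption is explicitly in the hypothesis, so the verification above is complete.
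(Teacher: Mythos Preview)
Your proposal is correct and follows exactly the approach the paper intends: the paper does not spell out a separate proof for this proposition, but remarks it uses ``the same construction as above'' and gives an explicit direct-computation proof for the analogous Proposition~\ref{prop:C:Lambda:SUM}. Your compute--use--uncompute trace on basis states and the $3\times 4 = 12$ $P_9$-count are precisely what that template yields here.
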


However, as follows from discussion in Sections \ref{subsec:carry:lookahead} and \ref{subsec:ripple:carry}, controlled ternary modular exponentiation also relies on another form of the doubly-controlled
$\mbox{SUM}$ gate: the strictly controlled Horner gate $C_f(\Lambda(\mbox{SUM})), f\in \{0,1,2\}$ where the Horner gate  $\Lambda(\mbox{SUM})$ is activated only by the basis state $|f\rangle$ of the
top qutrit.

A certain implementation of the $C_f(\mbox{SUM})$ has been developed in \cite{BCRS} costing $15$ $P_9$ gates.

The following Proposition explains how to insert another level of ternary control using a cascade of Horner gates again

\begin{prop} \label{prop:C:Lambda:SUM}
Label primary qutrits with $1,2,3,4$ and label a clean ancillary qutrit in state $|0\rangle$ with $5$.
Then the following circuit implements the $C_f(\Lambda(\mbox{SUM})))$ gate on the primary qutrits:
\begin{equation} \label{eq:C:Lambda:SUM}
\Lambda(\mbox{SUM})_{2,3,5}^{\dagger} C_f(\mbox{SUM})_{1,5,4} \Lambda(\mbox{SUM})_{2,3,5}
\end{equation}
This circuit takes $23$ $P_9$ gates to implement.

With one additional ancilla the circuit can be restacked to have $P_9$-depth of $9$.
\end{prop}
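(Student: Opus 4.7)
My plan is to verify correctness of the sandwich circuit by direct evaluation on a generic computational basis state, and then assemble the cost estimate from the costs of the three building blocks stated earlier in the paper.

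First I would set up the verification. Take an arbitrary basis state $|c,i,j,k\rangle$ on the primary wires $1,2,3,4$ together with the clean ancilla $|0\rangle$ on wire $5$. The goal is to show that the circuit sends this to $|c,i,j,k + i\,j\,\delta_{c,f} \bmod 3\rangle \otimes |0\rangle$, which is exactly the definition of $C_f(\Lambda(\mbox{SUM}))$ acting on the primary register (wires $2,3$ serve as the two ternary sources for the Horner product, wire $4$ is the accumulator, and wire $1$ carries the strict control). Applying the first $\Lambda(\mbox{SUM})_{2,3,5}$ (ternary control on wire $2$, target SUM on $(3,5)$) writes $i\,j \bmod 3$ onto the ancilla while leaving wires $3$ and the rest untouched, producing $|c,i,j,k,i\,j\rangle$. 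Applying $C_f(\mbox{SUM})_{1,5,4}$, which is activated only when wire $1$ is in state $|f\rangle$, adds the ancilla value to wire $4$ conditional on $c=f$, yielding $|c,i,j,k + i\,j\,\delta_{c,f},i\,j\rangle$. The final $\Lambda(\mbox{SUM})_{2,3,5}^{\dagger}$ subtracts $i\,j$ from wire $5$, restoring the ancilla to $|0\rangle$. This is a completely mechanical check, and the only mildly subtle point is that the $\Lambda(\mbox{SUM})$ conjugation cleans the ancilla deterministically because wires $2$ and $3$ are not touched by the middle gate.

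Next I would assemble the $P_9$-count. By the remark cited from \cite{BCRS} (Figure 18 discussion), each $\Lambda(\mbox{SUM})$ requires $4$ $P_9$ gates at $P_9$-depth $2$, and the strictly controlled gate $C_f(\mbox{SUM})$ was shown earlier in \cite{BCRS} to cost $15$ $P_9$ gates. Summing the three factors in (\ref{eq:C:Lambda:SUM}) gives $4 + 15 + 4 = 23$, establishing the $P_9$-count claim.

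For the depth statement, I would invoke the ancilla-assisted stackings already in play: each $\Lambda(\mbox{SUM})$ contributes $P_9$-depth $2$, and with the extra ancilla the implementation of $C_f(\mbox{SUM})$ from \cite{BCRS} can be parallelized to $P_9$-depth $5$, yielding total depth $2 + 5 + 2 = 9$. The three blocks must be executed sequentially because the middle block reads wire $5$, which is produced and then cleaned by the outer two, so there is no further depth parallelism to exploit across the sandwich.

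I do not expect a genuine obstacle here; the only bookkeeping hazard is to make sure the ancilla is truly returned to $|0\rangle$ so that the circuit acts as a pure four-qutrit unitary on the primary register, and that the strict control semantics of $C_f$ matches the intended $\delta_{c,f}$ activation pattern. Both are immediate from the definitions in Section \ref{subsec:shor:top:level} and the binary/ternary control conventions of Section on controls.
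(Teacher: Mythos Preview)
Your proof is correct and follows essentially the same approach as the paper: direct verification on a computational basis state, then summing the component costs $15 + 2\times 4 = 23$. Your depth breakdown $2+5+2=9$ is a slightly more explicit justification than the paper gives (the paper's proof does not spell out the depth claim), but it is consistent with the restacking-to-$P_9$-width-$3$ trick used throughout.
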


Let us give a direct proof for transparency
\begin{proof}
By definition, given a four-qutrit state $|j,k,\ell,m\rangle$, we must have $C_f(\Lambda(\mbox{SUM})))|j,k,\ell,m\rangle = |j,k,\ell,m+\delta_{j,f} \, k \, \ell\rangle$.

After applying the rightmost Horner gate to the clean ancilla we have the ancilla in the $|k \, \ell\rangle$ state. The correctness of (\ref{eq:C:Lambda:SUM}) now follows from the definition of
$C_f(\mbox{SUM})_{1,5,4}$.

The best known circuitry for the components yield the cost of $15+2\times 4 = 23$ $P_9$ gates.
\end{proof}

\begin{comment}
TODO:THESE MIGHT BE STILL IMPORTANT
We start with some observations, that codify some useful commutants involving the $P_9$ gate:

\begin{observ} \label{observ:P9:commute}
1) $\mbox{INC}\,P_9 = P_9 \, (Q_0 \, \mbox{INC} $ up to global phase

2) $\tau_{|1\rangle,|2\rangle} \, P_9 = P_9^2 \,(Q_2^{\dagger} \, \tau_{|1\rangle,|2\rangle}$ up to global phase

3)  $\mbox{INC}^{\dagger} \,P_9^2  = P_9^2 \, (Q_2^{-2} \, \mbox{INC}^{\dagger}$ up to global phase
\end{observ}
These relations are established by direct computation.
\end{comment}

\subsection{Implementing classical reflections in metaplectic Clifford+$R_{|2\rangle}$ basis} \label{subsec:reflection:metaplectic}

It has been shown in \cite{BCKW} that, given a small enough $\delta>0$ any $n$-qutrit two-level Householder reflection can be approximated effectively and efficiently to precision $\delta$ by a
Clifford+$R_{|2\rangle}$ circuit containing at most $8\,\log_3(1/\delta) + O(\log(\log(1/\delta)))+ O((2+\sqrt{5})^n)$ instances of the $R_{|2\rangle}$ gate.
In particular, when $n=1$ the asymptotic term $O((2+\sqrt{5})^n)$ resolves to exactly $1$ and when $n=2$ it resolves to exactly $4$. In both cases it is safe to merge this term with the
$O(\log(\log(1/\delta)))$ term.

The single-qutrit $P_9$ gate is the composition of the ternary Clifford gate $\tau_{|0\rangle,|2\rangle}$ and the Householder reflection
$\omega_9 \, |0\rangle \langle 2| + |1\rangle \langle 1| + \omega_9^{-1} \, |2\rangle \langle 0|$.
The two-qutrit gate  $\mbox{CNOT} = \tau_{|10\rangle, |11\rangle}$ is by itself a two-level Householder reflection $R_{(|10\rangle- |11\rangle)/\sqrt{2}}$.
Similarly, $\mbox{Toffoli} = \tau_{|110\rangle, |111\rangle} = R_{(|110\rangle- |111\rangle)/\sqrt{2}}$.
Therefore, our results apply and we have efficient strict emulations of $P_9$,  $\mbox{CNOT}$ and Toffoli gates at depths that are logarithmic in $1/\delta$ and in practice are roughly
$8\,\log_3(1/\delta)$ in depth.

We note that the direct metaplectic approximation of classical reflections is significantly more efficient than the circuits expressed in $C_f(\mbox{INC})$ gates (as each of the latter have to be approximated).

Let us briefly review such direct approximation in the context of ternary arithmetic in ternary encoding. As per \cite{BCRS}, the generalized carry gate of the ternary ripple-carry additive shift contains two
classical non-Clifford reflections (\cite{BCRS}, Fig. 5) that can be represented at fidelity $1-\delta$ by a metaplectic circuit of $R$-count at most ${16}\,\log_3(1/\delta)$.

The same source implies that the carry status merge widget $\mathcal{M}$ which is key in the carry lookahead additive shift is Clifford-equivalent to a $C_f(\mbox{SUM})$ which is easily decomposed in four
classical two-level reflections and thus can be represented at fidelity $1-\delta$ by a metaplectic circuit of $R$-count at most ${32}\,\log_3(1/\delta)$.

A sufficient per-gate precision $\delta$ may be found in $O(1/(d\,\log(n)))$ where $d$ is the depth of the modular exponentiation circuit expressed in non-Clifford reflections. Therefore, injecting metaplectic
circuits in place of reflections creates an overhead factor in $\Theta(\log(d) \log(\log(n)))$. While being asymptotically moderate, such overhead could be a deterrent when factoring very large numbers.
This motivates us to explore constant-depth approximations of classical reflections as in the next section.

\subsection{Constant-depth implementation of $\mbox{CNOT}$ and $C_f(\mbox{INC})$ on ternary quantum computers.}

We demonstrate that integer arithmetic on a ternary quantum computer can be efficient both asymptotically and in practice.
We build on Section \ref{subsec:classical:with:P9} that describes exact emulation of $\mbox{CNOT}$ with $6$ instances of the $P_9$ gate.
A core result in \cite{CampbellEtAl} implies that the $P_9$ gate can be executed exactly by a deterministic state injection circuit using one ancilla, one measurement and classical feedback,
\emph{provided} availability of the ``magic'' ancillary state
\[
\mu = \omega_9^{-1} \, |0\rangle  + |1\rangle + \omega_9 \, |2\rangle.
\]
The state injection circuit is given in Figure \ref{fig:mu:state:injection}.

Assuming, hypothetically, that the magic state $\mu$ can be prepared in a separate ancillary component of the computer (then teleported), we get, a separation of the quantum complexity into ``online'' and ``offline'' components -
similar to one employed in the binary Clifford+T network.

We call these components the \emph{execution} and \emph{preparation} components. We use the term execution depth somewhat synonymously to ``logical circuit depth''.
The execution part of the $P_9$ state injection, hence $\mbox{CNOT}$,  Toffoli emulations as well as implementation of $C_f(\mbox{INC})$ are constant depth. The magic preparation can run separately in parallel
when the preparation code is granted enough width.

%The main bottleneck of this solution is synchronization between the ``online'' and ``offline'' components. The throughput of the ``offline'' component must be sufficiently high in order to yield the required
%instances of the magic state on a quantum clock schedule.

In the context of the binary Clifford+T network, assuming the required fidelity of the $T$ gate is $1-\delta, \delta>0$, there is a choice of magic state distillation solutions. For comparison we have selected
a particular one protocol described in \cite{BravyiKitaev}. At the top level, it can be described as a quantum code of depth in $O(\log(\log(1/\delta)))$ and width of approximately
$O(\log^{\log_3(15)}(1/\delta))$.
The newer protocol in \cite{BravyiHaah} achieves asymptotically smaller width in
$O(\log^{\gamma(k)}(1/\delta))$ where $k$ is an error correction hyperparameter, and $\gamma(k) \rightarrow \log_2(3)$ when $k \rightarrow \infty$. However the \cite{BravyiHaah} is a tradeoff rather than a win-win over
\cite{BravyiKitaev} in terms of practical width value.

In comparison, the magic state distillation for a generic ternary quantum computer, described in \cite{CampbellEtAl} maps onto quantum processor of depth in $O(\log(\log(1/\delta)))$ and width of
$O(\log^{3}(1/\delta))$.
Therefore the preparation of a magic state by distillation requires asymptotically larger width than the one for Clifford+T basis.
%(As a concrete example, it can be about 5 times larger when the target $\delta$ is around $10^{-21}$ and fidelity of raw magic states is $90\%$ in both cases.)

We observe that the prepartion width is asymptotically better at $O(\log(1/\delta))$ and significantly better in practice when the target ternary computer is MTQC. Since the MTQC implements the universal Clifford+$R_{|2\rangle}$
basis that does not require magic state distillation, the instances of the magic state $\mu$ can be prepared on a much smaller scale.
\begin{observ} {(see \cite{5isNew8}, Section 4)}
An instance of magic state $\mu$ can be prepared at fidelity $1-\delta$ by a  Clifford+$R_{|2\rangle}$ circuit of non-Clifford depth in $r(\delta) = 6\, \log_3(1/\delta) + O(\log(\log(1/\delta)))$.
\end{observ}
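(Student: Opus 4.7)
The plan is to reduce the preparation of $\mu$ to the approximate synthesis of a single non-Clifford single-qutrit unitary and then invoke the metaplectic compilation results of \cite{5isNew8}. The key identity is $\mu = \sqrt{3}\,P_9\,H\,\ket{0}$, where $H$ is the (Clifford) ternary Hadamard. Consequently, any Clifford+$R_{|2\rangle}$ circuit that approximates $P_9$ to precision $\delta$, applied to the stabilizer state $H\,\ket{0}$, yields a copy of $\mu$ at fidelity $1-\delta$; the prefixed $H$ is Clifford and therefore contributes nothing to the non-Clifford depth. So the claim reduces to a single-qutrit synthesis bound: approximate $P_9$ in the metaplectic basis to precision $\delta$ with non-Clifford depth $6\,\log_3(1/\delta) + O(\log\log(1/\delta))$.

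Next I would bring $P_9$ into the normal form directly covered by the synthesis theorem of \cite{5isNew8}. Up to multiplication by the Clifford diagonal phase gate $Q$, the gate $P_9$ is equivalent to a balanced one-parameter phase gate $\mbox{diag}(e^{-i\theta},1,e^{i\theta})$ with $\theta = 2\pi/9$, and this balanced phase gate is Clifford-equivalent (via conjugation by $\tau_{\ket{0},\ket{2}}$) to a two-level Householder reflection whose axis is determined by the single real parameter $\theta$. For a one-qutrit ($n=1$) two-level reflection the compilation result of \cite{5isNew8} gives an effective Clifford+$R_{|2\rangle}$ approximation of $R$-count $C\,\log_3(1/\delta)+O(\log\log(1/\delta))$, with $C\in[5,8]$ depending on how many real parameters of the target reflection are free. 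Because $P_9$ has a single real parameter (rather than the two that a generic Householder axis carries), one enters the synthesis routine in the specialization that yields the constant $C=6$, as stated in the observation. Composing the Clifford $H$ with this synthesized approximation of $P_9$ and applying to $\ket{0}$ produces $\mu$ at fidelity $1-\delta$.

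The main obstacle is extracting this tight constant from the generic result. The worst-case bound $C\le 8$ in \cite{5isNew8} is proved by an unconstrained number-theoretic approximation search in the cyclotomic ring underlying the metaplectic basis; to obtain $C=6$ one has to exploit the structural fact that the two Householder parameters for $P_9$ collapse to a single phase once the gate is written in balanced form, so only one approximation step is needed at the finest level rather than two. Once this specialization is verified and the $O(\log\log(1/\delta))$ overheads from the Ross--Selinger-style rounding step and from triangle-inequality composition of errors are absorbed, the Clifford pre-factor $H$ is free, and the bound $r(\delta) = 6\,\log_3(1/\delta) + O(\log\log(1/\delta))$ on the non-Clifford depth of the preparation circuit follows.
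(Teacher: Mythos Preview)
The paper does not give its own proof of this observation; it is stated with a bare citation to \cite{5isNew8}, Section~4, and used as a black box. So there is no in-paper argument to compare against beyond the surrounding context. That said, your reduction is sound and in fact matches the way the paper itself handles $P_9$ elsewhere: the identity $\mu \propto P_9\,H\,\ket{0}$ is correct, and the paper explicitly notes (Section on metaplectic reflections) that $P_9$ is the composition of the Clifford gate $\tau_{\ket{0},\ket{2}}$ with the two-level Householder reflection $\omega_9\,\ket{0}\bra{2}+\ket{1}\bra{1}+\omega_9^{-1}\,\ket{2}\bra{0}$. So your strategy of reducing state preparation to a single-qutrit reflection synthesis problem is exactly the right framework.

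The one soft spot, which you flag yourself, is the constant. The generic reflection bound quoted in the paper is $8\log_3(1/\delta)$, and you argue heuristically that the one-real-parameter structure of $P_9$ drops this to $6$. That is plausible and consistent with the paper's remark (in the QFT subsection) that the synthesis constant for such phase-type reflections lies ``in between $5$ and $8$,'' but you do not actually carry out the specialization of the number-theoretic search that pins down $C=6$. Since the paper also declines to do this and simply points to \cite{5isNew8}, your level of detail is comparable; just be aware that the substance of the claim lives in that cited reference, and your sketch does not independently establish the leading constant.
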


To synchronize with the $P_9$ gates in the logical circuit we need to pipeline $r(\delta)$ instances of the magic state preparation circuit, so we always have a magic state at
fidelity $1-\delta$ ready to be injected into the $P_9$ protocol.

One important consequence of the synchronization requirement is that higher parallelization of non-Clifford operations reflects proportionally in an increase in width of the preparation coprocessor. %Ancor: was 'offline'

In particular, when we employ low-width circuits for Shor's period finding, such as based on ripple-carry additive shifts, then it suffices to produce a constant number of clean magic states per time step. For
example, if the ternary $C_f(\mbox{INC})$ is taken as the base classical gate and its realization shown in figure \ref{fig:C1Z:depth:one} then we need three clean magic states per a time step.

Suppose now we employ an $n$-bit quantum carry lookahead adder in the same context. In order to preserve the logarithmic time cost advantage we should be able to perform up to $n$ base reflection gates
 (such as Toffolis)  in parallel
or at least $O(n/\log(n))$ such gates in parallel on average. Thus the preparation component must deliver at least $O(n/\log(n))$  clean magic states per time step and widens the preparation component by
that factor.

\section{Platform specific resource counts} \label{sec:specific:resource:counts}

In a more conventional circuit layout for Shor's period finding, the $\approx N^2$ modular exponentiations $|a^k \; \mod N \;\rangle , k \in [1..N^2]$, are done in superposition over $k$ and the width of
such superposition trivially depends on the integer representation radix. Thus the purely ternary encoding has the width advantage with a factor of $\log_3(2)$.

However, on a small quantum computer platform a more practical approach is to use a single control (cf. \cite{ParkerPlenio} or \cite{Beauregard}, Section 2.4), which allows to iterate through
the modular exponentiations using only one additional qubit (resp. qutrit).

With this method in mind our principal focus is on the \emph{overall cost of modular exponentiation}.

\begin{comment}
This is however not what is important in the inter-platform comparison; hence for simplicity we will be assuming that register has the fixed size of $2\,n$ bits. Under certain assumptions on measurements and
QFT compilation,
 there is however an available ``one control qubit trick'' (cf. \cite{ParkerPlenio} or \cite{Beauregard}, Section 2.4), and that register can be reduced to one qubit (and, respectively, emulated with one
 qutrit). In either case, this component of the width is fixed and can be taken out of the comparison tables, which only need to contain the width comparisons for the modular exponentiation circuits.
\end{comment}

We assume that for bitsize $n$, the $\varepsilon = 1/\log(n)$ is a sufficient end-to-end precision of the
period-finding circuit.
Then the atomic precision $\delta$ per individual gate, or rather
per individual clean magic state within the circuit depends on circuit size. The circuits under comparison differ asymptotically in depth but not in size, which is in $O(n^3)$ (disregarding the slower $O(\log(n))$ terms).
We observe that $\log(1/\delta)$
is roughly $3\,\log(n)$ for the required $\delta$. It follows that the
%offline width scaling
distillation width
for one clean magic state scales like $(3\,\log_2(n))^3$ in the ternary context. In case on magic
state preparation in the metaplectic basis one needs at most $6 \times 3\, \log_3(n)= 18 \, \log_3(n)$ $R$-gate per a clean $P_9$ magic state.
 There has been a wide array of magic state distillation protocols for the Clifford+T benchmark. For practical reasons and for simplicity we have selected the Bravyi-Kitaev protocol (\cite{BravyiKitaev}) where
 the raw magic state consumption scales like $O(\log(1/\mbox{precision})^{\log_3(15)})$; $\log_3(15) \approx 2.465$. This scaling is shown in the ``preparetion width'' cells in the resource tables below.
 An attractive alternative would be the Bravyi-Haah protocol ( \cite{BravyiHaah}).  The protocol is defined by the hyperparameter $k$ of the underlying $[n,k,d]$ error correction code and requires preparation
 width in $O((\log_2(n))^{\gamma(k)})$ where  $\gamma(k) \approx \log_2((3\,k+8)/k)$. In particular for $k=8$ the protocol distills $8$ magic states simultaneously and $\gamma(k)\approx 2$. Unfortunately this
 protocol is more sensitive to the fidelity of the raw magic states and this is one of the reasons we decided not to cost it out at this time. One needs to be mindful that the scaling exponent $\gamma(k)$ can
 in principle be made smaller than $2$ under certain circumstances.

Tables \ref{table:ripple:carry} and \ref{table:carry:lookahead} contain comparative resource estimates for the modular exponentiation circuits based, respectively, on the ripple-carry additive shift and the carry lookahead
additive shift.
%The comparison is made across all the platforms and encodings discussed earlier in the paper.
For simplicity, only \emph{asymptotically dominating} terms are represented. An actual resource bound may differ by
terms of lower order w.r.t. $\log(n)$. 
 
 In addition to resource counts for ternary processing we have provided the same for Clifford+T solutions as a backdrop. In the Clifford+T basis,  resource estimate in Table \ref{table:ripple:carry} for low-width modular exponentiation on a binary quantum computer is based on \cite{HaenerEtAl} in which an implementation was given that uses $2n+2$ logical qubits. The Toffoli depth of the circuit in \cite{HaenerEtAl} can be analyzed to be bounded by $160 n^3$. Note that the Toffoli depth is equal to $T$-depth, provided that $4$ additional ancillas are available, leading to an overall circuit width of $2n+6$. The two resource estimates in Table \ref{table:carry:lookahead} for reduced-depth modular exponentiation are based on \cite{Kutin2006} and \cite{DraperSvore}: in \cite{Kutin2006} an implementation for an arbitrary coupling architecture was given that uses $3n+6 \log_2(n) + O(1)$ qubits and has a total depth of $12 n^2 + 60n \log^2_2(n) + O(n \log(n))$. This implementation is based on a gate set that has arbitrary rotations. To break this further into Clifford$+T$ operations, we require an increase in terms of depth of $4 \log_2(1/\varepsilon) = 12 \log_2(n)$ as each rotation has to be approximated with accuracy $\varepsilon \approx 1/n^3$. Up to leading order, this leads to the estimate of the circuit depth of $144 n^2 \log_2(n)$ given in the table. In \cite{DraperSvore} a Toffoli based circuit to implement an adder in depth $4 \log_2{n}$ was given that needs $4n-\omega_1(n)$ qubits, where $\omega_1$ denotes the Hamming weight of the integer $n$. As there are $O(n)$ Toffoli in parallel in this circuit, we use the implementation of a Toffoli gate in $T$-depth $3$ from \cite{AmyEtAl} to implement a single addition in $T$-depth $12 \log_2(n)$. The modular addition can be implemented then using $3$ integer additions. To implemented Shor's algorithm, we need $2n^2$ modular additions, leading to an overall $T$-depth estimate of $72n^2 \log_2(n)$.

The rightmost column of either table lists counts proportional to either the number of raw magic states or, in the case of MTQC to the number of metaplectic magic states required per a time step
of the circuit.

\begin{comment}
UNTODO: THIS IS DEPRECATED by Krysta
The inter-platform comparisons availed by the tables are quite informative as far as the size asymptotics is concerned. The comparison of ``practical'' estimates between the platform should perhaps be takes
with grain of salt as the low level cost factors are certainly dependant on specific physical devices. One might assume that the fidelity and cost of raw resource states for the $T$ gate on one side and $P_9$
gate on the other side depend on the physics of particular qubits/qutrits being used and that the fidelity and cost of the metaplectic magic states depend on a specific engineering realization of metaplectic
anyons. Discussing the physics and engineering of the actual devices is beyond the scope and mission of this paper.
\end{comment}

\begin{table*}
    \begin{tabular}{l@{\qquad}l@{\qquad}l@{\qquad}l} \hline\hline\\[-1.5ex]
\multicolumn{1}{c}{\hspace*{-2cm} Platforms}
           &  \multicolumn{1}{l}{Circuit width} &  \multicolumn{1}{l}{Circuit depth ($P_9$/$R_{|2\rangle}/T$)} & \multicolumn{1}{l}{Preparation width}\\[0.5ex]
    \hline\\[-1.5ex]
    Emulated binary, metaplectic, via $P_9$& $n+4$ & $48 \, n^3$ & $54 \times \log_3(n)$  \\[1ex]
    Section \ref{subsec:ripple:carry}, emulated binary, via $P_9$& $n+4$ & $48 \, n^3$ & $3  (3 \, \log_2(n))^3$  \\[1ex] % 6 n^2 \times 8 n = 48 n^3
    Ternary, metaplectic, via $P_9$& $2\,m - \omega_1(m)$  & $\approx 76.35 \, n^3$ & $54\times \log_3(n)$ \\[1ex] % 304 m^3 = 76.35 n^3 TODO
    Section \ref{subsec:ripple:carry}, ternary, via $P_9$& $2\,m - \omega_1(m)$ & $\approx 76.35 \, n^3$ & $3  (3 \, \log_2(n))^3$  \\[1ex] % 16 m^2 \time 19 m = 304 m^3 = 76.35 n^3 TODO
    Emulated binary, MTQC inline& $n+4$  & $432\, n^3\, \log_3(n)$ & $3$ \\[1ex] % 6 n^2 \times 72 n \log_3(n) = 432 \times n^3 \log_3(n)
    Ternary, MTQC inline& $2\,m - \omega_1(m)$  & $\approx 506.3\, n^3 \log_3(n)$ & $3$ \\[1ex] % 6 m^2 \times 14 \times 24 \times m \log_3(m) = 2016 \timed m^3 \log_3(m) \approx 506.3 n^3 \log_2(n)
     \hline\\[-1.5ex]
    Haener et al. \cite{HaenerEtAl}, Takahashi \cite{TakaKuni}                                  & $2\, n +6$ (qubits)    &         $160\,n^3$           &
    \footnote{Here $\log_2(3) < \gamma \leq \log_3(15)$ depending on practically applicable distillation protocol. $n \times$ reflects the worst case bound on the logical width of the circuit. } $\sim n \times (6 \log_2(n))^{\gamma}$ \\[1ex]
\hline\hline
    \end{tabular}
    \caption{Size comparison for low-widths modular exponentiation circuits. $n$ is the bitsize, $m = \lceil \log_3(2) \, n \rceil$, $\omega_1(.)$ is the number of $1$s in corresponding ternary or binary
    expansion. } \label{table:ripple:carry}
  \end{table*}

% DEPRECATED $ C(k,\epsilon_0) (\log_2(n))^{\gamma(k)}$ $\gamma(k)\rightarrow \log_2(3)$.

\begin{table*} %[h!]
    \begin{tabular}{l@{\qquad}l@{\qquad}l@{\qquad}l} \hline\hline\\[-1.5ex]
\multicolumn{1}{c}{\hspace*{-2cm} Circuits}
           &  \multicolumn{1}{l}{Circuit width} & \multicolumn{1}{l}{Circuit depth ($P_9$/$R_{|2\rangle}/T$)} & \multicolumn{1}{l}{p width}\\[0.5ex]
    \hline\\[-1.5ex]
    Emulated binary, metaplectic, via $P_9$& $4\,n -\omega_1(n)$ & $120\, n^2 \, \log_2(n)$ & $54\times n\,\log_3(n)$ \\[1ex]
    Section \ref{subsec:carry:lookahead}, emulated binary, via $P_9$& $4\,n -\omega_1(n)$& $120\,n^2 \,\log_2(n)$  & $12\,n\,  (3 \, \log_2(n))^3$  \\[1ex] % 6 n^2 \times 20 \log_2(n)
    Ternary, metaplectic, via $P_9$& $4\,m -\omega_1(m)$ & $\approx 127.4 \, n^2\, \log_2(n)$ & $54\times m\,  \log_3(m)$ \\[1ex] % 320 m^2 = 127.4 n^2
    Section \ref{subsec:carry:lookahead}, ternary, via $P_9$& $4\,m -\omega_1(m)$& $\approx 127.4 \, n^2\, \,\log_2(n)$  & $12\,n\,  (3 \, \log_2(n))^3$  \\[1ex] % 16 m^2 \times \log_2(m) % 320 m^2 = 127.4 n^2

    Emulated binary, MTQC inline& $3\,n -\omega_1(n)$ & $384\,n^2 \,\log_3(2) (\log_2(n))^2$ & $3\, n$ \\[1ex] % 6 n^2 \times 64 \times \log_3(2) (\log_2(n))^2
    Ternary, MTQC inline& $3\,m -\omega_1(m)$  & $\approx 1630.5\,n^2 \, \log_3(2) (\log_2(n))^2$  & $3\,m$ \\[1ex] % 16 m^2 \times 256 \times \log_3(2) (\log_2(n))^2 % 4096 m^2 \approx 1630.5 n^2
     \hline\\[-1.5ex]
    Binary, via Clifford+T, \cite{DraperSvore} & $4\,n -\omega_1(n)$ (qubits)    &      $72\, n^2\, \log_2(n)$             &
    \footnote{Here $\log_2(3) < \gamma \leq \log_3(15)$ depending on practically applicable distillation protocol.} $3\,n\,(6 \log_2(n))^{\gamma}$ \\[1ex] %72 n^2 \log_2 = 3 \times (per Toffoli) \times 2 n^2
    %\times 3 \times 4 \log_2(n)
    Binary, via Clifford+T, \cite{Kutin2006} & $3\,n +6\,\log_2(n)$ (qubits)    &      $144\, n^2 \log_2(n)$ & $3\,n\,(6 \log_2(n))^{\gamma}$ \\[1ex]
\hline\hline
    \end{tabular}
    \caption{Sizes for reduced-depth modular exponentiation circuits. $n$ is the bitsize, $m = \lceil \log_3(2) \, n \rceil$, $\omega_1(.)$ is the number of $1$s in corresponding ternary or binary expansion.}
    \label{table:carry:lookahead}
  \end{table*}

% DEPRECATED $n\times C(k,\epsilon_0) (\log_2(n))^{\gamma(k)}$ $\gamma(k)\rightarrow \log_2(3)$

%There are two main distinctions between the resource counts assembled in the tables \ref{table:ripple:carry} and \ref{table:carry:lookahead} : the online depth and the offline width. %UNTODO:DEPRECATED

The logarithmic
%online
execution
depth for integer addition
is achieved by using
%the whole point of having the
carry lookahead additive shift circuit.
However this comes at significant width cost, as the circuit performs in parallel up to $n$ (in the worst case) or roughly $n/\log_2(n)$ (on average) reflection gates.
This requires a corresponding number of magic states or metaplectic registers simultaneously, and therefore the preparation width numbers in the last column of Table \ref{table:carry:lookahead}
are multiplied by the corresponding bit size, or, respectively, trit size. This represents the critical path bound on the magic state preparation width of the solution.

In both tables the
%offline
preparation width bound for ternary processing is dominated by the width of the $C_f(\mbox{INC})$. The Tables do not exhaust the vast array of possible depth/width
tradeoffs. We have chosen to represent $C_f(\mbox{INC})$ with non-Clifford depth one as shown in Fig. \ref{fig:C1Z:depth:one}. This circuit has the
$P_9$-width of ${3}$ and requires a clean ancillary qutrit. For the ripple-carry solution the ancillary qutrit is reused and has minimal impact. For the carry lookahead
solution up to $n$ (respectively, up to $m=\lceil \log_3(2) \, n$) ancillas must be available in parallel which inflates the online width by more than 30 percent.

The fifth and sixth rows show tradeoff based on direct approximation of Toffoli gates and (controlled) $C_f(\mbox{INC})$ gates, respectively, by topological metaplectic
circuits to precision $\Theta(1/n^3)$. The topological metaplectic $R_{|2\rangle}$ gates are executed sequentially for each individual arithmetic gate. This nearly eliminates the need for the magic state
preparation, as only $3$ topological ancillas are needed at a time in the injection circuit for the $R_{|2\rangle}$ (Figure \ref{fig:R2:RUS}). This tradeoff introduces the online depth of a
subcircuit for a Toffoli gate of roughly $24 \, \log_3(n)$. A corresponding subcircuit for a $C_f(\mbox{INC})$ then has online depth of $48 \, \log_3(m)$ (two two-level reflections). For $C_f(\Lambda
\Lambda(\mbox{INC}))$, it is $192 \, \log_3(m)$ (eight two-level reflections).

We estimate the number of required controlled integer additive shifts in a modular exponentiation circuit as $6\,n^2$ ($2\,n^2$ controlled modular additions) when binary emulation is used and as $16\,m^2$
($4\,m^2$ controlled modular additions) when ternary encoding is used. These bounds define the
%online
execution depth columns in both tables.

The most significant distinction
%exhibited by the resource counts in
in the Tables \ref{table:ripple:carry}, \ref{table:carry:lookahead} is the asymptotical advantage in the
%offline width sufficient for
magic state
preparation width with the MTQC.
% is available.

%The other informative dimension of the data presented in the tables implies
There is also a tradeoff between emulated binary encoding and true ternary encoding on a ternary quantum processor. It is seen from Table
\ref{table:ripple:carry} that with ripple-carry adders (e.g., when targeting a small quantum computer) we get a moderate practical advantage in non-Clifford circuit depth when
emulating binary encoding and a small advantage in width compared to the use of true ternary encoding. This is true even accounting for the fact that the trit size $m$ is smaller than the bit size $n$ by the
factor of $\log_3(2)$.

On the other hand when carry lookahead adders are used, the difference in the overall non-Clifford circuit depth between the two encoding scenarios is insignificant, unless inline metaplectic circuits with MTQC are compiled.
But the use of true ternary encoding yields the width advantage by a factor of roughly $\log_3(2)$.
In the fifth and sixth lines of Table \ref{table:carry:lookahead}, the use of emulated binary encoding is practically better than the use of ternary encoding.
Intuitively, this is because the metaplectic circuits are reflection-oriented and best suited for direct approximation of the (controlled) Toffoli gates that are two-level reflections, whereas
ternary arithmetic gates such as $C_f(\mbox{INC})$ or Horner have to be first decomposed into several two-level reflections.

The resource bounds shown in the tables provide a great deal of flexibility in selecting a resource balance
appropriate for a specific ternary quantum computer. On a generic ternary quantum computer where
universality is achieved by distillation of magic states for the $P_9$ gate the choice of encoding and arithmetic
circuits is likely to be dictated by the size of the actual computer. When native metaplectic
topological resources are available, magic states for the $P_9$ gate are prepared asymptotically more efficiently.
Metaplectic also offers the third choice: that of bypass the $P_9$ gate
altogether and using inline metaplectic circuits instead at the cost of a factor in $O(log(\mbox{bitsize}))$
in circuit depth expansion. In this scenario using emulated binary encoding of integers is always more efficient in
practice than the use of true ternary encoding.

\section{Conclusion}

We have investigated implementations of Shor's period finding function  (\cite{Shor}) in quantum ternary logic.
We performed comparative resource cost analysis targeting two prospective quantum ternary platforms.
The ``generic'' platform uses magic state distillation as described in \cite{CampbellEtAl}
for universality. The other, one referred to as MTQC (metaplectic topological quantum computer), is a non-Abelian anyonic platform, where universality is achieved by a relatively inexpensive protocol based on
anyonic braiding and interferomic measurement \cite{CuiHongWang},\cite{CuiWang}.

On each of these platforms we considered two different logical solutions for the modular exponential circuit of Shor's period finding function: one where the integers are encoded using the binary subspace
of the ternary state space and ternary optimizations of known binary arithmetic circuits are employed;
the other ternary encoding of integers and arithmetic circuits
stemming from \cite{BCRS} are used.

On the MTQC platform we additionally consider semi-classical metaplectic expansions of arithmetic circuits; the non-Clifford depth of such a circuit is larger than the non-Clifford depth of the
corresponding classical arithmetic circuit by a factor of $O(\log(\mbox{bitsize}))$.
Notably, circuits of this type bypass the need for magic states and the $P_9$ gate entirely.

We have derived both asymptotic and practical bounds on the quantum resources consumed by the Shor's period finding function for practically interesting combinations of platform, integer encoding and modular
exponentiation. For evaluation purposes we have derived such bounds for widths and non-Clifford depths of the logical
circuits as well as for sizes of the state preparation resources that either
distill or prepare necessary magic states with the required fidelity.

%These resource bounds are summarized and briefly discussed in the section \ref{sec:specific:resource:counts}. They are based on the known best designs for ternary arithmetic developed in \cite{BCRS}, and
%technical sections of this paper. While we are reasonably certain that these designs are optimal or near-optimal, we do not have rigorous proofs of optimality at this time. Finding such proofs is an important
%direction of future research.

We find significant asymptotic and practical advantages of the MTQC platform compared to other platforms. In particular this platform
allows factorization of an $n$-bit number using the smallest possible number of $n+7$ logical qutrits at the cost of inflating the depth of the logical circuit by a logarithmic factor. In scenarios where
increasing the depth is undesirable, the MTQC platform still exhibits significant advantage in the size of the
 magic state preparation component that is linear in the bitsize of the
target fidelity (compared to cubic or near-cubic for a generic magic state distillation).

 An interesting feature of our ternary arithmetic circuits is the fact that the denser and more compact ternary encoding
 of integers does not necessarily lead to more resource-efficient period finding solutions compared to binary encoding.
 As a rule of a thumb: if low-width circuits are desired, then binary encoding of integers combined with ternary arithmetic
 gates appears more efficient both in
 terms of width and depth than a pure ternary solution. However, even a moderate ancilla-assisted depth compression,
 such as provided by carry lookahead additive shifts, tips the balance in favor of ternary
 encoding and ternary arithmetic gates.

 In summary, having a variety of encoding and logic options, provides flexibility when choosing period finding solutions for ternary quantum computers of varying sizes.

\acknowledgements
The Authors are grateful to Jeongwan Haah for useful references. We would also like to thank Tom Draper and Sandy Kutin for providing $\langle q | pic \rangle$ \cite{DraperKutin} which we used for typesetting
most of the figures in this paper. We are thankful to an anonymous Reviewer for insightful comments that inspired us to rewrite the paper into its current, more comprehensive format.

\appendix

\section{Exact emulation of the $R_{|2\rangle}$ gate in the Clifford+$P_9$ basis.} \label{app:sec:R:2}

At this time we lack a good effective classical compilation procedure for approximating non-classical unitaries by efficient ancilla-free circuits in the Clifford+$P_9$ basis.

We show here, however, that the magic state $|\psi\rangle$ of (\ref{eq:def:psi}) that produces the $R_{|2\rangle}$ gate by state injection can be prepared by certain  probabilistic measurement-assisted
circuits over the Clifford+$P_9$ basis.
Therefore the compilation into the Clifford+$P_9$ basis can be reduced to a compilation into the Clifford+$R_{|2\rangle}$ basis, while incurring a certain state preparation cost. This solution, however inelegant, is
sufficient, for example, in the context of Shor's integer factorization.

We have seen in Section \ref{subsec:classical:with:P9} that the classical $C_1(\mbox{INC})$ and, hence, the classical $C_2(\mbox{INC})$ gates can be represented exactly and ancilla-free using three $P_9$
gates. We use the availability of these gates to prove the key lemma below.

Recall that $\omega_3=e^{2 \pi \, i/3}$ is a Clifford phase.

\begin{lemma} \label{app:lemma:resource:state}
Each of the ternary resource states

\[
(|0\rangle+\omega_3 |1\rangle)/\sqrt{2}, \quad \mbox{and} \quad
(|0\rangle+\omega_3^2 |1\rangle)/\sqrt{2}
\]

\noindent can be represented exactly by a repeat-until-success (RUS) circuit over Clifford+$P_9$ with one ancillary qutrit and expected average number of trials equal to $3/2$.

\end{lemma}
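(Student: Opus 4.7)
The plan is to combine a Clifford-level preparation of a suitable three-level intermediate state with a non-Clifford projective measurement that collapses the $|2\rangle$ component. First, I would observe that the state $|\phi_k\rangle := (|0\rangle + \omega_3^k |1\rangle + |2\rangle)/\sqrt{3}$ is a ternary stabilizer state for both $k \in \{1,2\}$: the diagonal phase gate $D_k := \mbox{diag}(1, \omega_3^k, 1)$ is Clifford because a direct computation gives $D_k = Z^k Q^k$ in terms of the Clifford generators from Eqs.~(\ref{eq:Z}) and (\ref{eq:Q}), so $|\phi_k\rangle = D_k\, H |0\rangle$ is produced exactly from $|0\rangle$ by a short Clifford circuit.

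Next, I would bring in the single ancillary qutrit in state $|0\rangle$ and apply $C_2(\mbox{INC})$ with the target qutrit as control and the ancilla as its target. This is the only non-Clifford step of each trial, and it is realized ancilla-free by three $P_9$ gates, as recalled in Section~\ref{subsec:classical:with:P9}. The resulting joint state is
\[
\frac{1}{\sqrt{3}}\bigl(|0\rangle|0\rangle + \omega_3^k\,|1\rangle|0\rangle + |2\rangle|1\rangle\bigr),
\]
in which the flag value $|1\rangle$ on the ancilla is entangled with precisely the unwanted $|2\rangle$ component of the target. Measuring the ancilla in the computational basis then yields outcome $0$ with probability $2/3$, collapsing the target exactly to $(|0\rangle + \omega_3^k |1\rangle)/\sqrt{2}$, while outcome $1$ occurs with probability $1/3$ and leaves the target in $|2\rangle$ with the ancilla in $|1\rangle$.

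Finally, on a failure outcome both wires are definite classical basis states and can be reset to $|0\rangle$ by Clifford $\mbox{INC}^{\pm 1}$ gates, after which the entire trial is repeated from scratch. Independence of successive trials with success probability $2/3$ yields a geometric distribution whose mean is $3/2$, which is exactly the claim of the lemma. I expect no substantive obstacle: everything reduces to verifying the identity $D_k = Z^k Q^k$ and reading off the three post-$C_2(\mbox{INC})$ amplitudes displayed above, both of which are one-line direct calculations.
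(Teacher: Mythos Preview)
Your proof is correct and follows essentially the same approach as the paper: prepare a Clifford state with equal weight on $|0\rangle,|1\rangle,|2\rangle$ (up to phases), use $C_2(\mbox{INC})$ to tag the $|2\rangle$ component onto an ancilla, and measure. The only cosmetic difference is in the Clifford preparation step: the paper applies $H$ to $|2\rangle$ (respectively $|1\rangle$) to get the right phase on $|1\rangle$ directly, whereas you apply $H$ to $|0\rangle$ and then correct with the diagonal Clifford $D_k=Z^kQ^k$; the rest of the argument is identical.
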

\begin{proof}
Let us give a proof for the second resource state. (The proof is symmetrical for the first one.)

We initialize a two-qutrit register in the state $|20\rangle$ and compute
\[
C_2(\mbox{INC}) (H\otimes I)|20\rangle = (|00\rangle + \omega_3^2 |10\rangle + \omega_3 |21\rangle)/\sqrt{3}.
\]
If we measure $0$ on the second qutrit, then the first qutrit is in the desired state. Overwise, we discard the register and start over. Since the probability of measuring $0$ is $2/3$, the Lemma follows.
\end{proof}

\begin{corol}
A copy of the two-qutrit resource state
\begin{equation} \label{app:eq:eta}
|\eta\rangle = (|0\rangle+\omega_3 |1\rangle)\otimes (|0\rangle+\omega_3^2 |1\rangle)/2
\end{equation}
\noindent can be represented exactly by a repeat-until-success circuit over Clifford+$P_9$ with two ancillary qutrits and expected average number of trials equal to $9/4$.
\end{corol}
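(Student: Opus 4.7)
The plan is to invoke the preceding lemma twice in parallel, once for each single-qutrit tensor factor of $|\eta\rangle$. Each invocation occupies its own target qutrit together with its own ancillary qutrit, so the combined circuit uses exactly two ancillas in addition to the two qutrits that will carry $|\eta\rangle$, matching the claimed ancilla count. Since $|\eta\rangle$ already factorizes as the tensor product of the two single-qutrit states covered by the lemma, no entangling step between the two pairs is required.

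To obtain $9/4$ rather than the naive $3/2+3/2=3$ expected trials, I would couple the two repeat-until-success loops into a single synchronized outer loop: in each trial, execute both single-qutrit subcircuits in parallel on their disjoint four-qutrit register, then measure both syndrome qutrits simultaneously in the standard basis; declare the trial a success exactly when both measurements return $0$, and otherwise reinitialize all four qutrits to their starting states and repeat. Because the two subcircuits act on disjoint qutrits and prepare a tensor-product pre-measurement state, the two syndrome outcomes are statistically independent, so the per-trial joint success probability factorizes as $(2/3)\cdot(2/3)=4/9$, yielding the expected trial count $(4/9)^{-1}=9/4$.

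On a successful trial the two output qutrits carry
\[
\frac{|0\rangle+\omega_3|1\rangle}{\sqrt{2}}\otimes\frac{|0\rangle+\omega_3^2|1\rangle}{\sqrt{2}}=|\eta\rangle
\]
exactly, and the whole protocol uses only gates from the Clifford$+P_9$ basis, standard-basis measurements, and classical feedback, since the preceding lemma is already cast in that instrumentation. The only substantive point to verify is the statistical independence of the two syndrome measurements that gate the synchronized RUS; this follows immediately from the disjointness of the two ancilla-plus-target pairs and the product form of their joint pre-measurement state, so no real obstacle is anticipated. A minor remark worth including is that the synchronized-restart convention is what produces the $9/4$ figure; a more aggressive strategy that keeps a successful half while retrying only the other half would actually give a smaller expectation, so $9/4$ is a safe and clean bound rather than a tight one.
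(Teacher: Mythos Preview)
Your proposal is correct and matches the paper's approach exactly: the paper's entire proof is the single sentence ``we stack together the two RUS circuits described in Lemma~\ref{app:lemma:resource:state},'' and your synchronized-restart convention is precisely the reading that yields the stated $9/4$ figure. Your write-up is in fact more careful than the paper's, since you make explicit the independence argument and the remark that $9/4$ corresponds to the synchronized strategy rather than the tighter asynchronous one.
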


To effectively build a circuit for the Corollary, we stack together the two RUS circuits described in Lemma \ref{app:lemma:resource:state}.

\begin{lemma}
There exists a measurement-assisted circuit that, given a copy of resource state $|\eta\rangle$ as in (\ref{app:eq:eta}), produces a copy of the resource state
\begin{equation} \label{app:eq:psi}
|\psi\rangle = (|0\rangle- |1\rangle+|2\rangle)/\sqrt{3}
\end{equation}
\noindent with probability $1$.
\end{lemma}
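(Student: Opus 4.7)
The plan is to exhibit a short Clifford-plus-measurement circuit that consumes $|\eta\rangle$ together with one ancillary qutrit prepared in a fixed Clifford state and, after measuring the two qutrits that originally carried $|\eta\rangle$, leaves the remaining qutrit in a state that is Clifford-equivalent to $|\psi\rangle$. Because the needed correction on the surviving qutrit is a classical function of the measurement outcome and lies in the Clifford group, the overall procedure will succeed with probability one.

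Concretely, I would initialize the ancilla qutrit in a computational-basis state, apply a small network of SUM gates together with $H$ gates on the qutrits of $|\eta\rangle$ to correlate the ancilla with shifted copies of $|\eta\rangle$, and then measure those two qutrits in the computational basis. For each of the nine outcomes $(a,b)\in\{0,1,2\}^2$ the residual ancilla state $|\phi_{a,b}\rangle$ is a linear combination on $\{|0\rangle,|1\rangle,|2\rangle\}$ whose coefficients are either $0$, a single cube root of unity, or a sum of two distinct cube roots of unity. The verification that every such $|\phi_{a,b}\rangle$ lies on the Clifford orbit of $|\psi\rangle$ reduces to the identities
\[
\omega_3+\omega_3^2=-1,\qquad 1+\omega_3=-\omega_3^2,\qquad 1+\omega_3^2=-\omega_3,
\]
which precisely convert the cube-root phases present in $|\eta\rangle$ into the real sign pattern $(+1,-1,+1)$ of $|\psi\rangle$ up to an overall Clifford phase factor of the form $Q^k$ or $Z^k$. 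Once this correspondence is established, the classically controlled corrective Clifford $C_{a,b}$ is read off directly from $(a,b)$.

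The main obstacle is engineering the entangling Clifford so that \emph{every} one of the nine branches lands in the Clifford orbit of $|\psi\rangle$: a naive choice, such as SUM$_{1,3}$ followed by SUM$_{2,3}$ with a Hadamard post-processing on the two measured qutrits, produces the desired outcome on only six of the nine branches, with the remaining three yielding an ancilla amplitude pattern $(1,2,1)$ that is not Clifford-equivalent to $|\psi\rangle$. The remedy is to exploit the product form $|\eta\rangle = \frac{1}{\sqrt 2}(|0\rangle+\omega_3|1\rangle)\otimes\frac{1}{\sqrt 2}(|0\rangle+\omega_3^2|1\rangle)$ and insert a Clifford $\mbox{SUM}$ or $Q$ gate between the two qutrits of $|\eta\rangle$ before the final Hadamards, which rebalances the interference pattern so that the coefficient of magnitude $2$ never arises. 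Once the circuit is finalized, the lemma follows by a straightforward enumeration of the nine outcomes and identification of the corresponding Clifford correction.
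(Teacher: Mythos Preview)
Your approach is more complicated than necessary, and the sketch does not close. The paper's proof uses no extra ancilla at all: it simply applies $\mbox{SUM}$ to the two qutrits of $|\eta\rangle$, applies $H^\dagger$ on the first qutrit, and measures the first qutrit. There are only three outcomes, and for each outcome the surviving second qutrit is already of the form $(|0\rangle - \omega_3^{k}|1\rangle + \omega_3^{k'}|2\rangle)/\sqrt{3}$, hence Clifford-equivalent to $|\psi\rangle$. The identities $\omega_3+\omega_3^2=-1$, $1+\omega_3=-\omega_3^2$, $1+\omega_3^2=-\omega_3$ that you correctly identified are exactly what produces the minus sign here, but the paper arranges for them to fire on every branch without any balancing trick, precisely because after $\mbox{SUM}$ the four basis kets of $|\eta\rangle$ land on four distinct second-qutrit labels in a way that pairs up two amplitudes on the $|1\rangle$ component for each measurement outcome.

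Your proposal, by contrast, introduces a third qutrit and measures both original qutrits, giving nine branches; you then acknowledge that the naive entangling network fails on three of those branches (producing a $(1,2,1)$ amplitude pattern outside the Clifford orbit of $|\psi\rangle$) and assert that inserting an extra $\mbox{SUM}$ or $Q$ between the two $|\eta\rangle$ qutrits ``rebalances'' things. But you never write down the fixed circuit or verify that all nine branches now land in the orbit. That is the actual content of the lemma, so as written the proof has a gap. Before investing effort in completing the nine-branch verification, you should note that the two-qutrit, three-outcome route makes the whole issue disappear: with only two amplitudes contributing to each component after the Fourier measurement, the problematic magnitude-$2$ coefficient cannot arise.
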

\begin{proof}
Measure the first qutrit in the state $(H^{\dagger}\otimes I) \mbox{SUM} |\eta\rangle$.

Here is the list of reduced second qutrit states given the measurement outcome $m$:
\[
m=0 \mapsto (|0\rangle- |1\rangle+|2\rangle)/\sqrt{3},
\]
\[
m=1 \mapsto (|0\rangle- \omega_3 |1\rangle+ \omega_3^2 |2\rangle)/\sqrt{3},
\]
\[
m=2 \mapsto (|0\rangle- |1\rangle+\omega_3 |2\rangle)/\sqrt{3}.
\]
While the first state on this list is the desired $|\psi\rangle$, each of the other two states can be turned into $|\psi\rangle$ by classically-controlled Clifford correction.

\end{proof}

As shown in \cite{CuiWang},Lemma 5, the resource state $|\psi\rangle$ as in (\ref{app:eq:psi}) can be injected into a coherent repeat-until-success circuit of expected average depth $3$ to execute the
$R_{|2\rangle}$  gate on a coherent state. See our Figure \ref{fig:R2:RUS} in Section \ref{subsec:metaplectic:basis}.

Recall that the $C_2(\mbox{INC})$ gate appearing in the lemma \ref{app:lemma:resource:state} construction has the non-Clifford cost of three $P_9$ gates.
Thus, to summarize the procedure: we can effectively and exactly prepare the magic state $|\psi\rangle$ using four-qutrit register at the expected average $P_9$-count of $27/4$.

To have a good synchronization of the magic state preparation with the $R_{|2\rangle}$  gate injection is would suffice to have a magic state preparation coprocessor of width greater than $27$ (to compensate
for the variances in repeat-until-success circuits).

\section{Circuit fidelity requirements for Shor's period finding function} \label{app:sec:fidelity}

To recap the discussion in Section \ref{subsec:shor:top:level}, the quantum period finding function consists of preparing a unitary state $|u\rangle$ proportional to the superposition

\begin{equation}
\sum_{k = 0}^{N^2} |k\rangle |a^k \; {\rm mod} \; N\rangle
\end{equation}

followed by quantum Fourier transform, followed by measurement, followed by classical postprocessing.

As we know, the measurement result $j$ can be useful for recovering a period $r$ or it can be useless. It has been shown in \cite{Shor} that the probability $p_{\mbox{useful}}$ of getting a useful measurement
is in
$\Omega(1/(\log(\log(N))))$.

Speaking in more general terms, let $\mathcal{H}$ be the Hilbert space where the $\mbox{QFT} |u\rangle$ is to be found after the quantum Fourier transform step, let $\mathcal{G} \subset \mathcal{H}$ be the
subspace spanned by all possible state reductions after all possible useful measurements and let $\mathcal{G}^{\bot}$ be its orthogonal complement in $\mathcal{H}$.

Let $\mbox{QFT} |u\rangle = |u_1\rangle + |u_2\rangle , |u_1\rangle \in \mathcal{G}, |u_2\rangle \in \mathcal{G}^{\bot}$ be the orthogonal decomposition of $\mbox{QFT} |u\rangle$. Then $p_{\mbox{useful}}
=||u_1\rangle|^2$.

Let now $|v\rangle$ be an imperfect unitary copy of $\mbox{QFT} |u\rangle$ at Hilbert distance $\varepsilon$. What is the probability of obtaining \emph{some} useful measurement on measuring $|v\rangle$?

By definition, it is the probability of $|v\rangle$ being projected to $\mathcal{G}$ upon measurement.

\begin{prop}
In the above context the probability of $|v\rangle$ being projected to $\mathcal{G}$ upon measurement is greater than
\[
p_{\mbox{useful}} - 2 \, \sqrt{p_{\mbox{useful}}} \, \varepsilon
\]
\end{prop}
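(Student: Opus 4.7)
The plan is to exploit the orthogonal decomposition $\mathcal{H} = \mathcal{G} \oplus \mathcal{G}^{\bot}$ together with the reverse triangle inequality in Hilbert space, and then square.

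First I would decompose $|v\rangle$ exactly the way $\mathrm{QFT}|u\rangle$ was decomposed, writing $|v\rangle = |v_1\rangle + |v_2\rangle$ with $|v_1\rangle \in \mathcal{G}$ and $|v_2\rangle \in \mathcal{G}^{\bot}$. By the measurement postulate, the probability of $|v\rangle$ being projected into $\mathcal{G}$ upon a measurement in a basis compatible with the subspace decomposition is exactly $\||v_1\rangle\|^2$. So the task reduces to a lower bound on $\||v_1\rangle\|$ in terms of $\||u_1\rangle\| = \sqrt{p_{\mathrm{useful}}}$ and the Hilbert-space distance $\varepsilon = \|\mathrm{QFT}|u\rangle - |v\rangle\|$.

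Next I would observe that the orthogonal projector $P_{\mathcal{G}}$ onto $\mathcal{G}$ is a contraction, so
\[
\||u_1\rangle - |v_1\rangle\| \;=\; \|P_{\mathcal{G}}(\mathrm{QFT}|u\rangle - |v\rangle)\| \;\leq\; \|\mathrm{QFT}|u\rangle - |v\rangle\| \;=\; \varepsilon .
\]
The reverse triangle inequality applied to $|u_1\rangle$ and $|v_1\rangle$ then gives
\[
\||v_1\rangle\| \;\geq\; \||u_1\rangle\| - \||u_1\rangle - |v_1\rangle\| \;\geq\; \sqrt{p_{\mathrm{useful}}} - \varepsilon .
\]

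Finally I would square this bound and drop the non-negative $\varepsilon^2$ term:
\[
\||v_1\rangle\|^2 \;\geq\; \bigl(\sqrt{p_{\mathrm{useful}}} - \varepsilon\bigr)^2 \;=\; p_{\mathrm{useful}} - 2\sqrt{p_{\mathrm{useful}}}\,\varepsilon + \varepsilon^2 \;\geq\; p_{\mathrm{useful}} - 2\sqrt{p_{\mathrm{useful}}}\,\varepsilon ,
\]
which is the desired inequality. There is no real obstacle here; the only subtlety worth flagging is that $\varepsilon$ must be understood as the Hilbert (Euclidean) norm of the difference of the two unit vectors rather than, say, a trace-distance between density matrices, which is exactly how the proposition is phrased in the excerpt. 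Once that is fixed, the argument is a one-line application of contractivity of orthogonal projection plus the reverse triangle inequality.
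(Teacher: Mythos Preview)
Your proof is correct and follows essentially the same route as the paper's own argument: orthogonally decompose $|v\rangle$, use that the projection onto $\mathcal{G}$ is a contraction to bound $\||u_1\rangle-|v_1\rangle\|\le\varepsilon$, apply the reverse triangle inequality, then square and drop the $\varepsilon^2$ term. The only difference is that you make the contractivity step explicit where the paper simply writes ``Clearly $||u_1\rangle-|v_1\rangle|<\varepsilon$.''
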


\begin{proof}
Let $|v\rangle=|v_1\rangle+|v_2\rangle, |v_1\rangle \in \mathcal{G}, |v_2\rangle \in \mathcal{G}^{\bot}$ be the orthogonal decomposition of the state $|v\rangle$.

Clearly $||u_1\rangle-|v_1\rangle| < \varepsilon$ and, by triangle inequality, $||v_1\rangle| \geq ||u_1\rangle |- ||u_1\rangle-|v_1\rangle| > ||u_1\rangle| - \varepsilon$.

Hence $||v_1\rangle|^2 > (||u_1\rangle| - \varepsilon)^2 > ||u_1\rangle|^2 - 2 \, ||u_1\rangle| \, \varepsilon =  p_{\mbox{useful}} - 2 \, \sqrt{p_{\mbox{useful}}} \, \varepsilon$
as claimed.
\end{proof}

\begin{corol}
In the above context, if $\varepsilon < \gamma \, \sqrt{p_{\mbox{useful}}}$ where $0 < \gamma < 1/2$, then the probability of obtaining \emph{some} useful measurement on measuring $|v\rangle$ is greater than
$(1-2\, \gamma) \,  p_{\mbox{useful}}$.
\end{corol}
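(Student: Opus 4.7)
The plan is to derive this Corollary as an immediate substitution into the bound established in the preceding Proposition. The Proposition already guarantees that the probability of projecting $|v\rangle$ onto $\mathcal{G}$ upon measurement exceeds
\[
p_{\mbox{useful}} - 2\,\sqrt{p_{\mbox{useful}}}\,\varepsilon,
\]
so all that remains is to track how the hypothesis $\varepsilon < \gamma\,\sqrt{p_{\mbox{useful}}}$ tightens this expression.

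First, I would invoke the Proposition verbatim to obtain the lower bound above. Next, I would apply the hypothesis $\varepsilon < \gamma\,\sqrt{p_{\mbox{useful}}}$ to the subtracted term: since $2\sqrt{p_{\mbox{useful}}} > 0$, multiplying the strict inequality by this positive factor preserves its direction, giving
\[
2\,\sqrt{p_{\mbox{useful}}}\,\varepsilon < 2\,\sqrt{p_{\mbox{useful}}}\,\gamma\,\sqrt{p_{\mbox{useful}}} = 2\,\gamma\,p_{\mbox{useful}}.
\]
Combining with the Proposition then yields
\[
p_{\mbox{useful}} - 2\,\sqrt{p_{\mbox{useful}}}\,\varepsilon > p_{\mbox{useful}} - 2\,\gamma\,p_{\mbox{useful}} = (1-2\gamma)\,p_{\mbox{useful}},
\]
which is the desired conclusion.

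There is essentially no hard step here; the only point of care is verifying the sign conditions so that the substitution preserves the direction of the inequality, and noting that the assumption $0 < \gamma < 1/2$ ensures the resulting lower bound $(1-2\gamma)\,p_{\mbox{useful}}$ is strictly positive, hence nontrivial as a probability statement. I would close by observing that this confirms the qualitative claim made earlier in Section \ref{subsec:shor:top:level}: it suffices to prepare the post-QFT state with fidelity $1-\varepsilon$ for any $\varepsilon$ in $o(1/\sqrt{\log(\log(N))})$ in order to retain asymptotically the same $\Omega(1/\log(\log(N)))$ success probability per iteration that Shor's original analysis guarantees with an ideal superposition.
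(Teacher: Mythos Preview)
Your proposal is correct and matches the paper's intent: the Corollary is stated without proof there precisely because it is an immediate substitution of the hypothesis $\varepsilon < \gamma\sqrt{p_{\mbox{useful}}}$ into the Proposition's lower bound $p_{\mbox{useful}} - 2\sqrt{p_{\mbox{useful}}}\,\varepsilon$. Your additional remarks on sign conditions and the role of $\gamma < 1/2$ are appropriate commentary but not strictly needed for the derivation.
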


In particular if $\varepsilon < \sqrt{p_{\mbox{useful}}}/4$ , we are at least half as likely to obtain a useful measurement from the proxy state $|v\rangle$ as from the ideal state $\mbox{QFT} |u\rangle$.

In summary, there is a useful precision threshold $\varepsilon$ in $O(1/(\sqrt{\log(\log(N))}))$ that allows to use an imprecisely prepared state at precision $\varepsilon$ in place of the ideal state in the
measurement and classical post-processing part of Shor's period finding function.

This translates into per-gate tolerance in the preparation circuit in a usual way. If $d$ is the unitary depth of the state preparation circuit, then it suffices to represent each of the consecutive unitary
gates to fidelity $1-\varepsilon/d$ or better. For completeness, we make this argument explicit in the following proposition. Let $||U||$ denote the spectral norm of a unitary operator $U$.

\begin{prop}
Assume that an ideal quantum computation $U=\prod_{k=1}^d U_k$ is specified using $d$ perfect unitary gates $U_k$ and we actually implementing it using $d$ imperfect unitary gates $V_k$ where for all
$k = 1,\ldots,d$ it holds that $||U_k-V_k|| \leq \delta$. Then for the actually implemented unitary transformation $V = \prod_{k=1}^d V_k$ it holds that $||U-V|| \leq d \, \delta$.
\end{prop}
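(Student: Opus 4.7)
The plan is to prove this by a standard telescoping-sum (equivalently, induction) argument that exploits two elementary facts about the spectral norm: (i) submultiplicativity, $\|AB\| \le \|A\|\,\|B\|$, and (ii) every unitary operator has spectral norm equal to $1$. Together these will let the $d$ individual errors accumulate additively rather than multiplicatively.

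The key algebraic step I would carry out first is to rewrite the global difference $U-V$ as a telescoping sum in which each summand isolates exactly one imperfect gate. Concretely, setting $A_k = U_1 U_2 \cdots U_{k-1}$ and $B_k = V_{k+1} V_{k+2} \cdots V_d$ (with $A_1 = I$ and $B_d = I$), the identity
\begin{equation}
U - V \;=\; \sum_{k=1}^{d} A_k\,(U_k - V_k)\,B_k
\end{equation}
is verified by a straightforward telescoping: the intermediate hybrid products $A_k V_k B_k = A_{k+1} B_{k-1}^{\!*}$ cancel pairwise (where I use the obvious notation for the partial products with the first $k$ factors being $U$'s and the rest being $V$'s).

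Next I would apply the triangle inequality term by term and then submultiplicativity, obtaining
\begin{equation}
\|U - V\| \;\le\; \sum_{k=1}^{d} \|A_k\|\cdot\|U_k - V_k\|\cdot\|B_k\|.
\end{equation}
Since each $A_k$ is a product of unitaries, it is itself unitary, so $\|A_k\| = 1$; likewise $\|B_k\| = 1$. Plugging in the hypothesis $\|U_k - V_k\| \le \delta$ for every $k$ gives $\|U-V\| \le d\,\delta$, as claimed.

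There is no real obstacle here; the only point requiring a sentence of justification is the telescoping identity, which in a revised write-up I might replace by a clean one-line induction on $d$: for $d=1$ the bound is the hypothesis, and for the inductive step I would write
\begin{equation}
\textstyle\prod_{k=1}^{d} U_k - \prod_{k=1}^{d} V_k \;=\; \bigl(\prod_{k=1}^{d-1} U_k\bigr)(U_d - V_d) \;+\; \bigl(\prod_{k=1}^{d-1} U_k - \prod_{k=1}^{d-1} V_k\bigr)\,V_d,
\end{equation}
and then use $\|\prod_{k=1}^{d-1} U_k\| = \|V_d\| = 1$ together with the inductive hypothesis to get $\delta + (d-1)\delta = d\delta$. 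Either presentation is short; I would pick the induction version since it avoids introducing the hybrid-product notation.
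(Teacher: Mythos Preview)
Your proposal is correct, and the induction version you settle on at the end is exactly the argument the paper gives: split off the last factor, apply the triangle inequality and submultiplicativity, use $\|\prod_{k=1}^{d-1} U_k\|=\|V_d\|=1$, and invoke the inductive hypothesis to get $\delta+(d-1)\delta=d\delta$. The telescoping-sum variant you sketch first is an equivalent unrolling of the same induction and is also fine.
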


\begin{proof} (See also  \cite{BernsteinVazir97}, \cite{Vazirani98}).
We perform induction on $d$. When $d=1$ there is nothing to prove. Assume the inequality has been proven for a product of length $d-1$.

We have $||U-V|| = ||\prod_{k=1}^d U_k - (\prod_{k=1}^{d-1} U_k)\, V_d + (\prod_{k=1}^{d-1} U_k)\, V_d - \prod_{k=1}^d V_k||$
$ \leq ||\prod_{k=1}^d U_k - (\prod_{k=1}^{d-1} U_k)\, V_d || + ||(\prod_{k=1}^{d-1} U_k)\, V_d - \prod_{k=1}^d V_k|| $
$ = ||\prod_{k=1}^{d-1} U_k|| \, ||U_d - V_d|| + ||(\prod_{k=1}^{d-1} U_k)- \prod_{k=1}^{d-1} V_k||\,||V_d|| $
$ \leq \delta + (d-1) \delta = d\, \delta $,

\noindent where in the second step we used the triangle inequality, in the third step the multiplicativity of the norm, i.e. $||U\,V||=||U||\, ||V||$ for all unitaries U,V, and that $||U||=1$ for all unitary
$U$. In the last step we used the inductive hypothesis.

\end{proof}

\section{Alternative circuits for modular exponentiation} \label{app:sec:alternative:circuits}

\subsection{Width optimizing circuits}
One way for the modular exponentiation to use qubits (resp., qutrits) sparingly is to perform computation in phase space.

First consistent solution of this kind has been proposed in \cite{Beauregard}. A peculiar feature of the proposed solution is that the modular additive shift block for $|b\rangle \mapsto |(a+b) \; {\rm mod} \;
N\rangle$ has four interleaved quantum Fourier transforms (two direct and two inverse, see Figure 5 in \cite{Beauregard}), the sole purpose of which is establishing and then forgetting the sign of $a+b-N$.
It is unlikely that any of these transforms can be made redundant without significant redesign of the circuit.
As we have pointed out in Section \ref{subsec:Fourier:transform}, ternary quantum computers are comparatively inefficient in practice when emulating non-classical
modular exponentiation circuits such as Beauregard \cite{Beauregard}.

Fortunately the Beauregard circuit
%has been superseded by Takahashi
can be supplanted by Haener et. al
circuit \cite{HaenerEtAl}.
%\cite{TakaKuni} which is classical. That circuit sidesteps the need for clean ancillas by a clever use of the %``idle''
%qubits.
Instead of emulating that circuit directly,
we point out that our ternary modular exponentiation
circuit based on ripple-carry adder (see Section \ref{subsec:ripple:carry}), maintains smaller width in qutrits by much simpler means - systematic use of ternary
basis state $|2\rangle$ which, for our purposes is always ``idle''.

\subsection{Depth-optimizing circuits}

Reversible classical circuits that stand apart from relatively simple layouts we have analyzed, are hinted at in a hidden-gem paragraph in Section 5 of \cite{CleveWatrous}.

Let us revisit equation (\ref{eq:mod:exp:product}) in Section \ref{subsec:mod:exponetiation}

\begin{equation}
a^k \; {\rm mod} \; N = \prod_{j=0}^{2\,n-1} (a^{2^j} \; {\rm mod} \; N)^{k_j}  \; {\rm mod} \; N
\end{equation}

It is pointed out in \cite{CleveWatrous} that, instead of accumulating the partial modular products sequentially, one can accumulate pairwise modular products at nodes of a binary tree, whose original leaves
are classically pre-computed values
$a^{2^j} \; {\rm mod} \; N$. This prepares the entire product in depth $O(\log(n))$ (instead of $2\,n$) multiplications and size $O(n^2)$.

Furthermore, each of the pairwise multiplications can be set up as a binary tree of modular additions and so performed in depth $O(\log(n))$ and size $O(n)$.

Thus the entire modular exponentiation is done in depth $O(\log(n)^2)$ and size $O(n^3)$.

This proposal still uses modular addition as the core building block, and thus we can plug in an emulation of the the modular addition circuit built out of carry-lookahead adders and comparators as in the
Section \ref{subsec:mod:exponetiation}. This fits well into the polylogarithmic depth promise of \cite{CleveWatrous}.

It should be pretty straightforward to rewrite this design in ternary logic and use the ternary lookahead additive shift circuits as described in subsection \ref{subsec:ternary:ternary:ternary} to circuitize
it. For the lack of time and space we have to forego a more detailed analysis of this here.

%Here we are still in the domain of the purely classical reversible circuits, where the cost of emulation of the Toffoli gate is the dominating cost factor (see our thesis Thesis \ref{thesis:first}).

\begin{comment}
%\bibliography{BGKSRbiblio}

\end{comment}

\end{document}